\newif\ifdraft
\lstdefinestyle{mystyle}{
    basicstyle=\ttfamily\footnotesize,
    keywordstyle=\bfseries,
    breakatwhitespace=true,
    breaklines=true,
    keepspaces=true,
    showspaces=false,
    showstringspaces=false,
    morekeywords={data,let,in}
}
\theoremstyle{plain}
\newtheorem{theorem}{Theorem}[section]
\newtheorem*{theorem*}{Theorem}
\newtheorem{lemma}[theorem]{Lemma}
\theoremstyle{definition}
\newtheorem{definition}[theorem]{Definition}
\newtheorem{example}[theorem]{Example}
\newcommand{\fin}{\operatorname{fin}}
\newcommand{\infin}{\operatorname{infin}}
\newcommand{\true}{\operatorname{true}}
\newcommand{\false}{\operatorname{false}}
\newcommand{\depth}{\operatorname{depth}}
\newcommand{\data}[2]{\mathbf{\mathsf{data}}\; #1 = #2}
\newcommand{\con}[1]{\mathsf{#1}}
\newcommand{\sel}[2]{\mathbf{sel}_{\con{#1}}^{#2}}
\title{The Extended Theory of Trees and Algebraic (Co)datatypes}
\author{
Fabian Zaiser \qquad\qquad C.-H. Luke Ong
\institute{University of Oxford}
\email{\quad fabian.zaiser@cs.ox.ac.uk \quad\qquad luke.ong@cs.ox.ac.uk}
}
\begin{document}
\maketitle

\begin{abstract}
The first-order theory of finite and infinite trees has been studied since the eighties, especially by the logic programming community.
Following Djelloul, Dao and Frühwirth, we consider an extension of this theory with an additional predicate for finiteness of trees, which is useful for expressing properties about (not just datatypes but also) codatatypes.
Based on their work, we present a simplification procedure that determines whether any given (not necessarily closed) formula is satisfiable, returning a simplified formula which enables one to read off all possible models.
Our extension makes the algorithm usable for algebraic (co)datatypes, which was impossible in their original work due to restrictive assumptions.
We also provide a prototype implementation of our simplification procedure and evaluate it on instances from the SMT-LIB.
\end{abstract}

\section{Introduction}

Trees play a fundamental role in computer science:
syntactic terms can be regarded as finite trees, and operations like matching and unification, which are essential to functional and logic programming languages, can be viewed as solving certain first-order constraints in the structure of finite trees.
Furthermore, trees are a model for program schemes, such as higher-order recursion schemes \cite{ong15}, and more generally as computation trees.
The structures of finite and infinite trees are also central to the declarative semantics of logic (e.g.~\cite{llo87}) and functional languages (e.g.~\cite{tur87}).
Furthermore, they play a role in the verification of programs \cite{opp80} and in term rewriting systems \cite{jk86}.
The theory of finite and infinite trees was extensively studied by the logic programming community in the eighties.
An axiomatization and a decision procedure for these structures was given by Maher in \cite{maher88,maher88full} and another decision procedure independently by Comon and Lescanne in \cite{cl89}.

The structure of trees just consists of what one would normally think of as trees with labeled nodes, except that we allow them to be infinite.
Examples of finite and infinite trees are depicted in \cref{fig-trees}.
The labels for these trees are suggestively named after constructors for common algebraic datatypes because we want to specifically consider applications to the theory of (co)datatypes.
In functional programming, two common data structures are natural numbers and linked lists:
\begin{lstlisting}[]
data nat = zero | succ(pred: nat)
data list = nil | cons(head: nat, tail: list)
\end{lstlisting}
Inhabitants of these types are naturally viewed as trees: the term \lstinline|cons(succ(zero), cons(zero, nil))| is shown as a tree in \cref{fig-tree-finite}.
In some languages, such as Haskell, datatypes behave in fact more like codatatypes \cite{rb17}, i.e. they can be infinitely nested.
For example, the term \lstinline|let t = cons(zero, cons(succ(zero), t)) in t| corresponds to the infinite tree shown in \cref{fig-tree-rational}.

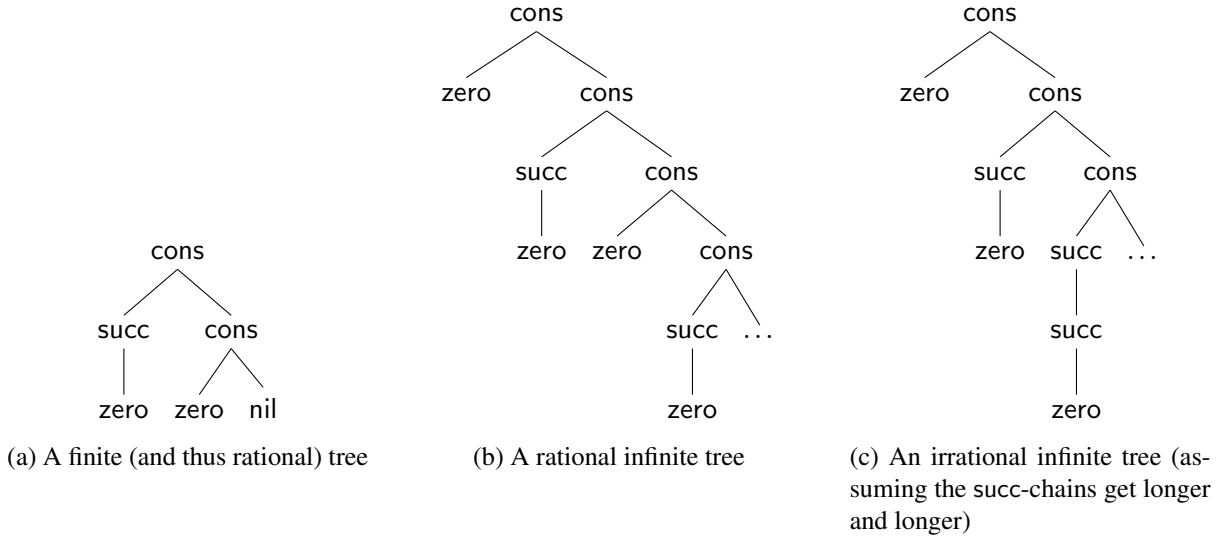
\begin{figure}
\centering
\begin{subfigure}[t]{0.3\textwidth}
\centering
\begin{tikzpicture}
\Tree [.$\con{cons}$ [.$\con{succ}$ $\con{zero}$ ] [.$\con{cons}$ $\con{zero}$ $\con{nil}$ ] ]
\end{tikzpicture}
\caption{A finite (and thus rational) tree}
\label{fig-tree-finite}
\end{subfigure}
\hfill
\begin{subfigure}[t]{0.3\textwidth}
\centering
\begin{tikzpicture}
\Tree [.$\con{cons}$ $\con{zero}$ [.$\con{cons}$ [.$\con{succ}$ $\con{zero}$ ] [.$\con{cons}$ $\con{zero}$ [.$\con{cons}$ [.$\con{succ}$ $\con{zero}$ ] $\dots$ ] ] ] ]
\end{tikzpicture}
\caption{A rational infinite tree}
\label{fig-tree-rational}
\end{subfigure}
\hfill
\begin{subfigure}[t]{0.3\textwidth}
\centering
\begin{tikzpicture}
\Tree [.$\con{cons}$ $\con{zero}$
    [.$\con{cons}$
        [.$\con{succ}$ $\con{zero}$ ]
        [.$\con{cons}$
            [.$\con{succ}$ [.$\con{succ}$ $\con{zero}$ ] ]
            $\dots$
        ]
    ]
]
\end{tikzpicture}
\caption{An irrational infinite tree (assuming the $\con{succ}$-chains get longer and longer)}
\label{fig-tree-infinite}
\end{subfigure}
\caption{Examples of trees}
\label{fig-trees}
\end{figure}

In this work, we consider the first-order theory of trees, extended with a predicate $\fin(t)$ for stating finiteness of $t$, and propose it as a tool for reasoning about algebraic datatypes and codatatypes.
Why not use the theory of (co)datatypes as implemented in many SMT solvers?
First, the theory of (co)datatypes is undecidable because selectors (\lstinline|head, tail|) can be applied to the wrong constructor (\lstinline|nil|) and the standard semantics from the SMT-LIB \cite{smt-lib} does not specify the result of such an operation (cf. \cref{sec-relationship}).
Secondly, the theory of trees not only allows us to treat datatypes and codatatypes in a uniform way (both are represented by trees, datatypes just require a $\fin$-predicate), but it is even more expressive: it can explicitly state non-finiteness ($\lnot \fin(t)$), as well as finiteness only for a proper subformula (cf. \cref{sec-relationship}).
Codatatypes are often used in mechanized proofs to represent infinite structures \cite{rb17} and we believe that this increased flexibility can be useful there as well.

This extended theory of trees (with a $\fin$-predicate) was first presented by Djelloul, Dao and Frühwirth in \cite{ddf08}, where they also give a complete axiomatization and a decision procedure.
However, one of their core assumptions is that there are infinitely many function symbols (i.e. constructors), which renders it unsuitable for algebraic (co)datatypes found in programming languages since those never have infinitely many constructors.
In this work, we lift this restriction and allow sorts with finitely many generators.
Note that we require at least two generators, however.
Sorts with one generator are not hard to support in principle but require a lot of special casing, so we do not discuss them in this work.

\vspace{-1em}
\paragraph{Contributions} Our first contribution is a formal description of the relationship between the theory of algebraic (co)datatypes and the extended theory of trees.
We also present a decision procedure for first-order formulae (including quantifiers) in the latter, based on \cite{ddf08}.
To the best of our knowledge, no decision procedure for this theory allowing finitely generated sorts was known before.
Just like the algorithm in \cite{ddf08}, it is not only a decision procedure that outputs ``satisfiable" or ``unsatisfiable" but instead, it simplifies the given input formula as much as possible, which makes it easy to read off all satisfying valuations of the free variables.

We propose this extended theory of trees as an interesting background theory for constrained Horn clauses.
Recently, Ong and Wagner \cite{ow19} proved that satisfiability of higher-order constrained Horn clauses (HoCHCs) is semi-decidable if the background theory is decidable.
Since the extended theory of trees is decidable, it is potentially suitable as a background theory for HoCHCs.
In addition, we hope that the existence of a simplification procedure instead of a mere decision procedure will also have useful applications to constrained Horn clauses.

\vspace{-1em}
\paragraph{Outline} The paper is organized as follows.
In \cref{sec-trees}, we explain the theory of trees.
The following \cref{sec-relationship} introduces (co)datatypes and explores their relationship with the theory of trees.
In \cref{sec-fin-gen-sorts}, we describe how to check for (finitely generated) sorts with only finitely many finite, respectively infinite, trees.
This step is necessary for the extension of the algorithm by Djelloul, Dao and Frühwirth \cite{ddf08} to finitely generated sorts.
The extended algorithm is presented in \cref{sec-solver}.
Throughout this paper, a sequence of mathematical objects $x_1,\dots,x_n$ is abbreviated as $\bar x$.
Unless otherwise stated, $u,v,w,x,y,z$ stand for variables, $s$ for sorts, $t$ for terms, $\phi,\psi$ for logical formulae, and $\con{f}, \con{g}$ for constructors and generators.

\section{Trees}
\label{sec-trees}

An \emph{ordered tree} is defined as a (potentially infinite) connected directed acyclic graph with a distinguished node $r$ (the \emph{root}) such that every vertex has exactly one incoming edge, except for $r$, which has none.
Additionally, for each node, its outgoing edges (and the corresponding nodes) are ordered.
Furthermore, each node is labeled with an element of a \emph{label set}, also called a \emph{function symbol} or a \emph{constructor}.
For a node $v$, its \emph{subtree} rooted at $v$ is the induced subgraph containing exactly the nodes reachable from $v$.
A tree is called \emph{finite} if it has finitely many nodes, and \emph{rational} if it has finitely many distinct subtrees.
Examples of trees can be found in \cref{fig-trees}.
These use labels suggestively named $\con{zero}, \con{succ}$ (representing natural numbers) and $\con{nil}, \con{cons}$ (representing lists).
This alludes to the connection with algebraic datatypes mentioned in the introduction and further explored in \cref{sec-relationship}.

The logical setting for the theory of trees is many-sorted first-order logic.
We have a set of \emph{sorts} $S$, a set of \emph{function symbols} $F$, a set of \emph{predicate symbols} $P$ and a countable set of \emph{variables} $V$.
Each function symbol $\con{f}$ has an \emph{arity} $\con{f}: s_1 \times \cdots \times s_n \to s$ with $s_1,\dots,s_n,s \in S$ and we say that $\con{f}$ is a \emph{generator} of $s$.
If $n = 0$, we write $\con{f}: s$ and say that $\con{f}$ is a \emph{constant}.
The set of generators of $s$ is called $F_s$ and we assume that each $F_s$ is nonempty.
We say that $s$ is \emph{finitely generated} if $F_s$ is finite, and \emph{singular} if $F_s$ is a singleton.
Similarly, each predicate symbol $p$ has an arity $p: s_1 \times \cdots \times s_n$ with $s_1,\dots,s_n \in S$, and each variable $v$ has a sort $v:s$.
For example, the trees in \cref{fig-trees} can be expressed in the first-order language with sorts $S = \{ \mathit{nat}, \mathit{list} \}$ and function symbols $\con{zero}: \mathit{nat}$, $\con{nil}: \mathit{list}$, $\con{succ}: \mathit{nat} \to \mathit{nat}$, and $\con{cons}: \mathit{list} \times \mathit{list} \to \mathit{list}$.
In the first-order language of trees, the only predicates are $P = \{ \fin_s: s \mid s \in S \}$, which state that a given tree of sort $s$ is finite.
We will drop the index $s$ if there is no ambiguity.

Given such a \emph{signature} $(S, F, P)$, a many-sorted \emph{structure} $\mathcal A$ consists of non-empty sets $s^\mathcal A$ for each $s \in S$, functions $\con{f}^\mathcal A: s_1^\mathcal A \times \cdots \times s_n^\mathcal A \to s^\mathcal A$ for each function symbol $\con{f}: s_1 \times \cdots \times s_n \to s$ and a predicate $\fin_s^\mathcal A \subseteq s^\mathcal A$ for each predicate symbol $\fin_s$.
A \emph{valuation} for $\mathcal A$ is a family of mappings $V_s \to s^\mathcal A$, indexed by $S$, where $V_s$ denotes the variables of sort $s$.
A \emph{model} of a formula $\phi$ in $\mathcal A$ is a valuation making $\phi$ true in $\mathcal A$.

The structure $\mathcal T$ of trees interprets a signature $(S, F, P)$ as follows.
Each sort $s$ is interpreted as the set $s^\mathcal T$ of \emph{trees of sort $s$}, meaning the trees where the root is labeled with a generator $\con{f}: s_1 \times \cdots \times s_n \to s$ and its children (in order) are roots of subtrees of sorts $s_1,\dots,s_n$, respectively.
Each function symbol $\con{f}: s_1 \times \cdots s_n \to s$ is interpreted as a function $\con{f}^\mathcal T: s_1^\mathcal T \times \cdots \times s_n^\mathcal T \to s^\mathcal T$ such that $\con{f}^\mathcal T(t_1,\dots,t_n)$ is the tree with a root labeled $\con{f}$ and subtrees $t_1,\dots,t_n$.
Each predicate $\fin_s$ is interpreted as the subset $\fin_s^\mathcal T \subseteq s^\mathcal T$ of finite trees of sort $s$.
Thus $\fin_s(t)$ holds in $\mathcal T$ if and only if the interpretation of $t$ is a finite tree.
For example, if $t_1$ is the term $\con{cons}(\con{succ}(\con{zero}), \con{cons}(\con{zero},\con{nil}))$, depicted in \cref{fig-tree-finite}, then $\fin(t_1)$ is true in $\mathcal T$.
On the other hand, if $t_2$ is the unique tree to make $t_2 = \con{cons}(\con{zero},\con{cons}(\con{succ}(\con{zero}),t_2))$ true (this tree is shown in \cref{fig-tree-rational}) then $\fin(t_2)$ is false because this tree is infinite.

We call the \emph{theory} of $\mathcal T$, i.e. the set of sentences that are true of $\mathcal T$, the \emph{extended theory of trees} (``extended" because of the additional predicate $\fin$).
This theory was first presented by Djelloul, Dao and Frühwirth \cite{ddf08}.
However, they require that each sort have at least one constant generator and infinitely many non-constant generators.
This assumption simplifies the treatment of the theory significantly but has the serious drawback that it makes their method unsuitable for algebraic (co)datatypes, which typically have only finitely many constructors (i.e. generators).
Therefore, we first take a look at the relation between the two theories.

\section{Relationship with (Co)Datatypes}
\label{sec-relationship}

The theory of \emph{algebraic (co)datatypes}, also called \emph{(co)inductive datatypes}, is similar to the theory of trees but there are a couple of important differences.
For one thing, the set of sorts is partitioned into $S = S_{dt} \cup S_{ct}$ where $S_{dt}$ is the set of datatypes and $S_{ct}$ is the set of codatatypes.
The function symbols are partitioned into the constructors $F_{ctr}$ and the selectors $F_{sel}$, and there are no predicate symbols.
Each (co)datatype $\delta$ is equipped with $m \geq 1$ constructors $F_{ctr}^\delta = \{ \con C_1, \dots, \con C_m \}$.
Each constructor $\con C_i \in F_{ctr}^\delta$ has an arity $\delta_1 \times \cdots \times \delta_{n_i} \to \delta$ and is associated with $n_i$ selectors $\sel{C_i}{j}: \delta \to \delta_j$.
Note that for a datatype (resp.\ codatatype) declaration, all constructor arguments must be datatypes (resp.\ codatatypes); no mixing is allowed.
Such an assumption is common, for example in \cite{rb17}.
Another requirement is that datatypes be well-founded, i.e. one must be able to exhibit a ground term for each datatype.
This excludes examples like a datatype $\mathit{infinite}$ with a single constructor $\con{next}: \mathit{infinite} \to \mathit{infinite}$.
However, the latter is allowed as a \emph{codatatype}.

\begin{example}
Consider Booleans and lists.
Their declaration in many programming languages looks roughly like this:
\begin{align*}
\data{&\mathit{bool}}{\con{True} \mid \con{False}} \\
\data{&\mathit{list}} {\con{Nil} \mid \con{Cons}(head: \mathit{bool}, tail: list)}
\end{align*}
where the selectors are called $head := \sel{Cons}{1}$ and $tail := \sel{Cons}{2}$.
The former extracts the first element of a given list, if it is nonempty, and the latter returns the rest of the list.
Hence we have $F_{ctr}^{\mathit{bool}} = \{ \con{True}, \con{False} \}$, $F_{sel}^{\mathit{bool}} = \emptyset$, $F_{ctr}^{\mathit{list}} = \{ \con{Nil}, \con{Cons} \}$, and $F_{sel}^{\mathit{list}} = \{ head, tail \}$.
\end{example}

\paragraph{Semantics}
Both datatypes and codatatypes are interpreted as \emph{constructor trees}, i.e. sorted trees labeled only with constructors (not selectors).
A structure $\mathcal D$ of (co)datatypes interprets a codatatype $\gamma$ as the set $\gamma^\mathcal D$ of constructor trees of sort $\gamma$ and a datatype $\delta$ as the set $\delta^\mathcal D$ of finite constructor trees of sort $\delta$.
Each constructor $\con C: s_1 \times \dots \times s_n \to s$ is interpreted as the function $\con C^\mathcal D: s_1^\mathcal D \times \dots \times s_n^\mathcal D \to s^\mathcal D$ constructing a new tree out of the given ones, with root $\con C$.
Each selector $\sel C i: s \to s_i$ is interpreted as a function $(\sel C i)^\mathcal D: s^\mathcal D \to s_i^\mathcal D$, which must satisfy $(\sel C i)^\mathcal D(\con C^\mathcal D(t_1, \dots, t_n)) = t_i$ but is not specified on inputs built with the wrong constructor.
This semantics is very common and what the SMT-LIB standard specifies~\cite{smt-lib}.
Note that other semantics are possible, however, such as returning a fixed default value if a selector is applied to the wrong constructor \cite{bst07}.
We call the latter the \emph{semantics with default values}.
The theory of (co)datatypes is the set of sentences that are true in any structure of (co)datatypes satisfying the above.

\begin{theorem}
\label{thm-codatatypes-undecidable}
The first-order theory of (co)datatypes is undecidable.
\end{theorem}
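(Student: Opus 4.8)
The plan is to reduce from an undecidable problem by exploiting the unspecified behaviour of selectors applied to the wrong constructor. The key observation is that an unconstrained selector like $\sel{Cons}{2}(\con{Nil})$ behaves like an uninterpreted function symbol: there is a model of the theory of (co)datatypes in which it takes any value of the appropriate sort, so first-order statements about it are essentially statements in the theory of uninterpreted functions over the datatype universe. The standard source of undecidability here is the word problem / first-order theory with uninterpreted function symbols, or equivalently a direct encoding of Diophantine problems or a two-register/Minsky machine.

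Concretely, I would proceed as follows. First, fix a simple datatype with at least two constructors and a unary constructor carrying a selector, e.g. $\data{\mathit{nat}}{\con{zero} \mid \con{succ}(\con{pred}\colon \mathit{nat})}$; here $\con{pred}(\con{zero})$ is unspecified. Second, show that $\con{pred}$ restricted to (or rather, its value at) $\con{zero}$ can be used to simulate an arbitrary unary function $g\colon \mathit{nat}^{\mathcal D} \to \mathit{nat}^{\mathcal D}$: build an auxiliary datatype $\mathit{wrap}$ with constructors $\con{w}(\mathit{get}\colon \mathit{nat})$ and $\con{dummy}$, so that $\mathit{get}(\con{dummy})$ is an arbitrary element of $\mathit{nat}^{\mathcal D}$; by taking products/compositions one realises essentially arbitrary function graphs as definable relations. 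Third, encode the halting problem for two-counter machines (or, alternatively, Hilbert's tenth problem): represent a configuration by a tuple of $\mathit{nat}$-terms, express "successor", "predecessor" and "zero-test" using the constructors $\con{zero}, \con{succ}, \con{pred}$, and use an unspecified selector to quantify over an arbitrary function that is forced (by a first-order sentence) to be the transition function of the machine; then "the machine halts" becomes expressible as a first-order sentence that is true in the theory of (co)datatypes iff the machine halts. Since there is a single intended-up-to-unspecified-selector semantics, truth in the theory is well-defined, and decidability of the theory would decide the halting problem.

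The main obstacle — and the part requiring the most care — is pinning down exactly what "true in the theory of (co)datatypes" means given that the theory is the set of sentences true in \emph{every} structure satisfying the selector axioms: a sentence is in the theory iff it holds under all choices of the unspecified selector values, so to make a reduction work one must phrase the encoding so that \emph{existence} of a halting run corresponds to a $\forall$-over-selectors statement, not an $\exists$-over-selectors one. The cleanest route is to express the negation: "the machine does \emph{not} halt" should be equivalent to a universally quantified first-order sentence that holds in all models (so it lies in the theory), while "halts" fails in some model; then undecidability of non-halting transfers. Getting the quantifier alternation and the finiteness/well-foundedness of datatype carriers (every $\mathit{nat}$-element is a genuine finite numeral) to cooperate so that the encoded "run" is forced to be finite and faithful is the delicate step; the rest is routine bookkeeping. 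An alternative that sidesteps counter machines is to reduce directly from the undecidability of the first-order theory of a single uninterpreted unary function together with a successor-like injection, which is known to be undecidable, using $\con{succ}$ for the injection and $\con{pred}$ at non-$\con{succ}$ arguments as the uninterpreted function.
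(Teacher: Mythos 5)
Your central idea---that a selector applied to the wrong constructor behaves as an uninterpreted function, so that undecidability of quantified reasoning about uninterpreted functions transfers to the theory of (co)datatypes---is exactly the paper's key insight, down to the auxiliary ``wrapper'' datatype with one constructor carrying the arguments and a second constructor carrying the selector. Where you differ is the choice of source problem: the paper reduces from Post's correspondence problem, emulating a \emph{binary} uninterpreted function via $h(\con f(u,v))$ on bit strings and writing a single sentence $(\phi_1\land\phi_2)\to\phi_3$, where $\phi_1,\phi_2$ say that the set of pairs on which $h(\con f(\cdot,\cdot))$ equals $\con{true}$ contains the initial tile pairs and is closed under prepending tiles, and $\phi_3$ asserts that a diagonal pair $(u,u)$ is so marked; this sentence is valid iff the instance is solvable. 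Your proposed sources (two-counter machines, Hilbert's tenth problem, or the first-order theory of one uninterpreted unary function plus an injection) would also work, and the last is closest to the paper's stated proof idea; PCP has the practical advantage that the closure conditions are purely equational over the bit-string constructors, so no arithmetic coding or zero-tests are needed.

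One point in your plan needs correcting: in your ``main obstacle'' paragraph you propose arranging matters so that \emph{non-halting} corresponds to validity. That polarity does not work, because concluding in \emph{every} model that no halting configuration is marked would require forcing the uninterpreted marker to denote \emph{exactly} the reachable configurations, i.e.\ a least-fixed-point or induction principle that is not first-order expressible. The workable polarity is the one you state in your third step and the one the paper uses: the antecedent only says the marker contains the initial configuration and is closed under transitions, and then ``a halting (resp.\ matching) configuration is marked'' is valid precisely when it is genuinely reachable---reachability forces it in every model by finitely many applications of the closure axiom, while validity forces it via the minimal model in which only reachable configurations are marked. Since a problem is undecidable iff its complement is, reducing ``halts'' to validity suffices. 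With that fix, your plan goes through along essentially the same lines as the paper's proof.
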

\begin{proof}[Proof idea]
The proof is based on the undecidability of formulae with quantifiers in the theory of uninterpreted functions (EUF).
Uninterpreted functions $f: s_1 \times s_2 \to s$ can be emulated using the following construct:
\[ \data{dummy}{\con c(a_1: s_1, a_2: s_2) \mid \con d(h: s)} \]
Then $h(\con c(x,y))$ acts like an uninterpreted function $f(x,y)$ because the selector $h$ is applied to the wrong constructor $\con c$.
For the full proof, refer to the appendix.
\end{proof}

In the semantics with default values, selectors present much less of a problem: they can simply be eliminated.

\begin{theorem}
\label{thm-eliminate-default-values}
In the theory of (co)datatypes with default values, a given formula can be effectively transformed into an equivalent one without selectors.
\end{theorem}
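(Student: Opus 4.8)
The plan is to eliminate selector applications one at a time, innermost first, replacing each by a fresh variable whose value is pinned down by a selector-free constraint obtained by a case analysis on the root constructor of the selector's argument.

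First I would fix, for every selector $\sel{C_i}{j}: \delta \to \delta_j$, the designated default value $d_{i,j}$ of sort $\delta_j$ that the default-value semantics returns when $\sel{C_i}{j}$ is applied to a tree not headed by $\con{C_i}$; by well-foundedness of datatypes this default may be taken to be a fixed \emph{ground constructor term}, in particular selector-free. Now suppose the formula $\phi$ still contains a selector. Pick an occurrence $\sel{C_i}{j}(t)$ in which the argument $t$ is itself selector-free — such an innermost occurrence always exists. Let $y$ be a fresh variable of sort $\delta_j$, write $\bar z = z_1,\dots,z_{n_i}$ for a tuple of fresh variables matching the arity of $\con{C_i}$, and let $\phi'$ be the result of replacing every occurrence of the term $\sel{C_i}{j}(t)$ in $\phi$ by $y$. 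I claim that $\phi$ is equivalent, in the theory with default values, to
\[
\exists y.\; \Bigl[\bigl(\exists \bar z.\; t = \con{C_i}(\bar z) \wedge y = z_j\bigr) \vee \bigl(\lnot (\exists \bar z.\; t = \con{C_i}(\bar z)) \wedge y = d_{i,j}\bigr)\Bigr] \wedge \phi'.
\]
The bracketed formula is selector-free and, by the axioms of the default-value semantics together with injectivity and distinctness of constructors, it is equivalent to the single equation $y = \sel{C_i}{j}(t)$; since this equation determines $y$ uniquely from the (selector-free) term $t$, moving it under $\exists y$ and conjoining with $\phi'$ yields a formula with the same models as $\phi$.

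Iterating this step terminates: the newly introduced conjunct contains no selectors — the term $t$, the default $d_{i,j}$ and the variables $\bar z$ are all selector-free — and all occurrences of $\sel{C_i}{j}(t)$ have been removed, so the total number of selector symbols occurring in the formula strictly decreases. When none remain we have the desired selector-free formula equivalent to $\phi$, and every step of the construction is plainly effective.

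The point that needs the most care is the status of the default values. If the semantics merely stipulates that $\sel{C_i}{j}$ returns \emph{some} value on ill-formed inputs, fixed within each structure but not otherwise designated, then one must either argue that the particular choice is immaterial to the transformation or restrict to the version of the theory in which the defaults are named — for datatypes the canonical ground term of each sort serves, by well-foundedness, as used above. The remaining care concerns corner cases: for a singular sort the second disjunct is vacuous and can be dropped, and for a selector into a codatatype with no finite ground term the default cannot be written as a term at all, so there the argument must appeal to the existence of the default element in the non-empty carrier rather than to a syntactic witness.
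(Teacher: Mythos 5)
Your key rewrite---replacing a selector application by the case split
$(\exists \bar z.\ t = \con{C}(\bar z) \land y = z_j) \lor (\lnot(\exists \bar z.\ t = \con{C}(\bar z)) \land y = d_{i,j})$---is exactly the paper's central step, and your innermost-first iteration with the decreasing selector count matches the paper's two-phase procedure (isolate selectors into atomic equations $x = \sel{C}{i}(s)$, then rewrite each such equation). So the approach is essentially the same; the difference is only that you fuse isolation and elimination into a single per-occurrence step.

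There is, however, one concrete flaw in how you introduce the fresh variable. You replace \emph{every} occurrence of $\sel{C_i}{j}(t)$ in $\phi$ by $y$ and then prepend $\exists y$ together with the defining constraint at the \emph{top} of the formula. If $t$ contains a variable bound inside $\phi$---for instance $\forall x.\ P(\sel{C_i}{j}(x))$, where $t \equiv x$---then the constraint $t = \con{C_i}(\bar z)$ escapes the scope of that binder and the single outer witness $y$ cannot vary with $x$; the resulting formula is not equivalent to the original. The paper avoids this by rewriting locally at the atom containing the selector occurrence: the equation $r[\sel{C}{i}(s)/x] = t'$ is replaced in place by $\exists x.\ r = t' \land x = \sel{C}{i}(s)$, so the fresh quantifier sits inside every binder of the variables of $s$. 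Your construction is repaired by introducing $\exists y$ at the smallest subformula (in practice, the atom) containing the occurrence rather than at the root; with that fix the argument goes through. Your closing caveats about whether the default is a designated ground term, and about codatatype defaults that may not be denotable by any finite term, are reasonable observations that the paper itself glosses over by simply writing the default as a term $T_{\con{C}}^{i}$.
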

\begin{proof}[Proof idea]
The idea is to introduce additional variables such that selectors only occur in equations of the form $x = \sel C i(t)$ where $t$ does not contain selectors and then to rewrite such an equation as
\[ (\exists \bar v \ldotp t = C(\bar v) \land x = v_i) \lor ((\lnot\exists \bar v \ldotp t = \con C(\bar v)) \land x = T_\con C^i) \]
where $T_\con C^i$ is the default value for this selector.
For the full proof, refer to the appendix.
\end{proof}

But even in the standard semantics, quantifier-free formulae in the theory of (co)datatypes are decidable \cite{rb17}.
In fact, we can also eliminate selectors from such formulae.

\begin{theorem}
\label{thm-eliminate-standard}
In the theory of (co)datatypes with standard semantics, a quantifier-free formula can be effectively transformed into an equisatisfiable one without selectors (but including quantifiers).
\end{theorem}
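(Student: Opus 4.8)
The plan is to reduce a quantifier-free formula with selectors to an equisatisfiable one without selectors, at the cost of introducing existential quantifiers. As in the proof idea for \cref{thm-eliminate-default-values}, I would first \emph{flatten} the formula so that selectors occur only in atoms of the restricted shape $x = \sel{C}{i}(y)$, where $x, y$ are variables. This is standard: repeatedly pick an innermost subterm of the form $\sel{C}{i}(t)$, introduce a fresh variable $z$ of the appropriate sort together with the conjunct $z = \sel{C}{i}(t)$ (and then recurse on $t$), and similarly name the arguments of constructors. The result is equisatisfiable over the same structure, since each fresh variable is existentially quantified and forced to equal the subterm it abbreviates. Moreover, the literals $z = \sel{C}{i}(t)$ can be pushed to the top level as conjuncts: even when they occurred under a negation or a disjunction, the naming equation $z = \sel{C}{i}(t)$ is functional, hence can be hoisted out (the value of $z$ is determined, and the original occurrence of the subterm is replaced by $z$).

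The key step is to eliminate a single atom $x = \sel{C}{i}(y)$ where $\con C: s_1 \times \cdots \times s_n \to s$ and $y : s$. Under the default-value semantics one could simply case-split on whether $y$ is a $\con C$-tree, using the default value in the "else" branch. Under the \emph{standard} semantics the "else" branch is genuinely unconstrained: $\sel{C}{i}(y)$ may be anything when $y$ is not built with $\con C$. The trick is that an unconstrained value is exactly what a fresh existential variable provides. So I would replace the atom $x = \sel{C}{i}(y)$ by
\[
 \bigl(\exists \bar v \ldotp y = \con C(\bar v) \land x = v_i\bigr) \;\lor\; \bigl(\exists \bar w \ldotp y = \con C(\bar w)\bigr) \text{ is false},
\]
more precisely by the formula $(\exists \bar v \ldotp y = \con C(\bar v) \land x = v_i) \lor \lnot(\exists \bar v \ldotp y = \con C(\bar v))$, where $\bar v = v_1, \dots, v_n$ are fresh. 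If $y$ is a $\con C$-tree, the first disjunct forces $x$ to be the correct $i$-th argument; if $y$ is not a $\con C$-tree, the second disjunct holds and $x$ is left free — which matches the fact that $\sel{C}{i}(y)$ is unspecified, so \emph{some} choice of $x$ makes the original atom true in \emph{some} admissible structure. This is why we only get equisatisfiability, not equivalence: the replacement formula is satisfied by a valuation iff there is an admissible interpretation of the selector that makes the original atom true under that valuation. The new formula still contains the existential quantifiers introduced here, which is permitted by the statement. One then repeats the elimination for every selector atom; termination is clear because each step strictly decreases the number of selector occurrences.

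The main obstacle is the careful bookkeeping around the standard semantics: one must argue that "leave $x$ unconstrained" genuinely corresponds to the class of admissible structures. Concretely, given a model of the selector-free formula one has to exhibit a structure $\mathcal D$ of (co)datatypes — i.e. choose the interpretations $(\sel{C}{i})^{\mathcal D}$ on wrong-constructor inputs — together with a valuation that satisfies the original formula; conversely, given a model of the original formula in some $\mathcal D$, the same valuation (with the fresh variables interpreted as the relevant selector applications or constructor arguments) satisfies the new formula. A subtlety is that the \emph{same} variable $y$ may appear in several selector atoms $x_1 = \sel{C}{i}(y)$, $x_2 = \sel{C}{j}(y)$, or even with different constructors $\sel{C}{i}(y)$, $\sel{D}{k}(y)$; the "else" disjuncts must be mutually consistent, which they are, since on a fixed wrong-constructor input a structure may assign the selectors independently, so independent fresh variables per atom is sound. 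I would also need to note the base case: if a sort has no constructor other than $\con C$ the atom $\lnot(\exists \bar v \ldotp y = \con C(\bar v))$ is unsatisfiable, which is fine — the first disjunct then always applies. Finally, for atoms $\sel{C}{i}(y)$ where the sort of $y$ admits a constructor of arity zero or where $\bar v$ is empty, the displayed formula degenerates harmlessly.
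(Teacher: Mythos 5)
There is a genuine gap in the elimination step. Your replacement of an atom $x = \sel{C}{i}(y)$ by
\[
(\exists \bar v \ldotp y = \con C(\bar v) \land x = v_i) \lor \lnot(\exists \bar v \ldotp y = \con C(\bar v))
\]
leaves $x$ \emph{completely} unconstrained whenever $y$ is not a $\con C$-tree, and you apply this independently to each atom. But a selector is interpreted as a single \emph{function} in every admissible structure, so even on wrong-constructor inputs it must return the same value when applied to equal arguments. Concretely, the quantifier-free formula $x_1 = \sel{C}{i}(y) \land x_2 = \sel{C}{i}(y) \land \lnot(x_1 = x_2)$ is unsatisfiable in the standard semantics, yet your translation of it is satisfiable: choose $y$ built with a constructor other than $\con C$, so both second disjuncts hold and $x_1, x_2$ may be chosen distinct. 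You do notice this subtlety but then dismiss it with the claim that ``a structure may assign the selectors independently''; that is true for \emph{distinct} selectors, but not for the same selector applied to the same (or provably equal) arguments. So the translation as stated is not equisatisfiable with the original formula.

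The paper's proof closes exactly this hole with an Ackermann-style functional-consistency constraint: after naming each selector occurrence $v_j = \sel{C_j}{i_j}(s_j)$, it conjoins, besides the ``correct extraction'' axiom $\forall \bar z \ldotp s_j = \con{C_j}(\bar z) \to z_{i_j} = v_j$, also $\bigwedge_{j'} (s_j = s_{j'} \to v_j = v_{j'})$ for every pair of occurrences of the same selector. The rest of your argument (flattening, the model constructions in both directions, the observation that the ``else'' branch corresponds to freedom in choosing the selector's interpretation) is sound and matches the paper's strategy, but without the consistency conjuncts the backward direction of equisatisfiability fails.
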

\begin{proof}[Proof idea]
The first step is to introduce additional variables such that selectors only occur in equations of the form $x = \sel C i(t)$ where $t$ does not contain selectors.
For each such equation, we add the conjunct $\forall \bar z. t = \con C(\bar z) \to z_i = x$, which ensures that the selector correctly extracts the argument when applied to the right constructor.
Furthermore, for each pair of such equations $x = \sel C i(t)$, $x' = \sel C i(t')$, we add the conjunct $t = t' \to x = x'$, which ensures that selectors behave like functions, i.e. return the same result when applied to the same arguments.
For the full proof, refer to the appendix.
\end{proof}

Finally, we show that the theory of trees is enough for selector-free formulae.

\begin{theorem}
\label{thm-codatatypes-to-trees}
A selector-free formula in the theory of (co)datatypes can be effectively transformed into an equisatisfiable formula in the extended theory of trees.
\end{theorem}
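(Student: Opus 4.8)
The plan is to translate the formula almost verbatim: keep the same set of sorts $S$, reinterpret every constructor $\con C\colon s_1 \times \cdots \times s_n \to s$ as a generator of the tree signature over $S$, take the $\fin_s$ as the only predicates, and encode the datatype/codatatype split \emph{inside the formula} by relativising datatype quantifiers to finite trees. The first observation is that, since the input is selector-free and any two structures of (co)datatypes agree on sorts and constructors and differ only in their (underspecified) selectors, satisfiability of the input in the theory of (co)datatypes reduces to satisfiability in the canonical structure $\mathcal D$, where each codatatype $\gamma$ is the set of constructor trees of sort $\gamma$ and each datatype $\delta$ the set of \emph{finite} constructor trees of sort $\delta$; dually, satisfiability in the extended theory of trees reduces to satisfiability in $\mathcal T$. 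For the translated formula to be well-formed we assume, as in \cref{sec-trees}, that every (co)datatype has at least two constructors; a singular (co)datatype could be equipped with a fresh constant generator and its variables relativised by $\exists \bar v \ldotp x = \con C(\bar v)$, but the general singular case needs the same extra bookkeeping we deliberately avoid for tree sorts.

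Concretely, I would define the translation $\phi \mapsto \phi'$ by recursion on $\phi$: equations (the only atoms, as the theory of (co)datatypes has no predicates) are kept verbatim, the logical connectives commute with the translation, a quantifier over a codatatype sort is left untouched, and for a datatype sort $\delta$ we set $(\exists x\colon\delta\ldotp \psi)' \coloneqq \exists x\colon\delta\ldotp \fin(x) \land \psi'$ and $(\forall x\colon\delta\ldotp \psi)' \coloneqq \forall x\colon\delta\ldotp \fin(x) \to \psi'$; finally we conjoin $\fin(x)$ for every free variable $x$ of datatype sort. This is plainly effective and leaves the free variables of $\phi$ unchanged.

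Correctness then rests on two structural facts. First, by the no-mixing assumption recalled in \cref{sec-relationship} (a constructor argument of a datatype is again a datatype, and likewise for codatatypes), a tree of sort $\delta$ in $\mathcal T$ is exactly a constructor tree of sort $\delta$, so $\fin_\delta^{\mathcal T}$ is precisely the set $\delta^{\mathcal D}$ of finite such trees, while $\gamma^{\mathcal T} = \gamma^{\mathcal D}$ for a codatatype $\gamma$. Second, $\con C^{\mathcal T}$ restricts to $\con C^{\mathcal D}$ on these subsets; the only point needing checking is that applying a constructor to finite trees yields a finite tree, which again uses no-mixing on the datatype side. Granting this, equisatisfiability is the standard relativisation argument: since $s^{\mathcal D}\subseteq s^{\mathcal T}$ for every sort, a model $(\mathcal D,\nu) \models \phi$ is also a model $(\mathcal T,\nu)\models\phi'$, the added guards $\fin(\cdot)$ holding because every datatype-sorted element of $\mathcal D$ is a finite tree; conversely, in a model $(\mathcal T,\nu')\models\phi'$ the free and effectively bound datatype-sorted variables range over finite trees only, so restricting $\nu'$ to those values gives a valuation into $\mathcal D$ satisfying $\phi$, by the same induction on $\phi$ using that constructors are interpreted compatibly.

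The part that genuinely needs care — and the reason the theorem is not entirely trivial — is the coupling between finiteness and the sort hierarchy. It is exactly no-mixing that makes relativising only datatype-sorted variables sound: if a datatype could carry a codatatype component, finiteness of a datatype-sorted tree would no longer be equivalent to a single $\fin$ predicate at its root but would have to be tracked along datatype-sorted positions only, which a first-order guard cannot express, so the naive relativisation would be unsound. The remaining wrinkle is the mismatch between the $m \geq 1$ constructors allowed for (co)datatypes and the two generators required of tree sorts, which is handled by the assumption above; everything else is routine.
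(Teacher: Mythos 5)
Your proposal is correct and follows essentially the same route as the paper: relativise datatype-sorted quantifiers (and free variables) with $\fin$, then pass between models by forgetting or arbitrarily choosing selector interpretations. Your explicit remarks on why no-mixing and the two-generator assumption matter are careful additions the paper leaves implicit, but they do not change the argument.
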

\begin{proof}[Proof idea]
Since in both theories, terms are interpreted as trees, we just have to ensure that datatypes are interpreted as finite trees.
Hence, for a datatype $d$, existential quantification $\exists x:d \ldotp \phi$ is replaced by $\exists x:d \ldotp \fin(x) \land \phi$ and universal quantification $\forall x:d. \phi$ is replaced by $\forall x:d. \fin(x) \to \phi$.
Finally, to ensure equisatisfiability, free variables $x:d$ require adding the conjunct $\fin(x)$ to the whole formula.
For the full proof, refer to the appendix.
\end{proof}

The last result raises the question of how the expressiveness of the extended theory of trees compares to selector-free (co)datatypes.
The former is, in fact, more expressive because it allows specifying non-finiteness of individuals, such as $\lnot \fin(x)$.
This is impossible in the theory of (co)datatypes since datatypes have only finite values and codatatypes can have finite and infinite values.
Additionally, it facilitates specifying finiteness only in parts of the formula, such as in $(\fin(t) \to \phi) \lor(\lnot\fin(t) \to \psi)$, where $t$ is finite in $\phi$ but infinite in $\psi$.
This shows that the extended theory of trees is more powerful than the (selector-free) theory of (co)datatypes.

\section{Analyzing Finitely Generated Sorts}
\label{sec-fin-gen-sorts}

Having shown how formulae involving (co)datatypes can often be reduced to formulae involving trees, we want to find a decision procedure for the latter based on the work by Djelloul, Dao and Frühwirth \cite{ddf08}.
Their algorithm, however, makes the assumption that each sort contains infinitely many non-constant generators and one constant generator.
As a consequence, each sort contains infinitely many finite and infinitely many infinite trees.
This simplifies solving logical formulae: the predicate $\fin(x)$ can always be made true or false for an appropriate valuation of $x$.
However, their assumption is obviously not satisfied for sorts arising from (co)datatypes.

Therefore, we consider the setting with finitely generated sorts, where the situation is more complicated.
For instance, if $x$ is of, say, a Boolean sort with only constant generators then the predicate $\fin(x)$ is always true.
Due to these complications, we need to analyze the set of sorts and check for sorts with only finitely many finite or infinite trees.

In the following, we allow sorts with finitely many generators but assume that any sort has at least two generators.
As mentioned before, this restriction is not hard to lift in principle but saves us a lot of technical details and space in this paper.
Note that sorts with a single non-recursive generator can just be unfolded in the place that they are used.

For a sort $s$, denote by $s_{\fin}$, respectively $s_{\infin}$, the set of finite, respectively infinite, trees of sort $s$.
Denote by $S_{0F}$, $S_{FF}$, $S_{0I}$, $S_{1I}$, $S_{FI} \subseteq S$ the sets of sorts with no finite trees, finitely many finite trees, no infinite trees, exactly one infinite tree, and finitely many infinite trees, respectively.
In the following, we present algorithms for computing these sets.

\begin{algorithm}
\caption{Algorithm computing the sets of sorts containing no finite ($S_{0F} \subseteq S$) or no infinite trees ($S_{0I} \subseteq S$), when given a signature $(S,F,P)$ as input.}
\label{alg-zero-sorts}
\begin{algorithmic}
\State $S_{0I} \gets \emptyset$
\State $S_{0F} \gets S$
\Repeat
\State $S_{0I} \gets S_{0I} \cup \{ s \in S \mid \forall (\con g: s_1 \times \cdots \times s_n \to s) \in F_s: \forall i \in \{ 1,\dots, n \}: s_i \in S_{0I} \} $
\State $S_{0F} \gets S_{0F} \setminus \{ s \in S \mid \exists (\con g: s_1 \times \cdots \times s_n \to s) \in F_s: \forall i \in \{ 1,\dots, n \}: s_i \notin S_{0F} \}$
\Until{no changes in the last iteration}
\end{algorithmic}
\end{algorithm}

\begin{theorem}
\label{thm-zero-sorts}
Given a signature $(S,F,P)$, \cref{alg-zero-sorts} correctly computes the sets $S_{0F}$ and $S_{0I}$.
\end{theorem}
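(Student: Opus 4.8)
The plan is to exploit the fact that the two assignments in the loop body are coupled only through the loop: the new value of $S_{0I}$ is a function of the old $S_{0I}$ alone, and likewise the new value of $S_{0F}$ depends only on the old $S_{0F}$. So I would analyse the two iterations independently. Each is the iteration of a monotone operator on subsets of the finite set $S$: for $S_{0I}$ the operator $\Phi(X) = \{ s \in S \mid \forall (\con g : s_1 \times \cdots \times s_n \to s) \in F_s,\ \forall i,\ s_i \in X\}$ applied from $\emptyset$, and for $S_{0F}$, writing $X := S \setminus S_{0F}$, the operator $\Psi(X) = \{ s \in S \mid \exists (\con g : s_1 \times \cdots \times s_n \to s) \in F_s,\ \forall i,\ s_i \in X\}$ applied from $\emptyset$ — one checks that removing the set in the fifth line of the algorithm from $S_{0F}$ corresponds exactly to adjoining $\Psi(X)$ to $X$. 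Both iterations are increasing and take place in the finite lattice $\mathcal P(S)$, so they stabilise after at most $|S|$ rounds; moreover each iterate is contained in the corresponding least fixed point, so stabilisation of an iteration coincides with it having reached that least fixed point, and the ``no changes'' test therefore fires precisely when $S_{0I} = \operatorname{lfp}(\Phi)$ and $S \setminus S_{0F} = \operatorname{lfp}(\Psi)$. Thus the theorem reduces to the two identities $\operatorname{lfp}(\Phi) = \{s \mid s_{\infin} = \emptyset\}$ and $\operatorname{lfp}(\Psi) = \{s \mid s_{\fin} \neq \emptyset\}$.

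For $\Psi$, I would first record the structural fact that $s$ has a finite tree iff it has some generator $\con g : s_1 \times \cdots \times s_n \to s$ all of whose argument sorts have finite trees ($\Rightarrow$: take the root generator and its finite subtrees; $\Leftarrow$: assemble $\con g$ on finite witnesses). This says exactly that $\{s \mid s_{\fin} \neq \emptyset\}$ is a fixed point of $\Psi$, whence $\operatorname{lfp}(\Psi) \subseteq \{s \mid s_{\fin} \neq \emptyset\}$. The converse inclusion is a straightforward induction on the size of a finite tree $t$ witnessing $s_{\fin} \neq \emptyset$: if the root carries $\con g : s_1 \times \cdots \times s_n \to s$ with subtrees $t_1, \dots, t_n$, each $s_i$ lies in some $\Psi^{k_i}(\emptyset)$ by the induction hypothesis, so $s \in \Psi^{1 + \max_i k_i}(\emptyset) \subseteq \operatorname{lfp}(\Psi)$ (the $n = 0$ case is the base).

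For $\Phi$, the corresponding structural fact is that $s$ has an infinite tree iff it has some generator $\con g : s_1 \times \cdots \times s_n \to s$ with at least one argument sort $s_i$ having an infinite tree: $\Rightarrow$ holds because an infinite tree has only finitely many immediate subtrees (arities are finite), so one of them is infinite and it has one of the argument sorts of the root generator; $\Leftarrow$ holds by planting an infinite tree of sort $s_i$ at coordinate $i$ and filling the remaining coordinates with arbitrary trees, which is possible because every sort is inhabited — build a tree coinductively, choosing any generator (using $F_s \neq \emptyset$) and recursing into its argument sorts. Taking complements, this says that $Z := \{s \mid s_{\infin} = \emptyset\}$ is a fixed point of $\Phi$, so $\operatorname{lfp}(\Phi) \subseteq Z$. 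The reverse inclusion $Z \subseteq \operatorname{lfp}(\Phi)$ is the crux. Given $s \notin \operatorname{lfp}(\Phi) = \bigcup_k \Phi^k(\emptyset)$, I produce an infinite tree of sort $s$. For each $k$, the set $W_k$ of pairs $(\con g, i)$ with $\con g \in F_s$ and the $i$-th argument sort of $\con g$ outside $\Phi^k(\emptyset)$ is nonempty (otherwise $s$ would enter $\Phi^{k+1}(\emptyset)$), and the $W_k$ form a decreasing chain of finite sets — here finiteness of $F_s$ is essential — so $\bigcap_k W_k \neq \emptyset$; picking $(\con g, i)$ in it yields a generator $\con g$ of $s$ whose $i$-th argument sort is again outside $\operatorname{lfp}(\Phi)$. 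Iterating, I obtain an infinite sequence $s = s^{(0)}, s^{(1)}, \dots$ of sorts outside $\operatorname{lfp}(\Phi)$, each the chosen argument of a chosen generator of its predecessor; the tree that follows this spine, with side branches filled by arbitrary inhabitants of the appropriate sorts, is an infinite tree of sort $s$, so $s \notin Z$.

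I expect the inclusion $Z \subseteq \operatorname{lfp}(\Phi)$ — specifically the compactness/pigeonhole extraction of an infinite descending spine of sorts and the coinductive construction of an infinite tree along it — to be the main obstacle, since it is the one place where one cannot simply induct. The dual argument for $\Psi$ is an ordinary structural induction, and the remaining ingredients (monotonicity of the two operators, the complement reformulation for $S_{0F}$, and inhabitedness of every sort) are routine under the standing assumptions.
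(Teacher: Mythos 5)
Your proposal is correct and shares the paper's overall skeleton: both arguments characterise the target sets as least/greatest fixed points of monotone operators on $\mathcal P(S)$, observe that the true sets are fixed points, and then establish the remaining inclusion by an argument about witnessing trees. The differences are in that remaining inclusion for $S_{0I}$. The paper proves directly, by induction on $d$, that any sort all of whose trees are finite of depth $\le d$ lies in $f^{d+1}(\emptyset)$, and then concludes $S_{0I}=\bigcup_d f^{d+1}(\emptyset)$ --- a step that tacitly uses a uniform depth bound on the trees of a sort in $S_{0I}$. You instead argue the contrapositive: from $s\notin\operatorname{lfp}(\Phi)$ you extract an infinite descending spine of sorts and build an infinite tree along it, which makes the K\H{o}nig-lemma-flavoured content explicit and is arguably the more watertight of the two. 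One remark on that step: your claim that finiteness of $F_s$ is ``essential'' for the extraction is not quite right, and the decreasing-chain-of-finite-sets $W_k$ is an unnecessary detour. Since you have already shown the iteration stabilises in the finite lattice $\mathcal P(S)$, the set $L=\operatorname{lfp}(\Phi)$ is itself a fixed point reached at a finite stage, and $s\notin L=\Phi(L)$ \emph{directly} yields a generator $\con{g}\in F_s$ and an index $i$ with $s_i\notin L$; this works even for sorts with infinitely many generators (which the paper's general setting permits) and shortens your spine construction. Your dualisation of the $S_{0F}$ computation to a least fixed point of $\Psi$ on complements is equivalent to the paper's greatest-fixed-point formulation, and the structural facts and the induction on the size of a finite witness match the paper's argument.
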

\begin{proof}[Proof idea]
A sort contains no infinite trees if every generator only takes arguments of sorts containing no infinite trees.
A sort contains no finite trees unless some generator takes only arguments of sorts with finite trees.
The sets $S_{0F}$ and $S_{0I}$ can thus be computed as fixed points, the former a least fixed point, the latter a greatest fixed point.
For details, refer to the full proof in the appendix.
\end{proof}

\begin{figure}
\begin{align*}
\con{false}&: \mathit{bool} & \con{zero}&: \mathit{nat} & \con{nil}&: \mathit{list} \\
\con{true}&: \mathit{bool} & \con{succ}&: \mathit{nat} \to \mathit{nat} & \con{cons}&: \mathit{nat} \times \mathit{list} \to \mathit{list} \\[10pt]
\con{tree1}&: \mathit{inftree} \to \mathit{inftree} &\con{c1}&: \mathit{bool} \to d &\con{g1}&: \mathit{bool} \times \mathit{bool} \to t \\
\con{tree2}&: \mathit{inftree} \times \mathit{inftree} \to \mathit{inftree} &\con{c2}&: \mathit{nat} \times \mathit{inftree} \to d &\con{g2}&: \mathit{bool} \times \mathit{nat} \to t
\end{align*}
\caption{\label{fig-signature}Generators for the sorts $S = \{ \mathit{bool}, \mathit{nat}, \mathit{list}, \mathit{inftree}, d, t \}$.}
\end{figure}

\begin{example}
Consider the sorts and generators in \cref{fig-signature}.
How would \cref{alg-zero-sorts} act on this input?
In the first iteration, it would add $\mathit{bool}$ to $S_{0I}$ because each generator is a constant.
At the same time, $\mathit{bool}$, $\mathit{list}$, and $\mathit{nat}$ are removed from $S_{0F}$ because each one has a constant generator.
In the next iteration, $S_{0I}$ stays unchanged but $d$ is removed from $S_{0F}$ because it has the generator $\con{c1}$ whose parameter sort $\mathit{bool}$ is not in $S_{0F}$ anymore.
For a similar reason, $t$ is removed from $S_{0F}$.
After this point, no more changes happen and we obtain $S_{0I} = \{ \mathit{bool} \}$ and $S_{0F} = \{ \mathit{inftree} \}$.
\end{example}

Next, we consider the sets $S_{FF}, S_{1I}, S_{FI}$.
Note that the sort $\mathit{nat}$ from the above example has exactly one infinite tree, namely $\con{succ}(\con{succ}(\dots))$.
For such sorts $s \in S_{1I}$, we introduce variables $u_s$ for their unique infinite tree.
For instance, $u_{\mathit{nat}} = \con{succ}(u_{\mathit{nat}})$ describes the unique infinite tree of $\mathit{nat}$.
The sort $t$ has two infinite trees $\con{g2}(\con{false}, u_{\mathit{nat}})$ and $\con{g2}(\con{true}, u_{\mathit{nat}})$.
Hence to describe all infinite trees of sorts $s \in S_{FI}$, we need the variables $u_s$ for $s \in S_{1I}$ and their equations, like $u_{\mathit{nat}} = \con{succ}(u_{\mathit{nat}})$.
\Cref{alg-fin-ind} computes all this.

\begin{algorithm}[t]
\caption{Algorithm computing the sets of sorts containing only finitely many finite ($S_{FF}$), respectively infinite ($S_{FI}$), trees; and their finite ($s_{\fin}$), respectively infinite ($s_{\infin}$), inhabitants.}
\label{alg-fin-ind}
\begin{algorithmic}
\State Compute $S_{0I}$ and $S_{0F}$ as in \cref{alg-zero-sorts}
\State $S_{FF} \gets S_{0F}$
\State $s_{\fin} \gets \emptyset$ for each $s \in S$
\State $S_{1I} \gets S \setminus S_{0I}$
\State $U_s \gets \emptyset$ for each $s \in S$
\Repeat
\For{$s \in S$}
\State $F_s^{\infin} \gets \{ (\con g: s_1 \times \cdots \times s_n \to s) \in F_s \mid \exists i \in \{ 1, \dots, n \}: s_i \in S_{0F} \}$
\If{$|F_s \setminus F_s^{\infin}| < \infty$ and $\forall (\con g: s_1 \times \cdots \times s_n \to s) \in F_s \setminus F_s^{\infin}: \forall i \in \{ 1, \dots, n \}: s_i \in S_{FF}$}
\State $S_{FF} \gets S_{FF} \cup \{ s \}$
\State $s_{\fin} \gets \{ \con g(r_1,\dots,r_n) \mid (\con g: s_1 \times \cdots \times s_n \to s) \in F_s \setminus F_s^{\infin}, r_i \in (s_i)_{\fin} \}$
\EndIf
\If{$\exists (\con g: s_1 \to s) \in F_s: s_1 \in S_{1I} \land \big(\forall (\con{g'}: s_1' \times \cdots \times s_n' \to s) \in F_s \setminus \{ \con g \}: \forall i: s_i' \in S_{0I}\big)$}
\State $U_s \gets \{ u_s = \con g(u_{s_1}) \} \cup U_{s_1}$
\Else
\State $S_{1I} \gets S_{1I} \setminus \{ s \}$
\EndIf
\EndFor
\Until{no changes in the last iteration}

\State $S_{FI} \gets S_{0I} \cup S_{1I}$
\State $s_{\infin} \gets \emptyset$ for each $s \in S_{0I}$
\State $s_{\infin} \gets \{ u_s \}$ for each $s \in S_{1I}$
\Repeat
\For{$s \in S$}
\State $F_s^{\infin} \gets \{ (\con g: s_1 \times \cdots \times s_n \to s) \in F_s \mid \exists i \in \{ 1,\dots,n \}: s_i \notin S_{0I} \}$
\If{$|F_s^{\infin}| < \infty$ and $\forall (\con g: s_1 \times \cdots \times s_n \to s) \in F_s^{\infin}: \forall i \in \{ 1, \dots, n \}:$\\
\hspace{\algorithmicindent}\hspace{\algorithmicindent}\hspace{\algorithmicindent}\hspace{\algorithmicindent}$s_i \in S_{0I} \lor \big(s_i \in S_{FI} \land (\forall j \in \{ 1, \dots, n \} \setminus \{ i \}: s_j \in S_{FF} \cap S_{FI})\big)$}
\State $S_{FI} \gets S_{FI} \cup \{ s \}$
\State $s_{\infin} \gets \{ \con g(r_1, \dots, r_n) \mid (\con g: s_1 \times \cdots \times s_n \to s) \in F_s^{\infin}; r_j \in (s_j)_{\fin} \cup (s_j)_{\infin} \text{ for } j = 1, \dots, n$\\
\hspace{\algorithmicindent}\hspace{\algorithmicindent}\hspace{\algorithmicindent}\hspace{\algorithmicindent}\hspace{\algorithmicindent}\hspace{\algorithmicindent}\hspace{\algorithmicindent}\hspace{\algorithmicindent}\hspace{\algorithmicindent} such that $ \exists i \in \{1, \dots, n \}: r_i \notin (s_i)_{\fin} \}$
\EndIf
\EndFor
\Until{no changes in the last iteration}
\end{algorithmic}
\end{algorithm}

\begin{theorem}
\label{thm-fin-ind}
Given a signature $(S,F,P)$, \cref{alg-fin-ind} correctly computes the sets $S_{FF}$, $S_{1I}$, and $S_{FI}$.
Furthermore it computes the set $s_{\fin}$ (the terms for the finite trees of sort $s$ for $s \in S_{FF}$), and the set $s_{\infin}$ (the terms for the infinite trees of sort $s$ for $s \in S_{FI}$).
The latter makes use of the variables $u_s$ (for $s \in S_{1I}$), standing for the unique infinite tree of $s$.
The equations that uniquely determine these $u_s$ are output in $U_s$.
\end{theorem}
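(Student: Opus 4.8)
The plan is to check each of the three stages of \cref{alg-fin-ind} separately, taking the correctness of $S_{0F}$ and $S_{0I}$ from \cref{thm-zero-sorts}. The first loop grows $S_{FF}$ and shrinks $S_{1I}$ — two computations that do not interact and can be analysed independently — while building up $s_\fin$ and the equation systems $U_s$; the second loop grows $S_{FI}$ and $s_\infin$. For each of $S_{FF}$, $S_{1I}$, $S_{FI}$ I will first isolate a local characterisation of membership, then establish \emph{soundness} (every sort inserted, and every term recorded in $s_\fin$, $s_\infin$, $U_s$, is correct) by induction on the number of loop iterations, and \emph{completeness} (every sort that genuinely belongs is eventually inserted) by a well-founded induction on a structural measure. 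Two elementary facts are used throughout: every sort is inhabited and, since every sort has at least two generators, has at least two distinct trees; and a guarded system of tree equations $u_s = \con g(u_{\sigma(s)})$ has a unique solution in $\mathcal T$.

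For $S_{FF}$ and $s_\fin$: a finite tree of $s$ cannot have a root generator one of whose argument sorts has no finite tree, so the finite trees of $s$ are exactly the trees $\con g(\bar r)$ with $\con g \in F_s \setminus F_s^\infin$ and each $r_i$ a finite tree of $s_i$; hence $s$ has finitely many finite trees iff $F_s \setminus F_s^\infin$ is finite and every argument sort of every generator in $F_s \setminus F_s^\infin$ lies in $S_{FF}$. The loop computes the corresponding least fixed point above $S_{0F}$ (the base case): soundness is immediate by induction on iterations, since $s$ is only inserted when $F_s \setminus F_s^\infin$ is finite and all the relevant argument sorts are already in $S_{FF}$, and $s_\fin$ is then assembled from the already-correct sets $(s_i)_\fin$; completeness follows by induction on $\max\{\depth(t) \mid t \in s_\fin\}$, which strictly decreases from $s$ to any argument sort of a generator in $F_s \setminus F_s^\infin$.

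For $S_{1I}$ and $U_s$: I claim $s$ has exactly one infinite tree iff $s$ has a unary generator $\con g : s_1 \to s$ with $s_1 \in S_{1I}$ and every other generator of $s$ has all its argument sorts in $S_{0I}$. For the forward direction, the two-trees fact forces the root of the unique infinite tree to be a unary generator (a non-unary generator with an infinite-tree argument sort produces at least two infinite trees by varying another argument), no other generator may admit an infinite tree, and the single infinite tree is then the unique solution of $u_s = \con g(u_{s_1})$ — exactly the equation recorded in $U_s$. The loop computes $S_{1I}$ as the greatest fixed point, among subsets of $S \setminus S_{0I}$, of the operator that keeps a sort when this condition holds; soundness (the computed set is contained in the true set) is the delicate direction: if $Y$ is such a fixed point and $s \in Y$, then following the chain $s, \sigma(s), \sigma^2(s), \dots$ of unary successors, which stays in the finite set $Y$ and hence cycles, one sees that the infinite trees of the sort where the cycle is entered form the image of a bijection that prepends the cycle's label sequence, which forces each of them to equal the single periodic tree. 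Completeness is the easier ``the true set is a fixed point'' direction.

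For $S_{FI}$ and $s_\infin$: a generator can produce an infinite tree exactly when some argument sort is outside $S_{0I}$, so $F_s^\infin$ collects these; $s$ then has finitely many infinite trees iff $F_s^\infin$ is finite and, for each $\con g \in F_s^\infin$, the side condition holds — which I will show precisely characterises finiteness of $\{\con g(\bar r) \mid \con g \in F_s^\infin,\ \text{some } r_i \text{ infinite}\}$, namely every infinite-capable argument sort of $\con g$ lies in $S_{FI}$ and, when such a position coexists with another, all argument sorts of $\con g$ lie in $S_{FF} \cap S_{FI}$. Soundness is again a routine induction on iterations, with $s_\infin$ read off from the already-correct component sets. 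For completeness I would induct on the measure $\mu(s)$ equal to the length of the longest chain in which each next sort is an argument sort, outside $S_{0I} \cup S_{1I}$, of a generator of the previous one that can produce an infinite tree (and $\mu = 0$ on $S_{0I} \cup S_{1I}$). The main obstacle is proving this relation acyclic on sorts with at least two but finitely many infinite trees: a cycle would force every generator along it to be unary (otherwise one obtains infinitely many infinite trees by varying a side argument at ever greater depths), whereupon traversing the cycle repeatedly shows that every infinite tree of every sort on it is the single periodic tree — contradicting that these sorts have at least two infinite trees. Given acyclicity the induction closes, and the correctness of $s_\infin$ follows by the same induction once all sets have stabilised.
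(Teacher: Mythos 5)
Your decomposition (local fixed-point characterisations for $S_{FF}$, $S_{1I}$, $S_{FI}$; soundness by induction on loop iterations; completeness by a well-founded measure) matches the paper's overall structure, and your treatment of $S_{FF}$ is essentially the paper's \cref{lem-SFF}. You genuinely diverge in the two delicate places. For $S_{1I}$ the paper shows that a sort with two infinite trees differing at depth $d$ is removed within $d+1$ iterations of the greatest-fixed-point computation; you instead show directly that \emph{any} fixed point is contained in the true $S_{1I}$ by following the chain of unary successors until it cycles and using the prepending bijection to force the single periodic tree. For $S_{FI}$ the paper's key tool is \cref{lem-shape-infinite-trees} --- every infinite tree of a sort with finitely many infinite trees is a term over the $u_s$, proved by induction on the \emph{number} of infinite trees, with the unary-chain situation handled by a three-case analysis --- and completeness is then an induction on the depth of these describing terms. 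You replace this with an induction on chain length in the sort-dependency graph, reducing everything to an acyclicity claim. Both routes work: yours avoids the shape lemma entirely (it falls out a posteriori from the computed $s_\infin$), at the cost of a global analysis of cycles in the sort graph; the paper's induction on $\#s_{\infin}$ is more local but fiddlier.

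One step of your sketch is under-justified. In the acyclicity argument you claim that once all generators along a cycle are unary, ``traversing the cycle repeatedly shows that every infinite tree of every sort on it is the single periodic tree.'' This does not follow immediately: a sort on the cycle could a priori have \emph{other} generators that also produce infinite trees, so not every infinite tree need start with the cycle's generator. You need a counting step: each unary edge $\con g_i : s_{i+1} \to s_i$ injects $(s_{i+1})_{\infin}$ into $(s_i)_{\infin}$, so around the cycle the (finite, nonzero) cardinalities are all equal and every such injection is a bijection; only then does every infinite tree of $s_i$ have root $\con g_i$ and the periodicity argument apply. (The analogous step in your $S_{1I}$ soundness argument is fine, because there the fixed-point condition itself forces every infinite tree to start with the designated unary generator.) Separately, your gloss of the side condition for $S_{FI}$ (``when such a position coexists with another, all argument sorts of $\con g$ lie in $S_{FF} \cap S_{FI}$'') is not the condition in \cref{alg-fin-ind}: for $\con{g2} : \mathit{bool} \times \mathit{nat} \to t$ the infinite-capable position $\mathit{nat}$ need not be in $S_{FF}$, only the \emph{other} positions must be in $S_{FF} \cap S_{FI}$. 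Prove the condition as written in the algorithm rather than this paraphrase.
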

\begin{proof}[Proof idea]
Similarly to the previous algorithm, these sets are computed as fixed points.
A sort $s$ has only finitely many finite trees if there is a finite number of generators that only take sorts with finitely many finite trees as arguments ($F_s \setminus F_s^{\infin}$), and the remaining generators ($F_s^{\infin}$) take at least one argument of a sort that contains no finite trees (because such a generator cannot create finite trees).
Along the way, the algorithm builds up the set $s_{\fin}$ from the generators of the former category.

Constructing the set $S_{FI}$ works similarly, except for the fact that we start the fixed point iteration with $S_{0I} \cup S_{1I}$ instead of the empty set.
The reason is that for every sort with finitely many infinite trees, it can be shown (\cref{lem-shape-infinite-trees} in the appendix) that the infinite parts of each such tree are built from the unique infinite trees of the sorts $S_{1I}$.
These sorts with a unique infinite tree are also constructed by fixed point iteration.
They can only have a single generator $\con g$ that constructs infinite trees and it can only take one argument because otherwise we would have at least two infinite trees since each sort is assumed to have at least two generators.

A sort $s$ only has finitely many infinite trees if the set of generators constructing infinite trees ($F_s^{\infin}$) is finite, and when picking an arbitrary argument $i$ of it, this argument allows no infinite trees; or it allows finitely many infinite trees and all the other arguments allow only finitely many trees.
This explains the fixed point iteration for $S_{FI}$ and $s_{\infin}$.
For details, refer to the full proof in the appendix.
\end{proof}

\begin{example}
Consider again the signature from \cref{fig-signature}.
How does \cref{alg-fin-ind} act on it?
At the start of the first loop, we have $S_{FF} = \{ \mathit{inftree} \}$ and $S_{1I} = S$.
In the first iteration, $\mathit{bool}$ is added to $S_{FF}$ because all its generators are constants, and $\mathit{bool}_{\fin} = \{ \con{false}, \con{true} \}$.
Additionally, $\mathit{nat}$ stays in $S_{1I}$ because its generator $\con{succ}$ satisfies $\mathit{nat} \in S_{1I}$ and the other generator is constant.
Therefore $U_{\mathit{nat}} = \{ u_{\mathit{nat}} = \con{succ}(u_{\mathit{nat}}) \}$.
All the other sorts are removed from $S_{1I}$, either because they don't have a unary generator ($\mathit{bool}$, $\mathit{list}$) or there is another generator that allows infinite trees, destroying uniqueness ($\con{tree2}$ for $\mathit{inftree}$, $\con{c2}$ for $d$, and $\con{g2}$ for $t$).
In the second iteration, $d$ is added to $S_{FF}$ because $\con{c1}: \mathit{bool} \to d$ only constructs finitely many finite trees since $\mathit{bool} \in S_{FF}$ and its other generator $\con{c2}$ constructs only infinite trees.
Therefore, $d_{\fin}$ is set to $\{ \con{c1}(\con{true}), \con{c1}(\con{false}) \}$.
After this point, no more changes happen.

At the start of the second loop, we have $S_{FI} = \{ \mathit{bool}, \mathit{nat} \}$ and $\mathit{nat}_{\infin} = \{ u_{\mathit{nat}} \}$.
In the loop iteration, $t$ is added to $S_{FI}$ because we have $F_t^{\infin} = \{ \con{g2} \}$, which is finite, and its only generator $\con{g2}$ has the property that its first parameter is $\mathit{bool} \in S_{0I}$ and its second parameter is $\mathit{nat} \in S_{FI}$ with the additional property that all remaining parameters, i.e. $\mathit{bool}$, are in $S_{FF} \cap S_{FI}$.
Therefore $t_{\infin}$ is set to $\{ \con{g2}(\con{false}, u_{\mathit{nat}}), \con{g2}(\con{true}, u_{\mathit{nat}}) \}$.
After this point, no more changes happen.
The algorithm has computed $S_{FF} = \{ inftree, bool, d \}$ and $S_{FI} = \{ nat, t \}$.
\end{example}

\section{Simplification Procedure for the Extended Theory of Trees}
\label{sec-solver}

Having explained how to analyze finitely generated sorts, we can now describe how the simplification procedure from \cite{ddf08} is extended to finitely generated sorts.
Before going into detail, we provide a brief outline of this algorithm.
The procedure works on special formulae, called \emph{normal formulae}.
Any formula can be transformed into an equivalent normal formula, so this is not a restriction.
Roughly speaking, the output of our algorithm is a disjunction of \emph{fully simplified} formulae that is equivalent to the original formula.
A fully simplified formula makes it easy to read off all its models.
The simplification algorithm works similarly to \cite{ddf08}, except for the fact that finitely generated sorts sometimes require case splits (also called instantiations) for certain variables (called \emph{instantiable}).
These case splits can be on the finitely many generators of a sort, or on the finitely many (finite or infinite) inhabitants of a sort if it is in $S_{FF}$ or $S_{FI}$.
In this section, we focus on these instantiable variables and case splits because that is the novel part of our extension of \cite{ddf08}.
The full algorithm is described in the appendix (\cref{alg-solve-normal}).
Before we can start with the concept of normal formulae, we first need to define \emph{basic formulae}.

\begin{definition}
\label{def-basic}
A \emph{basic formula} is  of the form $(\bigwedge_i v_i = t_i) \land (\bigwedge_j \fin(u_j))$ where $\bar u,\bar v$ are variables and each $t_i$ is a variable or a term of the form $\con{f}(\bar z)$ for a function symbol $\con{f}$ and variables $\bar z$.
Such a formula will be abbreviated as $\overline{v = t} \land \overline{\fin(u)}$.
Given a total order on its variables, it is called \emph{solved} if (1) the variables $\bar u, \bar v$ are distinct and for each equation $x = y$, we have $x > y$, and (2) if $\fin(v)$ occurs then the sort of $v$ contains both finite and infinite trees.
A variable $x_n$ is \emph{reachable} from a variable $x_0$ if the basic formula contains
$x_0 = t_0 \land x_1 = t_1 \land \cdots \land x_{n-1} = t_{n-1}$
where each $t_i$ contains $x_{i+1}$.
It is \emph{properly reachable} if $n > 0$.
The subformulae $u = t$ and $\fin(u)$ are considered reachable if $u$ is.
\end{definition}

The variable ordering is important when we consider basic subformulae of larger formulae.
Then this ordering ensures that in solved basic formulae, variables bound deeper inside the whole formula occur on the left-hand side of equations, which is important for the correctness proof.
Intuitively, reachability means the following: if $y$ is reachable from $x$ then $y$ is a subtree of $x$.
Djelloul, Dao and Frühwirth describe an algorithm to solve a basic formula (rules 1--10 in \cite[Section 4.6]{ddf08}).
In our extended setting, two things have to be changed: if $\fin(u)$ occurs in the basic formula where $u:s$ with $s \in S_{0I}$, or $s \in S_{0F}$, then $\fin(u)$ is always satisfied and can be removed, or is never satisfied and the basic formula is unsolvable, respectively.
This is summarized by the following theorem.

\begin{theorem}
\label{thm-solve-basic}
There is an algorithm $\Call{solveBasic}{(v_0, \cdots, v_n), \alpha}$ (\cref{alg-solve-basic} in the appendix) that correctly solves basic formulae $\alpha$ containing the variables $v_0, \cdots, v_n$, i.e.\ it turns $\alpha$ into an equivalent solved formula (with respect to the variable ordering $v_0 < \cdots < v_n$) or returns $\false$ if none exists.
\end{theorem}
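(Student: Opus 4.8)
The plan is to realise \textsc{solveBasic} as the basic-formula solver of Djelloul, Dao and Frühwirth (rules~1--10 in \cite[Section~4.6]{ddf08}) augmented by two new rewrite rules for $\fin$-atoms whose sort lies in $S_{0F}$ or $S_{0I}$, and then to argue termination, the solved-form property, and equivalence in turn. First I would run the equational part: orient each equation $x = y$ so that $x > y$; on two equations $x = \con{f}(\bar z_1)$ and $x = \con{g}(\bar z_2)$ sharing a left-hand side, decompose into $\bar z_1 = \bar z_2$ when $\con{f} = \con{g}$ and return $\false$ (clash) otherwise; delete trivial equations $x = x$; and substitute so that all left-hand sides are pairwise distinct, maintaining the ordering invariant. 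This is unification modulo the axioms of the tree structure and is insensitive to how many generators a sort has, so termination and correctness carry over from \cite{ddf08}; the outcome is $\false$ or a formula $\overline{v = t} \land \overline{\fin(u)}$ whose $\bar v$ already satisfy condition~(1) of \cref{def-basic}.

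Next I would push each $\fin$ inward along the equations, using the equivalence $\fin(\con{f}(t_1, \dots, t_n)) \Leftrightarrow \bigwedge_j \fin(t_j)$ valid in $\mathcal T$: replace $\fin(v)$ by $\bigwedge_j \fin(z_j)$ whenever $v = \con{f}(\bar z)$ is present, and by $\fin(w)$ whenever $v = w$ is present. If this ever revisits a variable --- equivalently, if some variable reachable from the original $u$ is properly reachable from itself --- then $u$ denotes a tree with a proper subtree equal to itself (the cycle must traverse a function symbol, since a chain of variable equations strictly decreases in the order), hence an infinite tree, so $\fin(u)$ is unsatisfiable and the procedure returns $\false$. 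Otherwise the dependency graph below each $\fin$-variable is finite and acyclic, propagation terminates, and $\fin$ ends up applied only to variables that are not left-hand sides. Finally, for each remaining $\fin(u)$ with $u : s$ I consult \cref{alg-zero-sorts}, which is correct by \cref{thm-zero-sorts}: if $s \in S_{0F}$ return $\false$, since $s$ has no finite trees; if $s \in S_{0I}$ delete $\fin(u)$, since it is valid; otherwise keep it, which is exactly condition~(2) of \cref{def-basic}. The result is then a solved basic formula, or $\false$.

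Termination is now immediate: the equational stage terminates as in \cite{ddf08}, and the $\fin$ stage performs finitely many propagation steps --- the formula is finite and any cycle halts it at once --- followed by one sort test per $\fin$-atom. Every rewrite step preserves truth in $\mathcal T$: the equational steps are the standard equivalences of the tree theory, the propagation step is the equivalence above, the cycle step uses that a proper-subtree cycle forces infiniteness, and the two sort rules use the definitions of $S_{0F}$ and $S_{0I}$. Hence the returned formula is equivalent to $\alpha$. It remains to see that $\false$ is emitted precisely when $\alpha$ is unsatisfiable: one direction follows from equivalence preservation, and for the other I would prove that every solved basic formula is satisfiable in $\mathcal T$, by solving its distinct-left-hand-side equations over rational trees for arbitrary values of the remaining variables and assigning to each retained $\fin$-variable $u$ a finite tree of its sort, which exists because $s \notin S_{0F}$. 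So when $\alpha$ is satisfiable the procedure never emits $\false$ and returns such a solved formula.

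The step I expect to be the main obstacle is the $\fin$ stage: treating $\fin$ in the presence of cyclic equations (we work with rational, not merely finite, trees) without looping, and checking that the cycle test together with the $S_{0F}$/$S_{0I}$ dichotomy captures \emph{exactly} the unsatisfiable configurations --- in particular that no interaction of several $\fin$-atoms with the equations yields an unsatisfiability these local rules miss. This reduces to the satisfiability of every solved basic formula, for which the explicit model construction above is the key lemma.
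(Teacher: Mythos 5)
Your proposal matches the paper's proof: \textsc{solveBasic} is exactly the rules 1--10 of Djelloul, Dao and Frühwirth (equational unification, then $\fin$-propagation with the cycle check) augmented by the two new rules that delete $\fin(u)$ when the sort of $u$ is in $S_{0I}$ and return $\false$ when it is in $S_{0F}$, with correctness of the old rules cited from \cite{ddf08} and the new rules justified by \cref{thm-zero-sorts}. Your additional elaboration --- termination of the $\fin$ stage and the satisfiability of every solved basic formula via the tree-structure axioms --- is correct and corresponds to material the paper defers to \cite{ddf08} and to its separate \cref{lem-solved-basic-satisfiable}.
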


Basic formulae are insufficient for the general case but they are an important building block for the concept of \emph{normal formulae}, which can express any first-order formula.

\begin{definition}
A \emph{normal formula} $\phi$ of depth $d \geq 1$ takes the form $\lnot(\exists \bar x \ldotp \alpha \land \bigwedge_{i = 1}^n \phi_i)$ where $\alpha$ is a basic formula, and each $\phi_i$ is a normal formula of depth $d_i$ with $d = 1 + \max(0,d_1,\dots,d_n)$.
\end{definition}

The simplest normal formula is $\lnot \true$.
As normal formulae allow expressing negation, conjunction, existential quantification, and nesting, the next theorem is straightforward to prove \cite[Property 4.3.3]{ddf08}.

\begin{theorem}
\label{thm-normalize}
There is an algorithm $\Call{normalize}{\phi}$ which turns any first-order formula $\phi$ into a normal one that is equivalent in the theory of trees.
\end{theorem}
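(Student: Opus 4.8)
The plan is to argue by structural induction on $\phi$, with the procedure recursing on the syntax tree of $\phi$ and, at each node, absorbing the outermost connective into the normal-formula shape $\lnot(\exists \bar x \ldotp \alpha \land \bigwedge_{i=1}^n \phi_i)$. Since it suffices to treat a functionally complete set of connectives, I would handle atomic formulae, negation, conjunction and existential quantification (disjunction and universal quantification are then dealt with via De Morgan for convenience). Every rewrite used below is a logical equivalence, so equivalence in $\mathcal T$ in particular holds.

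\emph{Atoms.} The only atoms are equalities and $\fin$-predicates. First I would \emph{flatten} them: introducing fresh, existentially quantified variables for the compound subterms rewrites any atom as an equivalent formula $\exists \bar y \ldotp \beta$ with $\beta$ a basic formula in the sense of \cref{def-basic} (for $\fin(\con f(\bar t))$ one first passes to $\exists y \ldotp y = \con f(\bar t) \land \fin(y)$ and then flattens the equation). Such a formula becomes normal by double negation: $\lnot(\exists \bar y \ldotp \beta)$ is already normal (the case $n = 0$, of depth $1$), and $\exists \bar y \ldotp \beta \equiv \lnot(\lnot(\exists \bar y \ldotp \beta))$, the outer negation wrapping the single normal subformula $\lnot(\exists \bar y \ldotp \beta)$ with empty prefix and $\alpha = \true$. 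In particular $\false \equiv \lnot \true$ and $\true \equiv \lnot(\lnot \true)$ are normal.

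\emph{Closure operations.} For normal formulae $\chi, \chi_1, \dots, \chi_n$ I would verify: (i) \textbf{negation:} $\lnot \chi$ is equivalent to the normal formula with empty prefix, $\alpha = \true$ and sole subformula $\chi$; (ii) \textbf{disjunction:} setting $\chi_i' := \lnot \chi_i$, which is normal by (i), one has $\bigvee_i \chi_i \equiv \lnot\bigl(\true \land \bigwedge_i \chi_i'\bigr)$, again normal; (iii) \textbf{conjunction:} $\bigwedge_i \chi_i \equiv \lnot \bigvee_i \lnot \chi_i$, normal by (i) and (ii); (iv) \textbf{universal quantification:} writing the normal formula $\chi$ as $\lnot(\exists \bar x \ldotp \alpha \land \bigwedge_i \phi_i)$, we obtain $\forall z \ldotp \chi \equiv \lnot\bigl(\exists z, \bar x \ldotp \alpha \land \bigwedge_i \phi_i\bigr)$, which is literally of normal form (and of no greater depth), the new quantifier being absorbed into the existential prefix; (v) \textbf{existential quantification:} $\exists z \ldotp \chi \equiv \lnot\,\forall z \ldotp \lnot \chi$, which is normal by applying (i), then (iv), then (i). Since each recursive step is applied to strictly smaller subformulae of $\phi$ and raises the depth by at most a constant, the recursion terminates and outputs a genuine (finite-depth) normal formula.

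The one slightly delicate point — and it is minor — is quantification: a leading $\exists$ or $\forall$ cannot be commuted past the outer negation that every normal formula carries, so one must route existential quantification through its dual $\lnot\,\forall z \ldotp \lnot(\cdot)$ and exploit that a $\forall$ in front of a normal formula collapses, because that formula's existential prefix can simply swallow the bound variable. Everything else — freshness of the introduced variables, correctness of flattening, termination of the recursion — is routine, and the procedure so obtained is precisely the one sketched in \cite[Property 4.3.3]{ddf08}.
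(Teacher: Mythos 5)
Your proposal is correct; the paper itself gives no proof of this theorem, delegating it to \cite[Property 4.3.3]{ddf08} with the remark that it is straightforward, and your structural induction (flattening atoms into basic formulae, double negation, absorbing $\forall$ into the existential prefix behind the outer negation, and routing $\exists$ through its dual) is exactly the standard argument intended there. The one point you flag as delicate — that a leading quantifier must pass through the outer negation — is handled correctly.
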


\begin{example}
Consider the formula $\forall x: \mathit{nat}. \lnot\fin(x) \to x = \con{succ}(x)$.
It can be rewritten as $\lnot(\exists x: \mathit{nat} \ldotp \lnot\fin(x) \land \lnot(x = \con{succ}(x)))$, which is a normal formula of depth 2.
\end{example}

Now we come to the main difference from the original algorithm in \cite{ddf08}: our more general setting necessitates case splits (or instantiations) for certain variables.
For instance, consider the normal formula $\phi_1 \equiv \lnot(\exists x: \mathit{list} \ldotp \lnot(x = \con{nil}) \land \lnot(\exists y,z \ldotp x = \con{cons}(y,z)))$.
If $\mathit{list}$ had infinitely many generators, it would always be possible to find a value for $x$ that is neither $\con{nil}$ nor $\con{cons}$.
However, since $\mathit{list}$ only has those two generators, no such $x$ exists and the formula is true.
Here our extended algorithm will do a case split on both constructors of $\mathit{list}$ (described later in more detail) and realize that neither works.

As another example, consider $\phi_2 \equiv \lnot(\exists x: t \ldotp \lnot\fin(x) \land \lnot(x = y) \land \lnot(x = z))$.
If $t$ had infinitely many infinite trees, then this would be true because we could always choose a valuation for $x$ that is different from the free variables $y$ and $z$.
Since $t$ has only two infinite trees, our extended algorithm does a case split on all two infinite trees of $t$, instantiating $x$ with $\con{g2}(\con{true}, u_{\mathit{nat}})$ and $\con{g2}(\con{false}, u_{\mathit{nat}})$ where $u_{\mathit{nat}}$ is the unique tree of sort $\mathit{nat}$, namely $\con{succ}(\con{succ}(\dots))$.
A similar case occurs with a constraint $\fin(x)$ where $x$ only has finitely many finite trees or, in general, if $x$ has only finitely many trees.
This leads us to the definition of an instantiable variable, i.e. a variable that requires a case split.

Note that every normal formula can be transformed into an equivalent conjunction of normal formulae of depth at most 2 by repeatedly applying rule 16 (depth reduction) from \cite[Section 4.6]{ddf08}.
Therefore we can limit our attention to such formulae in the following.

\begin{definition}[instantiable variable]
\label{def-instantiable}
Let $\lnot\exists \bar x \ldotp \alpha \land \bigwedge_i \lnot(\exists \bar y_i \ldotp \beta_i)$ be a normal formula of depth at most 2 such that $\alpha$ and each $\beta_i$ are solved basic formulae.
Let $\beta_i^*$ be $\beta_i$ with all conjuncts also occurring in $\alpha$ removed.
Then a variable $v: s$ that is free in the formula or occurs in $\bar x$ is called \emph{instantiable} if one of the following conditions is satisfied:
\begin{enumerate}[noitemsep, nolistsep]
\item $s$ has finitely many generators $F_s$ and some $\beta_i^*$ contains $v = \con{f}(\bar w)$ and $v$ is not properly reachable from $v$ in $\beta_i$, or
\item $s \in S_{FF} \cap S_{FI}$, some $\beta_i^*$ contains $v$, and $\alpha$ contains no equation $v = t$ for any term $t$, or
\item $s \in S_{FF}$, $\alpha$ contains $\fin(v)$, and some $\beta_i^*$ contains $v$, or
\item $s \in S_{FI}$ and some $\beta_i^*$ contains only $\fin()$-constraints (no equations), among them $\fin(v)$.
\end{enumerate}
\end{definition}

\begin{algorithm}
\caption{Algorithm for finding instantiable variables and their instantiations.}
\label{alg-find-instantiations}
\begin{algorithmic}
\Function{findInstantiation}{$\bar v, \lnot(\exists \bar x \ldotp \alpha \land \bigwedge_i \lnot(\exists \bar y_i \ldotp \beta_i))$}
\State $\beta_i^* \gets \beta_i$ without the conjuncts occurring in $\alpha$
\For{$u: s \in \bar v\bar x$}
    \If{$s$ has finitely many generators, and $u = \con{f}(\bar z)$ occurs in some $\beta_i^*$, \\
    \hspace{\algorithmicindent}\hspace{\algorithmicindent}\hspace{\algorithmicindent}\hspace{\algorithmicindent}and $u$ is not properly reachable from $u$ in $\beta_i$}
    \State \Return $\{ \exists \bar z \ldotp u = \con g(\bar z) \mid \con g \in F_s \}$
    \EndIf
    \If{$s \in S_{FF} \cap S_{FI}$ and $u$ occurs in some $\beta_i^*$ and $\alpha$ contains no $u = t$}
    \State \Return $\{ \exists u_{s_1},\dots,u_{s_n} \ldotp u = t \land \bigwedge_{s \in S_{1I}} U_s \mid t \in s_{\fin} \cup s_{\infin} \}$ where $S_{1I} = \{ s_1, \dots, s_n \}$
    \EndIf
    \If{$s \in S_{FF}$ and $u$ occurs in some $\beta_i^*$, and $\fin(u)$ in $\alpha$}
    \State \Return $\{ u = t \mid t \in s_{\fin} \}$
    \EndIf
    \If{$s \in S_{FI}$ and $\fin(u)$ occurs in some $\beta_i^*$ that contains only $\fin()$-constraints}
    \State \Return $\{ \fin(u) \} \cup \{ \exists u_{s_1},\dots,u_{s_n} \ldotp u = t \land \bigwedge_{s \in S_{1I}} U_s \mid t \in s_{\infin} \}$  where $S_{1I} = \{ s_1, \dots, s_n \}$
    \EndIf
\EndFor
\Return $none$
\EndFunction
\end{algorithmic}
\end{algorithm}

\Cref{alg-find-instantiations} looks for an instantiable variable (if any) in a normal formula $\lnot(\exists \bar x \ldotp \alpha \land \bigwedge_i \phi_i)$ with free variables $\bar v$ by checking exactly the four conditions from above.
If it finds an instantiable variable $u$, it returns a set $I$ of formulae, called \emph{instantiations}.
Note that while we write ``$u = t$" in the return values for simplicity, we actually mean an equivalent formula $\exists \bar z \ldotp \gamma$ where $\gamma$ is a basic formula.
For instance, by $u = \con{c1}(\con{true})$, we mean $\exists z \ldotp u = \con{c1}(z) \land z = \con{true}$ for a fresh variable $z$.
We can use these instantiations to get rid of instantiable variables, as the following theorem explains.

\begin{theorem}
\label{thm-instantiations}
Let $\phi \equiv \lnot(\exists \bar x \ldotp \alpha \land \bigwedge_i \phi_i)$ be a normal formula of depth at most 2 with free variables $\bar v$.
Let $I$ be the result of $\Call{findInstantiation}{\bar v, \phi}$ from \cref{alg-find-instantiations}.
If $I$ is ``none", then there is no instantiable variable.
Otherwise, let $u$ be the first instantiable variable found in \Call{findInstantiation}{}.
Then $\phi$ is equivalent to the following conjunction of normal formulae, in which the variable $u$ is no longer instantiable:
\[ \bigwedge_{(\exists \bar z \ldotp \psi) \in I} \lnot(\exists \bar x\bar z \ldotp \alpha \land \psi \land \bigwedge_i \phi_i). \]
\end{theorem}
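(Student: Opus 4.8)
The plan is to prove the logical equivalence by exhibiting, for each side, a model-theoretic argument in the structure $\mathcal T$ of trees (the equivalence being over the extended theory of trees). Fix a valuation $\rho$ of the free variables $\bar v$. We must show: $\phi$ holds under $\rho$ in $\mathcal T$ if and only if every conjunct $\lnot(\exists \bar x\bar z \ldotp \alpha \land \psi \land \bigwedge_i \phi_i)$ holds under $\rho$. Unfolding the outer negation, this amounts to: there is \emph{no} extension of $\rho$ to $\bar x$ satisfying $\alpha \land \bigwedge_i \phi_i$ if and only if, for \emph{every} $(\exists \bar z \ldotp \psi) \in I$, there is no extension of $\rho$ to $\bar x \bar z$ satisfying $\alpha \land \psi \land \bigwedge_i \phi_i$. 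The key fact I would isolate and prove first, as a lemma, is a \emph{covering property}: the disjunction $\bigvee_{(\exists \bar z \ldotp \psi) \in I} (\exists \bar z \ldotp \psi)$ is valid in $\mathcal T$ \emph{relative to the constraints already imposed on $u$ by $\alpha$} — more precisely, any valuation of the (free or existential) variables that satisfies $\alpha$ can be extended to satisfy $\psi$ for at least one instantiation, and conversely each $\psi$ is consistent with $\alpha$. This is exactly where the analysis of \cref{sec-fin-gen-sorts} (\cref{thm-zero-sorts}, \cref{thm-fin-ind}) enters: in case~1 the finitely many generators $F_s$ partition the possible root symbols of $u$ (and $\alpha$ imposes no equation on $u$, by the "not properly reachable" hypothesis ensuring $u=\con f(\bar z)$ is an unconstrained choice); in case~2 the set $s_{\fin}\cup s_{\infin}$ enumerates \emph{all} trees of sort $s$ (since $s\in S_{FF}\cap S_{FI}$) and $\alpha$ fixes no value for $u$; in case~3, under $\fin(u)\in\alpha$, the set $s_{\fin}$ enumerates all \emph{finite} trees of sort $s$ (since $s\in S_{FF}$); in case~4, $\{\fin(u)\}\cup\{u=t\mid t\in s_{\infin}\}$ covers the dichotomy finite-vs-infinite together with the enumeration of all infinite trees (since $s\in S_{FI}$). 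The extra existential prefix $\exists u_{s_1},\dots,u_{s_n}$ with $\bigwedge_{s\in S_{1I}} U_s$ is just the device from \cref{thm-fin-ind} that pins down the witness trees $u_s$ via their defining equations; I would note these $u_s$ are fresh and their equations have unique solutions in $\mathcal T$, so they neither add nor remove models.

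Granting the covering lemma, the equivalence follows by a short propositional-plus-quantifier manipulation. For the forward direction, suppose no extension of $\rho$ to $\bar x$ satisfies $\alpha\land\bigwedge_i\phi_i$. Any extension to $\bar x\bar z$ satisfying $\alpha\land\psi\land\bigwedge_i\phi_i$ would in particular give an extension to $\bar x$ satisfying $\alpha\land\bigwedge_i\phi_i$ (just forget $\bar z$; the $u_s$ and their equations are irrelevant to this), contradiction — so every conjunct holds. For the converse, suppose some extension $\rho'$ of $\rho$ to $\bar x$ satisfies $\alpha\land\bigwedge_i\phi_i$. By the covering lemma applied to the value $\rho'(u)$ (which satisfies whatever $\alpha$ says about $u$), there is an instantiation $(\exists\bar z\ldotp\psi)\in I$ and values for $\bar z$ (including the forced values of the $u_s$) such that $\psi$ holds; since $\psi$ only adds an equation $u=t$ (or the atom $\fin(u)$) consistent with the already-chosen $\rho'(u)$, and the $u_s$ are fresh, extending $\rho'$ by these $\bar z$-values still satisfies $\alpha\land\bigwedge_i\phi_i$, hence satisfies $\alpha\land\psi\land\bigwedge_i\phi_i$ — so that conjunct of the right-hand side fails, and the equivalence is established.

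Finally I would verify the remaining assertion, that $u$ is no longer instantiable in any conjunct $\lnot(\exists\bar x\bar z\ldotp\alpha\land\psi\land\bigwedge_i\phi_i)$. Here one re-examines \cref{def-instantiable} against the enlarged basic formula $\alpha\land\psi$: in cases~1--2 the new conjunct $\psi$ contains $u=t$, so $\alpha\land\psi$ now \emph{does} contain an equation $u=t$, killing conditions~1 (since after solving, $u$ becomes properly reachable from itself only if there were a cycle, which there is not for a one-step instantiation — more carefully, $u$ now appears on a left-hand side and gets eliminated) and~2; in case~3 likewise $u=t$ is added; in case~4 either $\fin(u)$ is added (and then $u$ is constrained finite, and after propagating one checks the relevant $\beta_i^*$ no longer triggers the condition) or $u=t$ with $t\in s_{\infin}$ is added. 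The mild subtlety is that adding $\psi$ may require re-solving $\alpha\land\psi$ (via \Call{solveBasic}{} from \cref{thm-solve-basic}) and that $u$ might be renamed or eliminated in the process; I would argue that in every case the equation/atom introduced removes the specific syntactic trigger for $u$'s instantiability, using that the instantiations were designed to be mutually exclusive and exhaustive so no residual case-split on $u$ remains.

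The main obstacle I anticipate is the covering lemma in its precise form — in particular handling the interaction between the pre-existing constraints in $\alpha$ on the variable $u$ (and on variables reachable from it) and the new equation contributed by $\psi$, so as to be sure that each instantiation is genuinely consistent with $\alpha$ and that together they are exhaustive; this is where the side-conditions in \cref{def-instantiable} ("$\alpha$ contains no equation $v=t$", "$v$ is not properly reachable from $v$ in $\beta_i$", "$\beta_i^*$ contains only $\fin()$-constraints") must be used exactly, and where correctness of the sort analysis from \cref{sec-fin-gen-sorts} is invoked to guarantee $s_{\fin}$, $s_{\infin}$ really enumerate all the relevant trees.
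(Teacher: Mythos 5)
Your proposal is correct and follows essentially the same route as the paper: your ``covering lemma'' is exactly the paper's key step of establishing $\alpha \to \bigvee_{\psi \in I} \psi$ case by case from the sort analysis of \cref{sec-fin-gen-sorts}, after which the equivalence follows by distributing the existential quantifier and the negation over the disjunction (you phrase this semantically via valuations, the paper syntactically, but the content is identical). Your closing case analysis of why $u$ is no longer instantiable is in fact more detailed than the paper's, which merely asserts this ``by the construction of $I$''.
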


\begin{example}
In the formula $\phi_1 \equiv \lnot(\exists x: \mathit{list} \ldotp \lnot(x = \con{nil}) \land \lnot(\exists y,z \ldotp x = \con{cons}(y,z)))$ from above, $x$ is instantiable because of condition 1.
(Note that the reachability check in this condition is required to avoid infinite loops for recursive equations like $x = \con{cons}(y,x)$.)
Here \Call{findInstantiation}{} returns $I = \{ x = \con{nil}; \exists y,z \ldotp x = \con{cons}(y,z) \}$.
By the above theorem, $\phi_1$ is equivalent to
\begin{align*}
&\lnot(\exists x: \mathit{list} \ldotp x = \con{nil} \land \lnot(x = \con{nil}) \land \lnot(\exists y,z \ldotp x = \con{cons}(y,z)) \\
\land & \lnot(\exists x: \mathit{list},y,z \ldotp x = \con{cons}(y,z) \land \lnot(x = \con{nil}) \land \lnot(\exists y,z \ldotp x = \con{cons}(y,z)).
\end{align*}
Both existential subformulae obviously contain a contradiction, so the whole formula simplifies to $\lnot(\false) \land \lnot(\false)$ and thus $\true$.
\end{example}

\begin{example}
In the other formula $\phi_2 \equiv \lnot(\exists x: t \ldotp \lnot\fin(x) \land \lnot(x = y) \land \lnot(x = z))$ from above, $t$ is instantiable because of condition 4.
\Cref{alg-find-instantiations} returns the instantiations
\[ I = \{ \fin(x); \exists u_{\mathit{nat}} \ldotp x = \con{g2}(\con{true}, u_{\mathit{nat}}) \land u_{\mathit{nat}} = \con{succ}(u_{\mathit{nat}}); \exists u_{\mathit{nat}} \ldotp x = \con{g2}(\con{true}, u_{\mathit{nat}}) \land u_{\mathit{nat}} = \con{succ}(u_{\mathit{nat}}) \}, \]
which means $x$ is either a finite tree or one of the two infinite trees $\con{g2}(\con{false}, u_{\mathit{nat}})$, $\con{g2}(\con{true}, u_{\mathit{nat}})$ where $u_{\mathit{nat}}$ is the unique tree with $u_{\mathit{nat}} = \con{succ}(u_{\mathit{nat}})$.
By the above theorem, $\phi_2$ is equivalent to
\begin{align*}
&\lnot(\exists x: t \ldotp \fin(x) \land \lnot\fin(x) \land \lnot(x = y) \land \lnot(x = z)) \\
\land &\lnot(\exists x, u_{\mathit{nat}}: t \ldotp x = \con{g2}(\con{false}, u_{\mathit{nat}}) \land u_{\mathit{nat}} = \con{succ}(u_{\mathit{nat}}) \land \lnot\fin(x) \land \lnot(x = y) \land \lnot(x = z)) \\
\land &\lnot(\exists x, u_{\mathit{nat}}: t \ldotp x = \con{g2}(\con{true}, u_{\mathit{nat}}) \land u_{\mathit{nat}} = \con{succ}(u_{\mathit{nat}}) \land \lnot\fin(x) \land \lnot(x = y) \land \lnot(x = z)).
\end{align*}
The other parts of the simplification procedure (unchanged from \cite{ddf08}) simplify this to
\begin{align*}
&\true \\
\land &\lnot(\exists u_{\mathit{nat}}: t \ldotp u_{\mathit{nat}} = \con{succ}(u_{\mathit{nat}}) \land \lnot(y = \con{g2}(\con{false}, u_{\mathit{nat}})) \land \lnot(z = \con{g2}(\con{false}, u_{\mathit{nat}}))) \\
\land &\lnot(\exists u_{\mathit{nat}}: t \ldotp u_{\mathit{nat}} = \con{succ}(u_{\mathit{nat}}) \land \lnot(y = \con{g2}(\con{true}, u_{\mathit{nat}})) \land \lnot(z = \con{g2}(\con{true}, u_{\mathit{nat}}))),
\end{align*}
where the variable $x$ is removed because it is unreachable from the free variables.
The resulting formula essentially expresses that $y$ or $z$ has to be equal to $\con{g2}(\con{false}, u_{\mathit{nat}})$; and that $y$ or $z$ has to be equal to $\con{g2}(\con{true}, u_{\mathit{nat}})$.
In other words, they can only take on those two infinite values and have to be different.
Note that the algorithm has not completed at this point yet because $y$ and $z$ are now instantiable by condition 1.
We skip the following (less interesting) instantiations for space reasons.
\end{example}

At this point, we can introduce the notions of \emph{solved} and \emph{fully simplified} formulae, which make up the output of our extended simplification procedure.

\begin{definition}
\label{def-solved}
A normal formula $\phi \equiv \lnot(\exists \bar x \ldotp \alpha \land \bigwedge_i \lnot(\exists \bar y_i \ldotp \beta_i))$, of depth at most 2, is called \emph{solved} if it satisfies the following properties.
\begin{enumerate}[noitemsep, nolistsep]
\item Each $\beta_i$ and $\alpha$ are solved basic formulae with respect to a variable ordering where $u < v$ if the binding of $v$ is more deeply nested than $u$, i.e. $u$ is free where $v$ is bound.
\item The equations of $\alpha$ are included in every $\beta_i$.
\item Each $\beta_i$ contains at least one conjunct that does not occur in $\alpha$.
\item There are no instantiable variables.
\item All the variables $\bar x$ and $\bar y_i$ are reachable from the free variables of $\exists \bar x \ldotp \alpha$ and $\exists \bar y_i \ldotp \beta_i$, respectively.
\end{enumerate}
A formula $\psi$ is called \emph{fully simplified} if $\lnot \psi$ is a solved normal formula.
(This is an extension of the definition of ``explicit solved form" in \cite[Definition 4.4.6]{ddf08}.)
\end{definition}

\begin{example}
The point of fully simplified formulae is that they are easy to interpret, i.e. it is easy to read off all possible models from them.
For instance, consider the fully simplified formula
\[ \exists v \ldotp x = \con{succ}(v) \land v = y \land \fin(y) \land \lnot(\exists w \ldotp y = \con{succ}(w) \land \fin(w) \land \fin(z)) \]
Any model has to satisfy $x = \con{succ}(y)$ and $y$ has to be finite.
To falsify the other part $\exists w \ldotp y = \con{succ}(w) \land \fin(w) \land \fin(z)$, there are two options for the free variables $y$ and $z$: (1) instantiate $y$ with any finite tree with a root other than $\con{succ}$ and $z$ with any tree, or (2) instantiate $y$ with any finite tree and $z$ with any infinite tree.
These are the only two classes of models for the above fully simplified formula.
In general, the following holds about fully simplified formulae.
\end{example}

\begin{theorem}
\label{thm-solved-form}
Let $\phi$ be a fully simplified formula.
If $\phi$ has no free variables then $\phi \equiv \true$.
Otherwise both $\phi$ and $\lnot \phi$ are satisfiable in the theory of trees.
\end{theorem}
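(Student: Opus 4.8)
The plan is to unfold the definition of ``fully simplified'' and split into two cases according to whether $\phi$ has free variables. Since $\lnot\phi$ is by assumption a solved normal formula, $\phi$ is equivalent to $\exists\bar x\ldotp\alpha\land\bigwedge_{i}\lnot(\exists\bar y_i\ldotp\beta_i)$, where $\alpha$ and each $\beta_i$ are solved basic formulae satisfying the five properties of \cref{def-solved}. Two consequences will be used throughout. First, by property~2 every equation of $\alpha$ occurs in every $\beta_i$, and since the $\bar y_i$ are fresh, the conjuncts that $\beta_i$ shares with $\alpha$ never mention $\bar y_i$; hence under any valuation satisfying $\alpha$ we have $\exists\bar y_i\ldotp\beta_i$ iff $\exists\bar y_i\ldotp\beta_i^*$, where $\beta_i^*$ is $\beta_i$ with the conjuncts occurring in $\alpha$ deleted, and $\beta_i^*$ is non-empty by property~3. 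Second, by the definition of reachability, any variable reachable from a variable $z$ by a chain of positive length lies on a sequence $z = t_0 \land \cdots$ of equations of the relevant basic formula starting at $z$; and by the variable ordering of property~1 (free variables precede bound ones), a variable--variable equation $u = u'$ has $u > u'$, so its smaller side is free whenever the larger one is. The closed case is now quick: if $\phi$ has no free variables then $\exists\bar x\ldotp\alpha$ has none, so by property~5 (reachability) $\bar x$ is empty and $\alpha\equiv\true$; the same reasoning forces each $\bar y_i$ empty and each $\beta_i\equiv\true$, which contradicts property~3 unless there are no negated conjuncts at all, so $\phi\equiv\true$.

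Now assume $\phi$ has a free variable. To prove $\lnot\phi$ satisfiable I would produce a valuation of the free variables falsifying $\phi$. If $\bar x\neq\emptyset$, then by property~5 some bound variable is reachable from a free one, so $\alpha$ contains an equation with a free variable $v$ on the left; by the ordering this equation is $v = \con{f}(\bar z)$ or $v = w$ with $w$ free. If instead $\bar x = \emptyset$ but $\alpha\not\equiv\true$, then $\alpha$ has a conjunct all of whose variables are free, hence of the form $v = \con{f}(\bar z)$, $v = w$, or $\fin(v)$. In each of these cases I can choose the free variables so that this conjunct — and therefore $\alpha$, for every value of $\bar x$ — is false: every sort has at least two generators, so $v$ may be given a root other than $\con{f}$ or a value other than $w$; and a solved basic formula contains $\fin(v)$ only when the sort of $v$ also has infinite trees, so $v$ may be made infinite. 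There remains the case $\bar x = \emptyset$ and $\alpha\equiv\true$; then there is at least one negated conjunct $\lnot(\exists\bar y_1\ldotp\beta_1)$ — otherwise $\phi\equiv\true$, contradicting that it has a free variable — and I take any model of $\beta_1$ (which is satisfiable, being solved) and restrict it to the free variables of $\phi$; this makes $\exists\bar y_1\ldotp\beta_1$ true, so $\phi$ is false.

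The main obstacle is showing that $\phi$ itself is satisfiable. Here I would follow the model construction of Djelloul, Dao and Frühwirth for the unrestricted theory, choosing ``sufficiently generic'' trees for the free variables — deep, pairwise distinct where the sorts allow, and infinite wherever no $\fin$-constraint forbids it — so that $\alpha$ holds (the $\bar x$ then being forced to the corresponding subterms) and, for each $i$, no extension to $\bar y_i$ satisfies $\beta_i^*$. In the unrestricted setting each of the finitely many conjuncts of $\beta_i^*$ is evaded by descending one level deeper into a tree or by using a generator it does not mention. The delicate and technically heaviest part in the finitely generated setting is to verify that the four conditions of \cref{def-instantiable} capture exactly the obstructions to such an evasion, so that — no variable being instantiable — enough freedom remains. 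For example, if $\beta_i^*$ requires $v = \con{f}(\bar w)$ for a free or bound variable $v$ whose sort has only finitely many generators, then the failure of condition~1 forces $v$ to be properly reachable from itself in $\beta_i$, so $\beta_i$ already pins $v$ to a single rational tree, which the generic choice simply avoids; and if the sort of $v$ lies in $S_{FF}\cap S_{FI}$, or $\fin(v)$ occurs in $\alpha$, or $\beta_i^*$ consists solely of $\fin$-constraints, then the failure of conditions~2--4 forces $\alpha$ to determine $v$ already, and consistently, so that the offending conjunct of $\beta_i^*$ is either redundant or contradictory. Since the evasions for different conjuncts can be localised to disjoint subtrees, they combine into one valuation satisfying $\phi$. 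I expect the bookkeeping of this last step — matching every possible shape of a conjunct of $\beta_i^*$ against the non-instantiability conditions — to be the only real difficulty, and would defer it to the appendix.
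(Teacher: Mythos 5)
Your treatment of the closed case and of the satisfiability of $\lnot\phi$ matches the paper's proof essentially step for step: reachability (property~5) forces $\bar x$ and the $\bar y_i$ empty when there are no free variables, property~3 then kills every negated conjunct; and in the open case you falsify $\alpha$ through a free variable (using that every sort has at least two generators, and that $\fin(v)$ only appears when the sort has infinite trees) or, when $\alpha\equiv\true$, satisfy some $\beta_i$. These parts are fine.

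The genuine gap is in the direction you yourself flag as the hard one, showing $\phi$ satisfiable, and it is not mere bookkeeping that you are deferring. Two problems. First, your reading of non-instantiability is off for conditions~3 and~4 of \cref{def-instantiable}: you claim their failure ``forces $\alpha$ to determine $v$ already, and consistently, so that the offending conjunct of $\beta_i^*$ is either redundant or contradictory.'' That is only what condition~2's failure gives. The failure of condition~3 means that when $\fin(v)\in\alpha$ and $v$ occurs in some $\beta_i^*$, the sort of $v$ is \emph{not} in $S_{FF}$, i.e.\ there are infinitely many finite trees to choose from; the failure of condition~4 means that when some $\beta_i^*$ is purely $\fin$-constraints including $\fin(v)$, the sort is not in $S_{FI}$, i.e.\ there are infinitely many infinite trees. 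The paper's argument leans on exactly this ``infinitely many candidates remain'' reading, not on $\alpha$ pinning $v$ down. Second, your claim that ``the evasions for different conjuncts can be localised to disjoint subtrees'' does not work when several $\beta_i^*$ each contain a recursive constraint $v=\con f(\bar w)$ with $v$ properly reachable from itself in $\beta_i$: these all constrain the \emph{same} tree assigned to $v$, so one must exhibit a single value of $v$ simultaneously contradicting finitely many such rational-tree equations while also avoiding finitely many forbidden values (and, depending on the case, being finite or infinite). The paper isolates this as \cref{lem-contradict-recursive-constraints} and proves it by a non-trivial iterative construction (injections $f_0,\dots,f_m$ and infinite sets $W_0\supseteq\cdots\supseteq W_m$ of candidate trees, splicing a fresh subtree below the depth at which the earlier constraints are already contradicted). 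Without this lemma, or an equivalent, the final combination step of your argument does not go through.
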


Given a formula $\phi$, the main algorithm returns an equivalent disjunction of fully simplified formulae.
Since each disjunct allows an easy description of its models, we can describe all possible models of $\phi$.

\begin{theorem}
\label{thm-solver-correct}
There is an algorithm $\Call{solve}{\phi}$ (\cref{alg-solve-normal} in the appendix) that, given a formula $\phi$, returns $\true$, $\false$, or a disjunction of fully simplified formulae that is equivalent to $\phi$ in the extended theory of trees.
In particular, if $\phi$ is closed, it returns $\true$ or $\false$.
\end{theorem}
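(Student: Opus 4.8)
The plan is to prove correctness of $\Call{solve}{\phi}$ (\cref{alg-solve-normal}) via a loop invariant for its rewriting process together with a termination argument. First I would normalize: by \cref{thm-normalize}, $\Call{normalize}{\phi}$ yields a normal formula equivalent to $\phi$ in the theory of trees, and this step preserves the free variables of $\phi$. From then on the algorithm manipulates a finite \emph{disjunction of working formulae}, each a normal formula of depth at most~$2$ whose basic components are kept in solved form, and the invariant to maintain is that this disjunction is equivalent to $\phi$ in the extended theory of trees. The rewriting steps are: (i)~the normal-formula rewriting rules of \cite[Section~4.6]{ddf08}, in particular depth reduction (rule~16), used to keep every normal (sub)formula at depth at most~$2$, and garbage collection of existential variables unreachable from the free ones; (ii)~$\Call{solveBasic}{}$ of \cref{thm-solve-basic}, which packages the (adapted) basic-formula rules and either solves a basic component or prunes the branch to $\false$; (iii)~the two new rules for a conjunct $\fin(u)$ with $u\colon s$, deleting it when $s\in S_{0I}$ and pruning the branch when $s\in S_{0F}$; and (iv)~the instantiation rule of \cref{thm-instantiations}, fired whenever $\Call{findInstantiation}{}$ (\cref{alg-find-instantiations}) reports an instantiable variable $u$, which rewrites the offending depth-$\le 2$ formula into the equivalent conjunction from that theorem, after which $u$ is no longer instantiable. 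Because each instantiation set is finite (finitely many generators; finitely many finite or infinite inhabitants for sorts in $S_{FF}\cup S_{FI}$, as computed in \cref{sec-fin-gen-sorts}), the working disjunction stays finite throughout.

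For soundness I would check that each rewriting step preserves equivalence in the extended theory of trees. For the \cite{ddf08} rules this is largely inherited, but I would revisit exactly those whose original justification uses the assumption of infinitely many non-constant generators — the rules that send a variable to a ``fresh'' value, or to one whose root differs from finitely many prescribed ones. In the finitely generated setting these are precisely the configurations flagged by \cref{def-instantiable}, and they are now handled by the instantiation rule instead; so the adapted rule set only ever fires when its hypotheses are genuinely valid. Correctness of the genuinely new steps is supplied by \cref{thm-solve-basic} (basic solving), \cref{thm-instantiations} (instantiation), and the elementary observations about $S_{0I}$ and $S_{0F}$ sorts recorded just before \cref{thm-solve-basic}.

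The hard part will be termination, since a naive size measure fails: the instantiation rule of \cref{thm-instantiations} replaces one formula by a conjunction of possibly larger ones. My plan is to extend the well-founded termination ordering of \cite{ddf08} lexicographically with a leading component measuring the remaining ``instantiation potential'' of the working disjunction. By \cref{thm-instantiations} the instantiated variable $u$ becomes non-instantiable, and the key lemma to prove is that the rewrite creates no new instantiable variable of equal-or-higher rank, so that this leading component strictly decreases while the \cite{ddf08}-component controls the remaining instantiation-free work. Here the reachability side conditions in \cref{def-instantiable} (for instance that $v$ is not properly reachable from $v$ in $\beta_i$) are essential — they are exactly what rules out an infinite cascade of instantiations on recursive equations such as $x=\con{cons}(y,x)$. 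I would also have to re-verify that the \cite{ddf08} termination argument for the other rules and for $\Call{solveBasic}{}$ still applies once the infinitely-many-generators assumption is removed, which is unproblematic precisely because the cases that assumption covered are now off-loaded to the instantiation machinery.

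Finally, on termination the working disjunction is, by construction, a disjunction of normal formulae satisfying all five conditions of \cref{def-solved} — solved basic components, the equations of $\alpha$ propagated into every $\beta_i$, each $\beta_i$ carrying a conjunct fresh to $\alpha$, no instantiable variables, and all quantified variables reachable from the free ones — hence a disjunction of fully simplified formulae; a disjunct of the form $\lnot\false$ is reported as $\true$ (absorbing the disjunction), and the empty disjunction as $\false$. This establishes the first claim. For the closed case I would note that normalization, the \cite{ddf08} rules, $\Call{solveBasic}{}$, and the instantiation rule all preserve the set of free variables, so when $\phi$ is closed every disjunct is closed; by \cref{thm-solved-form} each closed fully simplified disjunct equals $\true$, whence the returned disjunction is $\true$ if non-empty and $\false$ otherwise. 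Thus for closed $\phi$ the algorithm returns $\true$ or $\false$, as required.
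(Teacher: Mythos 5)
Your proposal follows essentially the same route as the paper's proof: normalize, delegate basic-formula solving to \textsc{solveBasic}, use the rules of Djelloul et al.\ to establish conditions (1)--(3) of \cref{def-solved}, eliminate instantiable variables via \cref{thm-instantiations} to get condition (4), prove termination of the instantiation cascade by a lexicographic measure on instantiable variables ranked by depth (the paper's \cref{lem-termination}), re-verify the unreachable-variable removal rule in the finitely generated setting (the paper's \cref{lem-rule15-correct}), and handle the closed case via \cref{thm-solved-form}. The only differences are presentational (you work with a positive disjunction where the paper negates first and works with a conjunction of solved normal formulae), so the approaches coincide.
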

\begin{proof}[Proof idea]
The simplification procedure from \cite{ddf08} does not have to be changed a lot.
We use the function \Call{solveBasic}{} from \cref{thm-solve-basic} to solve basic formulae.
Afterwards, the rules 12--14 and 16 from \cite[Section 4.6]{ddf08} ensure that the result is a disjunction of formulae satisfying conditions (1--3) of \cref{def-solved}.
At this point, we make use of \Call{findInstantiation}{} and \cref{thm-instantiations}, to ensure that condition (4) is satisfied.
After each such instantiation, the previous rules have to be applied again because conditions (1--3) may have been invalidated.
These steps will not, however, invalidate condition (4).
It is nontrivial to prove that these instantiations terminate (\cref{lem-termination} in the appendix).
Avoiding infinite loops of instantiations is the reason for the complicated condition 1 in \cref{def-instantiable}.
Finally, Rule 15 of the original algorithm \cite[Section 4.6]{ddf08} ensures that condition (5) is satisfied as well.
It is again nontrivial to show that this part is still correct in our more general setting (\cref{lem-rule15-correct} in the appendix).
\end{proof}

\vspace{-1.5em}
\paragraph{Time complexity} Regarding the performance of our extended algorithm, note that the original algorithm has non-elementary time complexity \cite{ddf08}.
In fact, Vorobyov proved that deciding first-order formulae in the (ordinary) theory of trees already has non-elementary time complexity \cite{vorobyov96}, so we cannot hope for an efficient algorithm for the extended theory of trees in the worst case.

\vspace{-1em}
\paragraph{Implementation} In order to evaluate the performance in practice, we created a prototype implementation \cite{implementation}.
Due to a lack of benchmarks involving formulae of trees, we took the \texttt{tests} set of the \texttt{QF\_DT} (quantifier-free datatypes) suite of the SMT-LIB \cite{smt-lib} and transformed each instance into a formula in the extended theory of trees, using the results from \cref{sec-relationship}.
Then we ran our extended simplification procedure on the transformed instances.
Over 90\% of them completed in less than 1 second and about 5\% timed out after 10 seconds (more data in \cref{fig-benchmark-results} in the appendix).
While state-of-the-art SMT solvers decide each \texttt{QF\_DT} instance in a few milliseconds, our transformed instances are considerably harder because they contain quantifiers and can be significantly larger than the original ones.
Furthermore, our prototype implementation obviously cannot compete with heavily optimized SMT solvers and leaves a lot of room for improvements: for instance, the normalization of formulae can be optimized, and heuristics for choosing which variable to instantiate (instead of picking the first one) could make a big difference.
It nevertheless demonstrates that our algorithm has a reasonable performance on many practical instances.
Our implementation can be found at \url{https://github.com/fzaiser/tree-theory-solver/} or \cite{implementation}.
Additionally, there is a web interface at \url{http://mjolnir.cs.ox.ac.uk/trees-codata/}.

\section{Conclusion}

We believe that the extended theory of trees is an interesting theory to study because of its decidability and connections with algebraic (co)datatypes.
We have explained the complications arising from finitely generated sorts, which are necessary to apply it to (co)datatypes.
The fact that we not only provide a decision procedure but a simplification procedure should make it easier to conduct further research on the theory of trees, such as investigating Craig interpolation.

\bibliographystyle{eptcs}
\bibliography{literature}

\clearpage
\appendix

\section{Omitted proofs from Section \ref{sec-relationship}}

\begin{theorem*}[\cref{thm-codatatypes-undecidable}, repeated]
The first-order theory of (co)datatypes is undecidable.
\end{theorem*}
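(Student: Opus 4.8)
The plan is to reduce a known undecidable problem to the theory of (co)datatypes (i.e.\ to deciding whether a sentence is true in every structure of (co)datatypes), exploiting the fact recorded in the proof idea that under the standard semantics a selector applied to the ``wrong'' constructor is entirely unconstrained, so it can be made to behave like an uninterpreted function. Fix the datatype declared by $\data{nat}{\con{zero} \mid \con{succ}(\mathit{nat})}$; in every structure $\mathcal{D}$ of (co)datatypes the interpretation $\mathit{nat}^{\mathcal{D}}$ is the one fixed, countably infinite set of finite $\con{succ}$-towers over $\con{zero}$ --- in particular the standard model of zero and successor, with no nonstandard elements. For an uninterpreted $n$-ary function on $\mathit{nat}$ I use a declaration of the form $\data{dummy}{\con c(a_1 : \mathit{nat}, \dots, a_n : \mathit{nat}) \mid \con d(h : \mathit{nat})}$, of which the binary case is the one displayed in the proof idea. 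Because $\con c$ is a constructor it is interpreted injectively, and because $h$ is unconstrained on constructor trees with root $\con c$, the map $(y_1, \dots, y_n) \mapsto h^{\mathcal{D}}(\con c^{\mathcal{D}}(y_1, \dots, y_n))$ is a well-defined function $(\mathit{nat}^{\mathcal{D}})^n \to \mathit{nat}^{\mathcal{D}}$; conversely every such function arises this way, by defining $h^{\mathcal{D}}$ through its forced values on the image of $\con d^{\mathcal{D}}$ and the desired values on the image of $\con c^{\mathcal{D}}$. Thus ``$h(\con c(\bar t))$'' simulates an arbitrary function on a countably infinite domain.

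I would then reduce from the undecidable first-order theory of $(\mathbb{N}, 0, S, +, \times)$ (true arithmetic). Introduce two dummy datatypes of the above shape to obtain binary uninterpreted functions, written $p(x,y)$ and $q(x,y)$ (shorthand for the corresponding flat basic formulae, as the excerpt does for ``$h(\con c(\bar t))$''), and let $\alpha_{+}$ comprise the universally quantified recursion equations $p(x, \con{zero}) = x$ and $p(x, \con{succ}(y)) = \con{succ}(p(x,y))$, and $\alpha_{\times}$ the equations $q(x, \con{zero}) = \con{zero}$ and $q(x, \con{succ}(y)) = p(q(x,y), x)$. Since $\mathit{nat}^{\mathcal{D}}$ has no nonstandard elements, a routine induction on the second argument shows that any $\mathcal{D} \models \alpha_{+} \wedge \alpha_{\times}$ interprets $p$ and $q$ as genuine addition and multiplication on $\mathit{nat}^{\mathcal{D}} \cong \mathbb{N}$; and, by the first paragraph, a structure $\mathcal{D}$ realising exactly these interpretations exists (and of course satisfies $\alpha_{+} \wedge \alpha_{\times}$). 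Now, for an arithmetic sentence $\chi$, let $\chi^{\ast}$ be $\chi$ with every quantifier given sort $\mathit{nat}$ and with $0, S, +, \times$ replaced by $\con{zero}, \con{succ}, p, q$. Then $\alpha_{+} \wedge \alpha_{\times} \to \chi^{\ast}$ lies in the theory of (co)datatypes if and only if $(\mathbb{N}, 0, S, +, \times) \models \chi$, so $\chi \mapsto (\alpha_{+} \wedge \alpha_{\times} \to \chi^{\ast})$ is the desired reduction (over a single fixed signature), and the theorem follows.

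The place where I expect the naive approach to stumble is cardinality. The wording of the proof idea suggests reducing instead from the validity problem for first-order EUF sentences, but a datatype sort such as $\mathit{nat}$ always denotes a fixed countably infinite set, whereas a bare EUF structure may have a domain of any finite or infinite cardinality; one therefore cannot invoke Church's undecidability theorem as a black box, but only EUF-validity restricted to countably infinite structures. That restricted problem is still undecidable --- one adjoins a fresh unary function symbol forced to be injective but not surjective (so that the domain becomes Dedekind-infinite), relativises every quantifier to a definable idempotent ``core'' subdomain, and appeals to the downward L\"owenheim--Skolem theorem --- after which a translation analogous to the one above carries over. The arithmetic reduction I sketched is attractive precisely because it turns this cardinality constraint from an obstacle into the mechanism of the proof: it is exactly the standardness of $\mathit{nat}^{\mathcal{D}}$ that lets $\alpha_{+}$ and $\alpha_{\times}$ pin down $p$ and $q$. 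For the appendix I would present the arithmetic reduction.
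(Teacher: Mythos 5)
Your proof is correct, and it rests on the same central trick as the paper's --- a selector applied to the wrong constructor of a dummy datatype is unconstrained, hence simulates an uninterpreted function --- but the reduction itself is genuinely different. The paper reduces from Post's correspondence problem: it uses a bitstring datatype and lets $h(\con f(u,v))$ act as the characteristic function of the set of derivable pairs, so that $\phi_1 \land \phi_2 \to \phi_3$ is valid iff the PCP instance is solvable; crucially, that argument never uses finiteness of bitstrings, so it applies verbatim when the declarations are made codatatypes. You instead reduce from true first-order arithmetic, using the recursion equations $\alpha_+, \alpha_\times$ together with the \emph{standardness} of $\mathit{nat}^{\mathcal D}$ (every inhabitant is a finite $\con{succ}$-tower, so external induction pins $p$ and $q$ down to $+$ and $\times$). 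This buys a stronger conclusion for the datatype fragment --- the theory is not merely undecidable but not even arithmetical --- and your diagnosis of the cardinality obstacle to a naive EUF reduction is apt. The trade-off is that your argument leans essentially on well-foundedness: if $\mathit{nat}$ is declared as a codatatype, the nonstandard element $\omega = \con{succ}(\omega)$ enters the domain, the induction pinning down $p$ and $q$ no longer covers all of $\mathit{nat}^{\mathcal D}$, and there is no first-order way (without the $\fin$ predicate of the tree theory) to relativize quantifiers to the standard part. So your proof establishes the theorem as stated (the combined theory contains datatypes), but unlike the paper's PCP reduction it does not yield undecidability of the pure codatatype fragment, which the paper obtains for free. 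Since everything is spelled out except the routine induction, the argument is complete as a proof of the stated theorem.
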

\begin{proof}[Proof of \cref{thm-codatatypes-undecidable}]
The proof works by reduction from Post's correspondence problem.
An instance of the problem is given by a finite set of pairs of bit strings $\{(x_1,y_1),\dots,(x_n,y_n)\}$, i.e. each $x_i,y_i \in \{0,1\}^*$.
A solution to such an instance is a nonempty finite sequence of indices $i_1,\dots,i_k$ with each $i_j \in \{1,\dots,n\}$ such that
\[ x_{i_1}\dots x_{i_k} = y_{i_1}\dots y_{i_k}. \]
The decision problem is to decide whether such a solution exists,
and is famously undecidable \cite{post46}.
Let $\{(x_1,y_1),\dots,(x_n,y_n)\}$ be an instance of the problem.
In the following, we construct a signature and formula $\phi$, which is satisfiable in the theory of datatypes if and only if the instance has a solution.

For the encoding, consider the following datatype declarations.
\begin{align*}
\data{\mathit{bool}&}{\con{true} \mid \con{false}} \\
\data{\mathit{bitstring}&}{\con e \mid \con 0(tail_0: \mathit{bitstring}) \mid \con 1(tail_1: \mathit{bitstring})} \\
\data{dummy&}{\con f(b_1: \mathit{bitstring}, b_2: \mathit{bitstring}) \mid \con{g}(h: \mathit{bool})}
\end{align*}
We have the usual definition of Booleans (true or false),
bit strings are empty ($\con e$) or start with a 0 or with a 1,
and the dummy datatype is just used to emulate uninterpreted functions: $h(\con f(s,s'))$ acts like an unspecified function on the bit strings $s,s'$ because $h$ is applied to the wrong constructor and the standard semantics leaves this case unspecified.
In this proof, $\phi$ is constructed in such a way that $h(\con f(s,s')) = true$ holds if $s,s'$ is a valid pair of bit strings that can be built out of the $(x_i,y_i)$.

First, for a bit string $x$, we write $p_x(t)$ for the bit string that prepends $x$ to $t$.
For example, if $x = 101$, $p_x(t)$ stands for $\con 1(\con 0(\con 1(t)))$.
Consider the following formulae.
\begin{align*}
\phi_1 &\equiv \bigwedge_{i=1}^n h(\con f(p_{x_i}(\con e), p_{y_i}(\con e))) = \con{true} \\
\phi_2 &\equiv \forall u,v. h(\con f(u,v)) = \con{true} \to \bigwedge_{i=1}^n h(\con f(p_{x_i}(u), p_{y_i}(v))) = \con{true} \\
\phi_3 &\equiv \exists u \ldotp h(\con f(u,u)) = \con{true}
\end{align*}
First, $\phi_1$ and $\phi_2$ specify what pairs of bit strings can be constructed out of $x_i$ and $y_i$,
and finally $\phi_3$ specifies that a solution to the instance exists.
Hence we claim that $\phi :\equiv \phi_1 \land \phi_2 \to \phi_3$ is valid if and only if the given instance of Post's correspondence problem is solvable.

First, suppose $\phi$ is valid.
Consider the model $\mathcal M$ where $h$ is interpreted as follows:
\[ h^\mathcal M(z) = \begin{cases}
y &\text{if } z = \con g^\mathcal M(y) \\
\con{true}^\mathcal M &\text{if } z = \mathcal M(\con f(p_x(\con e),p_y(\con e))) \\
&\text{ where } \exists k\ge 1, i_1,\dots,i_k: x = x_1\dots x_k \land y = y_1\dots y_k \\
\con{false}^\mathcal M &\text{otherwise.}
\end{cases} \]
In this model, $\phi_1$ and $\phi_2$ are satisfied by construction, hence $\phi_3$ is satisfied as well.
This means there is an element $u \in \mathit{bitstring}^\mathcal M$ such that $h^\mathcal M(\con f^\mathcal M(u,u)) = \con{true}^\mathcal M$.
By the choice of $\mathcal M$, this means that $u = \mathcal M(p_x(\con e))$ such that there are $i_1,\dots,i_k$ with $x = x_{i_1}\dots x_{i_k}$ and $y = y_{i_1}\dots y_{i_k}$, which means the given instance is solvable.

Conversely, suppose the instance of Post's correspondence problem has a solution $i_1,\dots,i_k$.
Let $\mathcal M$ be a structure where $\phi_1 \land \phi_2$ is true.
Let $s_j = x_{i_1}\dots x_{i_j}$ and $t_j = y_{i_1}\dots y_{i_j}$.
Then $\phi_1$ ensures that $h(\con f(p_{s_1}(\con e),\allowbreak p_{t_1}(\con e))) = \con{true}$ in $\mathcal M$
and $\phi_2$ ensures that the induction step works and hence that $h(\con f(p_{s_j}(\con e),\allowbreak p_{t_j}(\con e))) = \con{true}$ holds for all $j$, in particular for $j = k$.
Set $u = p_{s_k}(\con e)$.
Since $s_k = t_k$ is a solution to the instance, we have $h(\con f(u, u)) = \con{true}$ in $\mathcal M$.
Hence $\phi_3$ also holds in $\mathcal M$.
Therefore $\phi$ is valid.

Since a formula is valid if and only if its negation is unsatisfiable, this proof also shows that the satisfiability problem of first-order formulae in the theory of datatypes is undecidable.

Note that in the proof, we never used the fact that a $bit string$ is finite, so the same proof works when replacing the above datatype declarations by codatatype declarations.
\end{proof}

\begin{theorem*}[\cref{thm-eliminate-default-values}, repeated]
In the theory of (co)datatypes with default values, a given formula can be effectively transformed into an equivalent one without selectors.
\end{theorem*}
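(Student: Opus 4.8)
The plan is to eliminate selectors by a two-stage rewriting. First I would normalize the formula so that every selector application occurs in a very restricted syntactic position, and then I would replace each such occurrence by a selector-free formula that captures exactly the default-value semantics.

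For the first stage, I would repeatedly pull out subterms: whenever a selector $\sel{C}{i}(t)$ appears inside a larger term or an atomic formula, I introduce a fresh variable $x$, bind it existentially (or, if we are eliminating selectors under a universal quantifier, we can instead conjoin $x = \sel{C}{i}(t)$ at the top level after prenexing), and replace the occurrence by $x$, adding the conjunct $x = \sel{C}{i}(t)$. Iterating inner-to-outer, and flattening the argument $t$ in the same way, I reach an equivalent formula in which every selector appears only in equations of the form $x = \sel{C}{i}(t)$ where $t$ is selector-free (indeed $t$ can be taken to be a variable, but selector-freeness is all we need). This stage is a routine induction on term/formula structure and preserves equivalence because the introduced variable is uniquely determined by the equation, so the existential (resp.\ the universally-quantified implication) is semantically transparent.

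For the second stage, I would rewrite each remaining equation $x = \sel{C}{i}(t)$, using that in the semantics with default values the selector $\sel{C}{i}$ is a total function: it returns the $i$-th argument when applied to a tree with root $\con{C}$, and returns the fixed default value $T_{\con C}^i$ otherwise. Since $t$ is selector-free and of the appropriate sort $s$ with $F_{ctr}^s = \{\con C, \con C', \dots\}$, exactly one of the cases ``$t$ has root $\con C$'' or ``$t$ has some other root'' holds, so $x = \sel{C}{i}(t)$ is equivalent to
\[ (\exists \bar v \ldotp t = \con C(\bar v) \land x = v_i) \lor ((\lnot\exists \bar v \ldotp t = \con C(\bar v)) \land x = T_{\con C}^i), \]
exactly as in the proof idea, where $\bar v$ are fresh variables of the argument sorts of $\con C$. (When $T_{\con C}^i$ is itself a compound default term one would recurse, but in the standard setup it can be taken to be an arbitrary fixed constructor term, and since it is selector-free no further elimination is needed.) Substituting this selector-free disjunction back in, and pushing the freshly introduced existentials outward, yields an equivalent selector-free formula.

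The main obstacle is bookkeeping rather than a deep difficulty: one must be careful that the fresh variables introduced in the first stage are scoped correctly relative to the quantifiers already present (so that $x$ really is functionally determined by $t$ in context), and that the disjunctive rewriting in the second stage is performed under the right polarity — in particular, when the selector equation sits under a universal quantifier, one uses the equivalent implicational form rather than an existential conjunct. Once the positions are normalized and the scoping is handled, the equivalence of each local rewrite is immediate from the default-value semantics of selectors, and a straightforward induction assembles these into equivalence of the whole transformed formula. Since every step is effective and terminating (each rewrite strictly decreases the number of selector occurrences), the transformation is an algorithm, as required.
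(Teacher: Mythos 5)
Your proposal is correct and follows essentially the same two-step route as the paper's proof: first isolating selectors into flat equations $x = \sel{C}{i}(t)$ via fresh existentially bound variables, then replacing each such equation by the case-split disjunction on whether $t$ has root $\con C$. The only cosmetic difference is your extra discussion of polarity under universal quantifiers, which the paper sidesteps because $\exists x \ldotp x = \sel{C}{i}(t) \land \psi$ is a genuine logical equivalence (the variable being functionally determined) and hence polarity-independent.
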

\begin{proof}[Proof of \cref{thm-eliminate-default-values}]
The transformation works in two steps.
First, it ensures that selectors don't occur nested but instead only occur in equations of the form $x = \sel C i(t)$ for a variable $x$.
Afterwards, it replaces such equations by an equivalent formula that doesn't contain selectors.

\emph{Step 1.}
The first step isolates selectors to simple equations.
For every equation $t = t'$ where one side contains a selector, do the following.
If $t = t'$ is of the form $x = \sel C i(s)$ for a variable $x$ and a selector-free term $s$ then there is nothing to do.
Otherwise, without loss of generality, assume that $t$ contains a (nested) selector, so $t$ can be written as $r[\sel C i(s)/x]$ for a term $r$ where $x$ is a fresh variable.
Then rewrite $r[\sel C i(s)/x] = t'$ to $\exists x \ldotp r = t' \land x = \sel C i(s)$, which is always equivalent in first-order logic.
This process is repeated until no more changes can be made.

\emph{Step 2.}
Now the formula only contains selectors as part of equations of the form $x = \sel C i(s)$ where $s$ is selector-free.
Each such equation can be rewritten as
\[ (\exists \bar v \ldotp s = \con C(\bar v) \land x = v_i)
\lor ((\lnot \exists \bar v \ldotp s = \con C(\bar v)) \land x = T_\con C^i) \]
where $\bar v$ are fresh variables and $T_\con C^i$ is the default value for this selector.
This is equivalent because
\begin{align*}
x = \sel C i(s) &\leftrightarrow ((\exists \bar v \ldotp s = \con C(\bar v)) \land x = \sel C i(s)) \lor (\lnot (\exists \bar v \ldotp s = \con C(\bar v)) \land x = \sel C i(s)) \\
&\leftrightarrow (\exists \bar v \ldotp s = \con C(\bar v) \land x = \sel C i(s)) \lor (\lnot (\exists \bar v \ldotp s = \con C(\bar v)) \land x = \sel C i(s)) \\
&\leftrightarrow (\exists \bar v \ldotp s = \con C(\bar v) \land x = v_i) \lor (\lnot (\exists \bar v \ldotp s = \con C(\bar v)) \land x = T_\con C^i).
\end{align*}
The first equivalence is simply a case split on $\exists \bar v \ldotp s = \con C(\bar v)$, the second one just moves a quantifier outward and the last one uses the definition of selectors.
\end{proof}

\begin{theorem*}[\cref{thm-eliminate-standard}, repeated]
In the theory of (co)datatypes with standard semantics, a quantifier-free formula can be effectively transformed into an equisatisfiable one without selectors (but including quantifiers).
\end{theorem*}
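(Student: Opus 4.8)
The plan is to follow the same two-step pattern as in the proof of \cref{thm-eliminate-default-values}, but to replace the second step by a different mechanism, since in the standard semantics selectors on the wrong constructor are genuinely unconstrained rather than equal to a known default. First I would run \emph{Step 1} verbatim: by repeatedly pulling out innermost nested selector applications via the equivalence $r[\sel C i(s)/x] = t' \leftrightarrow \exists x \ldotp r = t' \land x = \sel C i(s)$, I can assume that every occurrence of a selector in the formula sits inside an equation of the form $x = \sel C i(t)$ with $x$ a variable and $t$ a selector-free term. (Note this introduces existential quantifiers, which is why the target formula is only quantifier-free modulo the added quantifiers, as the statement already concedes.)

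For \emph{Step 2}, the idea is that although we cannot express $\sel C i(t)$ directly, we \emph{can} introduce a fresh existentially quantified variable to stand for its value and then add exactly the first-order constraints that the standard semantics imposes on selectors, namely (i) that $\sel C i$ inverts $\con C$ on terms built with $\con C$, and (ii) that selectors are functions. Concretely, I would: (a) replace each equation $x = \sel C i(t_k)$ (indexed by $k$ ranging over all selector-occurrences in the formula) by simply keeping the fresh variable $x_k$ that already names it from Step 1 — i.e. delete the selector equation entirely from the matrix; (b) conjoin, for each such occurrence, the formula $\forall \bar z \ldotp t_k = \con C(\bar z) \to z_i = x_k$, which forces $x_k$ to be the correct component whenever $t_k$ is actually a $\con C$-term; and (c) conjoin, for each pair of occurrences $x_k = \sel C i(t_k)$, $x_\ell = \sel C i(t_\ell)$ with the \emph{same} selector $\sel C i$, the formula $t_k = t_\ell \to x_k = x_\ell$, enforcing functionality. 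The resulting formula $\phi'$ contains no selectors, and has the same free variables as $\phi$.

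For correctness I would argue equisatisfiability in both directions. If $\mathcal D \models \phi$ under a valuation, interpret each fresh $x_k$ by $(\sel C i)^{\mathcal D}$ applied to the interpretation of $t_k$; then (b) holds because $(\sel C i)^{\mathcal D}(\con C^{\mathcal D}(\bar t)) = t_i$, and (c) holds because $(\sel C i)^{\mathcal D}$ is a genuine function, so $\phi'$ is satisfied. Conversely, given a model of $\phi'$ with witnesses for the fresh variables, I need to build an interpretation of each selector $(\sel C i)^{\mathcal D}$ on the whole carrier that is consistent with those witnesses: on any element of the form $\con C^{\mathcal D}(\bar t)$ set the value to $t_i$ (well-defined since constructors are injective), on the remaining elements in the image of the relevant $\{t_k\}$ use the corresponding $x_k$ (well-defined and non-conflicting precisely by constraint (c) together with the fact that constraint (b) makes the two prescriptions agree wherever they overlap), and on all other carrier elements pick arbitrary values. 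Then this $\mathcal D$ satisfies the original $\phi$. The main obstacle I anticipate is exactly this last gluing argument — checking that clauses (b) and (c) are jointly strong enough that the piecewise definition of $(\sel C i)^{\mathcal D}$ is total and well-defined with no clashes, including the subtle point that a term $t_k$ may or may not evaluate to a $\con C$-term in the given model, so that the ``correct component'' region and the ``free'' region are handled uniformly; the quantifier in (b) ranging over all of the relevant sort, rather than just over named terms, is what makes this work, and I would highlight why a purely quantifier-free conjunct would be insufficient here. For the full details, refer to the appendix.
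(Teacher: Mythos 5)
Your proposal is correct and follows essentially the same route as the paper: Step~1 isolates each selector application into a fresh variable, and Step~2 is an Ackermann-style reduction that conjoins exactly the paper's two constraint families ($\forall \bar z\ldotp t = \con C(\bar z) \to z_i = x$ and the functional-consistency implications $t = t' \to x = x'$), with the same two-directional model construction, including the piecewise redefinition of the selector functions whose well-definedness rests on those two constraints. The only cosmetic difference is that the paper's Step~1 leaves the fresh variables free (collecting the defining equations in a side set) rather than locally existentially quantifying them as in the default-values proof, which avoids the scope issue of having top-level conjuncts refer to variables bound inside the Boolean structure; since the defining equations determine those variables uniquely, the quantifiers can be lifted to the top and your version reduces to the paper's.
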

\begin{proof}[Proof of \cref{thm-eliminate-standard}]
The transformation works in two steps.
The first step is similar to the previous proof and replaces every selector term by an existentially quantified fresh variable.
The second step is different from before obviously, but is similar to Ackermann's reduction for uninterpreted functions \cite{ack54} (see \cite[Section 11.2.1]{ks16} for an exposition).

\emph{Step 1.}
Let $\phi$ be the formula to transform.
Let $S$ be a set of equations, initially empty.
As long as $\phi$ contains a selector term, pick one that doesn't contain any nested selector terms.
Let that term be $t \equiv \sel C i(s)$.
Replace $t$ by a fresh variable $v$ in $\phi$ and add the equation $v = \sel C i(s)$ to $S$.
This is repeated until $\phi$ contains no more selector terms.

\emph{Step 2.}
Let $\phi'$ be the formula after the first step and $S = \{ v_j = \sel{C_j}{i_j}(s_j) \mid j \in \{1,\dots,n\} \}$ the set of equations of selectors.
For each $j = 1,\dots,n$ define the following two formulae
\begin{align*}
\psi_j &\equiv \forall \bar z. s_j = \con{C_j}(\bar z) \to z_{i_j} = v_j \\
\chi_j &\equiv \bigwedge_{j'=1}^n s_j = s_{j'} \to v_j = v_{j'}
\end{align*}
where $\bar z$ are fresh variables for each $j$.
The idea is that the $\psi_j$'s specify that $\sel{C}{i}(\con C(\bar z)) = z_i$ and the $\chi_j$'s (which is also part of Ackermann's reduction) ensure functional consistency, i.e. that if the same selector is applied to equal terms then the results should be equal.
The latter property is important if the selector $\sel C i$ is applied to a term that is not of the form $\con C(\dots)$ because then the result is unspecified but it still has to be consistent.
Finally define $\psi$ to be the following formula
\[ \psi \equiv \bigwedge_{j=1}^n \left(\psi_j \land \chi_j \right). \]

Let $\phi$ be the original formula and $\phi'$ be the result of step 1.
We claim that $\psi \land \phi'$, which does not contain selectors anymore, is equisatisfiable to $\phi$.

First, suppose that $\phi$ is satisfiable.
Let $\mathcal M$ be a model of $\phi$.
We modify it to a model $\mathcal M'$ by just changing the valuation of some variables, leaving the interpretation of selector functions untouched:
\[ \mathcal M'(v) = \begin{cases}
\mathcal M'(\sel {C_j} {i_j}(s_j)) &\text{if } v \equiv v_j \\
\mathcal M(v) &\text{otherwise.}
\end{cases} \]
Note that the first case is well-defined since there are no cyclic dependencies between the variables $\bar v$ by the design of Step 1.
Then $\mathcal M'$ satisfies each $\psi_j$ since the interpretation of $\sel C i$ in $\mathcal M$ (and hence $\mathcal M'$) has to satisfy the selector axioms.
It also satisfies each $\chi_j$ because $\mathcal M'(\sel C i)$ is a function that has to give the same result if applied to the same object.
Hence if $\mathcal M'(s_j) = \mathcal M'(s_{j'})$ then $\mathcal M'(v_j) = \mathcal M'(v_{j'})$.
This shows that $\mathcal M'$ satisfies $\psi$.

Furthermore, it is not hard to see that $\mathcal M'$ satisfies $\phi'$ as well.
First of all, $\mathcal M'$ satisfies $\phi$ because the variables $v_j$ don't occur in $\phi$ since they were assumed to be fresh and $\mathcal M'$ agrees with $\mathcal M$ on everything else.
Also note that by construction of $\mathcal M'$, every equation $v_j = \sel{C_j}{i_j}(s_j)$ in $S$ is true of $\mathcal M'$.
Since each $\sel{C_j}{i_j}(s_j)$ in $\phi$ is replaced by $v_j$ in $\phi'$, the interpretations of $\phi$ and $\phi'$ are the same in $\mathcal M'$.
Hence $\phi'$ is true of $\mathcal M'$.
Altogether, $\psi \land \phi'$ is satisfiable.

Next, suppose that $\psi \land \phi'$ is satisfiable and let $\mathcal M$ be a model of it.
We define a new model $\mathcal M'$ modifying the interpretation of the selectors such that it satisfies:
\[
\mathcal M'(\sel C i)(x) = \begin{cases}
y_i &\text{if } x = \con C^\mathcal M(y_1,\dots,y_n) \\
\mathcal M(v_j) &\text{if } x = \mathcal M(s_j) \text{ for some } j \in \{1,\dots,n\} \\
\mathcal M(\sel C i)(x) &\text{otherwise.}
\end{cases}
\]
It is not immediately clear that this is well-defined.
First, we check what happens if there is more than one $j$ such that $x = \mathcal M(s_j)$.
Suppose $x = \mathcal M(s_j) = \mathcal M(s_{j'})$.
Since $\mathcal M$ satisfies $\psi$ and in particular $\chi_j$, we have $\mathcal M(v_j) = \mathcal M(v_{j'})$.
Hence it does not matter which $j$ is chosen.

Next, we check that the first and second case are compatible.
Suppose $x = \con C^\mathcal M(y_1,\dots,y_n) = \mathcal M(s_j)$.
Since $\mathcal M$ satisfies $\psi$ and in particular $\psi_j$, we have $y_i = \mathcal M(v_j)$.
So the first and second case are not in conflict.
Altogether, this shows that $\mathcal M'$ is well-defined.

Furthermore, it is not hard to see that $\mathcal M'$ is a model of $\phi$.
First of all, $\mathcal M'$ satisfies $\phi'$ because $\mathcal M$ does, $\phi'$ does not contain any selectors, and $\mathcal M'$ agrees with $\mathcal M$ on everything else.
Also note that by construction of $\mathcal M'$, every equation $v_j = \sel{C_j}{i_j}(s_j)$ in $S$ is true of $\mathcal M'$.
Since each $\sel{C_j}{i_j}(s_j)$ in $\phi$ is replaced by $v_j$ in $\phi'$, the interpretations of $\phi$ and $\phi'$ are the same in $\mathcal M'$.
Hence $\phi$ is true of $\mathcal M'$, and thus $\phi$ is satisfiable, which finishes the proof.
\end{proof}

\begin{theorem*}[\cref{thm-codatatypes-to-trees}, repeated]
A selector-free formula in the theory of (co)datatypes can be effectively transformed into an equisatisfiable formula in the extended theory of trees.
\end{theorem*}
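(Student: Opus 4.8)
The plan is to exploit the observation already made in the statement: a selector-free formula uses only constructors, which are exactly the function symbols of the tree signature, and in both theories terms denote the very same trees. Fix the tree structure $\mathcal T$ and a structure $\mathcal D$ of (co)datatypes over the shared set of constructors (the tree signature additionally carries the predicates $\fin_s$ and no selectors, but a selector-free formula uses no selectors anyway); since selectors are absent, the carriers and constructor interpretations of $\mathcal D$ are forced by the definitions in \cref{sec-relationship}, so $\mathcal D$ is unique up to isomorphism for our purposes. The \emph{only} difference between $\mathcal D$ and $\mathcal T$ is that each datatype sort $\delta$ is interpreted in $\mathcal D$ by the \emph{finite} trees of sort $\delta$, whereas $\delta^{\mathcal T}$ also contains the infinite ones; for a codatatype sort the two interpretations coincide. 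This uses the no-mixing assumption: every argument sort of a constructor of a codatatype is again a codatatype (and dually), so a tree of codatatype sort never contains a datatype subtree, $\delta^{\mathcal D}$ really is the set of trees of sort $\delta$ with finitely many nodes, and this is exactly what $\fin_\delta$ expresses in $\mathcal T$.

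First I would define a relativization $\phi\mapsto\phi^{\circ}$ by recursion on formulae: atoms and the propositional connectives are left unchanged, quantifiers over codatatype sorts are left unchanged, and for a datatype sort $d$ one replaces $\exists x{:}d\ldotp\psi$ by $\exists x{:}d\ldotp\fin(x)\land\psi^{\circ}$ and $\forall x{:}d\ldotp\psi$ by $\forall x{:}d\ldotp\fin(x)\to\psi^{\circ}$. Then set $\phi^{\ast}:=\phi^{\circ}\land\bigwedge\{\,\fin(x)\mid x\text{ free in }\phi\text{ of datatype sort}\,\}$. This $\phi^{\ast}$ is the claimed equisatisfiable formula: it is computed by a plain recursion, hence the transformation is effective, and it lies in the extended theory of trees because the only predicate it mentions is $\fin$.

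The heart of the argument is a relativization lemma proved by structural induction: \emph{for every subformula $\psi$ of $\phi$ and every valuation $\nu$ over $\mathcal T$ sending each datatype-sorted variable free in $\psi$ to a finite tree, $\mathcal D,\nu\models\psi$ iff $\mathcal T,\nu\models\psi^{\circ}$.} Atoms hold because a term has the same denotation in $\mathcal D$ and $\mathcal T$; the Boolean cases follow immediately from the induction hypothesis since $\phi\mapsto\phi^{\circ}$ commutes with $\lnot,\land,\lor$ and does not change free variables. For $\exists x{:}d\ldotp\psi$ with $d$ a datatype, the left-hand side quantifies over $d^{\mathcal D}=$ the finite trees of sort $d$, while $\mathcal T,\nu\models\exists x{:}d\ldotp\fin(x)\land\psi^{\circ}$ says exactly that some finite tree $t$ of sort $d$ satisfies $\mathcal T,\nu[x\mapsto t]\models\psi^{\circ}$; since $\nu[x\mapsto t]$ still maps datatype-sorted variables to finite trees, the induction hypothesis applies, and the $\forall$ case is dual. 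For a quantifier over a codatatype sort we have $d^{\mathcal D}=d^{\mathcal T}$, so both sides range over the same set and the step is trivial. Applying the lemma with $\psi=\phi$ and adding the top-level conjuncts then gives both directions of equisatisfiability: a $\mathcal D$-model of $\phi$, viewed as a $\mathcal T$-valuation, sends the datatype-sorted free variables of $\phi$ to finite trees, so it satisfies $\phi^{\circ}$ by the lemma and the $\fin$-conjuncts by construction; conversely any $\mathcal T$-model of $\phi^{\ast}$ satisfies those $\fin$-conjuncts, hence sends the relevant free variables to finite trees, hence is a $\mathcal D$-model of $\phi$ by the lemma.

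I do not expect a genuine obstacle here; once the setup is in place the theorem is routine. The points that require care are purely bookkeeping: stating the induction invariant precisely (the valuation must send the datatype-sorted \emph{free} variables of the \emph{current} subformula, not all variables, to finite trees), and noting that because of the free variables of $\phi$ the final formula $\phi^{\ast}$ is not a purely compositional relativization but needs the extra top-level $\fin$-conjuncts to achieve equisatisfiability rather than mere relativized truth. It is also worth making explicit why the no-mixing assumption is doing real work: it is what makes $\delta^{\mathcal D}$ coincide with the set of trees of sort $\delta$ having finitely many nodes, and what guarantees that codatatype quantifiers genuinely need no relativization; without it the clean split between the ``finite'' datatype universe and the ``possibly infinite'' codatatype universe would fail.
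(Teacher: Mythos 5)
Your proposal is correct and follows essentially the same route as the paper: relativize datatype-sorted quantifiers with $\fin$, add top-level $\fin$-conjuncts for the free datatype-sorted variables, and use the fact that selector-free terms denote the same trees in both structures (the paper additionally notes that selector interpretations can be forgotten or chosen arbitrarily, which is immaterial here). Your explicit relativization lemma by structural induction is a slightly more detailed write-up of the same argument the paper sketches after first normalizing to existential quantifiers only.
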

\begin{proof}[Proof of \cref{thm-codatatypes-to-trees}]
Without loss of generality, we can assume that the given formula $\phi$ contains only existential quantifiers, no universal ones, because $\forall x. \psi$ can be replaced by $\lnot\exists x \ldotp \lnot \psi$.
Let $\phi'$ be the result of replacing each $\exists v_1,\dots,v_n \ldotp \psi$ occurring in $\phi$ with
\[ \exists v_1:s_1,\dots,v_n:s_n \ldotp \left(\bigwedge_{i\in\{1,\dots,n\}, s_i \text{ is a dataype}} \fin(v_i)\right) \land \psi. \]
Finally let $u_1: s_1,\dots,u_n: s_n$ be the free variables of $\phi'$.
Set
\[ \phi'' \equiv \left(\bigwedge_{i\in\{1,\dots,n\}, s_i \text{ is a dataype}} \fin(u_i)\right) \land \phi'. \]
Then it is clear that each model of $\phi$ can be reduced to a model of $\phi''$ by forgetting the interpretation of the selector functions because the interpretation of datatypes are finite trees, thus satisfying all the additional $\fin()$-constraints.
Conversely, each model of $\phi''$ in the theory of trees can be extended to a model of $\phi$ by picking arbitrary selector functions.
The $\fin()$-constraints ensure that the interpretation of each datatype variable is a finite tree, so it is, in fact, a model of $\phi$.
\end{proof}

\section{Omitted proofs from Section \ref{sec-fin-gen-sorts}}

Note that we make the standard assumption that the first-order language of trees is well-founded, in the sense that there is no infinite sequence $s_0,s_1,\dots$ of sorts such that for all $i \in \mathbb{N}$, there is a function symbol $f_i$ of arity $f_i: \cdots \times s_{i+1} \times \cdots \to s_i$.

\begin{theorem*}[\cref{thm-zero-sorts}, repeated]
Given a signature $(S,F,P)$, \cref{alg-zero-sorts} correctly computes the sets $S_{0F}$ and $S_{0I}$.
\end{theorem*}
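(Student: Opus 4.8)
The plan is to recognise \cref{alg-zero-sorts} as two \emph{independent} fixed-point computations running in lockstep. Define two monotone operators on the powerset of $S$: let $\Psi(Y)$ be the set of sorts every generator of which has all of its argument sorts in $Y$, and let $\Gamma(Y)$ be the set of sorts every generator of which has at least one argument sort in $Y$. Unfolding the loop body, the $S_{0I}$-component is exactly the update $S_{0I} \gets S_{0I} \cup \Psi(S_{0I})$, i.e.\ the ascending iteration $\emptyset, \Psi(\emptyset), \Psi^2(\emptyset), \dots$, while the $S_{0F}$-component amounts to $S_{0F} \gets S_{0F} \cap \Gamma(S_{0F})$, i.e.\ the descending iteration $S, \Gamma(S), \Gamma^2(S), \dots$. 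Since there are finitely many sorts both chains stabilise, so the loop terminates; and since each update reads only its own set, interleaving them is harmless. Hence the final values of $S_{0I}$ and $S_{0F}$ are $\mathrm{lfp}(\Psi)$ and $\mathrm{gfp}(\Gamma)$, respectively, and it remains to prove $\mathrm{lfp}(\Psi) = S_{0I}$ and $\mathrm{gfp}(\Gamma) = S_{0F}$. (Note that $S_{0I}$, being computed bottom-up, is a \emph{least} fixed point, whereas $S_{0F}$ is a \emph{greatest} one.)

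For the inclusions $\mathrm{lfp}(\Psi) \subseteq S_{0I}$ and $S_{0F} \subseteq \mathrm{gfp}(\Gamma)$ it suffices to check that $S_{0I}$ is a fixed point of $\Psi$ and $S_{0F}$ a fixed point of $\Gamma$, after which extremality gives the claim. Both verifications are a routine case analysis on the root generator of a tree: if all argument sorts of all generators of $s$ have no infinite tree then any tree of sort $s$ is a root with finitely many finite subtrees, hence finite; conversely an infinite subtree (of an argument sort of some generator of $s$) plugged into the corresponding argument position, with the remaining positions filled by arbitrary trees, yields an infinite tree of sort $s$ — here one uses that every sort is inhabited in $\mathcal T$, which follows by building a (possibly infinite) tree corecursively, always choosing some generator of the current sort (possible as every $F_s$ is nonempty). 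The argument for $\Gamma$ and $S_{0F}$ is dual, noting that a constant generator of $s$ immediately both puts $s$ outside $\Gamma(Y)$ for every $Y$ and gives $s$ a (one-node) finite tree.

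The two converse inclusions are the substantive part. For $S_{0I} \subseteq \mathrm{lfp}(\Psi)$, I argue contrapositively: if $s \notin \mathrm{lfp}(\Psi)$ then, since $\mathrm{lfp}(\Psi) = \Psi(\mathrm{lfp}(\Psi))$, some generator of $s$ has an argument sort again outside $\mathrm{lfp}(\Psi)$; iterating produces an infinite sequence $s = s_0, s_1, s_2, \dots$ with $s_{k+1}$ an argument sort of a generator of $s_k$, and splicing the corresponding generators together — filling the other argument positions with arbitrary trees, which exist by inhabitation — builds a tree of sort $s$ with an infinite branch, hence an infinite tree, so $s \notin S_{0I}$. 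For $\mathrm{gfp}(\Gamma) \subseteq S_{0F}$, suppose for contradiction that some sort in $\mathrm{gfp}(\Gamma)$ has a finite tree, and pick such a sort $s$ together with a finite tree $t$ of sort $s$ of minimal size; its root is a generator $\con g$ with finite subtrees of the argument sorts, and since $s \in \mathrm{gfp}(\Gamma) = \Gamma(\mathrm{gfp}(\Gamma))$ one of those argument sorts lies in $\mathrm{gfp}(\Gamma)$ and carries the corresponding strictly smaller finite subtree, contradicting minimality. I expect the infinite-tree construction in the $S_{0I}$ case to be the main obstacle: it is the only place where one must pass from a purely combinatorial fact about the never-added sorts to an actual witnessing infinite tree, which is what makes the inhabitation lemma necessary.
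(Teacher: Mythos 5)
Your proposal is correct and follows essentially the same route as the paper: both characterize $S_{0I}$ as the least and $S_{0F}$ as the greatest fixed point of the monotone operators read off from the loop body, verify that the true sets are fixed points, and establish the converse inclusions by exhibiting witness trees (the paper phrases the latter as inductions on tree depth, you phrase them as a corecursive construction of an infinite tree and a minimal-counterexample argument, which is the same content). The one point worth noting is that your termination argument assumes finitely many sorts, which the paper leaves implicit but likewise needs for the iteration to stabilize.
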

\begin{proof}[Proof of \cref{thm-zero-sorts}]
The validity of the fixed point computations is implied by the following two lemmas.
\end{proof}

\begin{lemma}
\label{theorem-only-fin}
The set $S_{0I}$ of sorts without infinite trees is the least fixed point of the following function $f: P(S) \to P(S)$ where $P(\cdot)$ denotes the power set:
\[ f(X) := X \cup \{ s \in S \mid \forall (\con g: s_1 \times \cdots \times s_n \to s) \in F_s: \forall i \in \{1,\dots,n\}: s_i \in X \} \]
\end{lemma}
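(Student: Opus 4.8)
The plan is to prove that $S_{0I}$ equals the least fixed point of $f$ by establishing two directions: that $S_{0I}$ is itself a fixed point of $f$ (so the least fixed point is contained in it), and that the least fixed point already contains $S_{0I}$. First I would note that $f$ is monotone on the complete lattice $P(S)$, so by the Knaster--Tarski theorem its least fixed point exists and equals $\bigcup_{n \geq 0} f^n(\emptyset)$, which is exactly what the \textbf{repeat} loop in \cref{alg-zero-sorts} computes for $S_{0I}$ (starting from $\emptyset$). Thus it suffices to identify this least fixed point with $S_{0I}$.

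For the direction that $\operatorname{lfp}(f) \subseteq S_{0I}$, I would show by induction on $n$ that every sort in $f^n(\emptyset)$ has no infinite trees. The base case $f^0(\emptyset) = \emptyset$ is trivial. For the inductive step, suppose $s \in f^{n+1}(\emptyset)$; if $s \in f^n(\emptyset)$ we are done by the induction hypothesis, otherwise every generator $\con g: s_1 \times \cdots \times s_k \to s$ of $s$ has all argument sorts $s_i \in f^n(\emptyset)$, hence (by IH) each $s_i$ has no infinite trees. Any tree of sort $s$ has a root labelled by some such $\con g$ with subtrees of sorts $s_1,\dots,s_k$, each of which is therefore finite; a tree with a root and finitely many finite subtrees is finite, so $s$ has no infinite trees.

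For the converse $S_{0I} \subseteq \operatorname{lfp}(f)$, I would argue contrapositively: if $s \notin \operatorname{lfp}(f)$, I must exhibit an infinite tree of sort $s$. Since $\operatorname{lfp}(f)$ is a fixed point, $s \notin \operatorname{lfp}(f)$ means $s$ has a generator $\con g$ with some argument sort $s'\notin \operatorname{lfp}(f)$; iterating, I build an infinite sequence $s = s_0, s_1, s_2, \dots$ with each $s_{i+1}\notin\operatorname{lfp}(f)$ a generator-argument sort of $s_i$. Here I would invoke the standing well-foundedness assumption on the first-order language of trees, which I must be careful about: well-foundedness forbids an infinite \emph{strictly descending} chain of sorts along argument positions, so the sequence $s_0, s_1, \dots$ cannot consist of pairwise distinct sorts --- some sort must repeat. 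A repetition $s_i = s_j$ with $i < j$ gives a ``cycle'' of generators, which I can unfold to obtain a rational infinite tree of sort $s_i$, and then plant it at the appropriate argument position going back up the chain $s_i, s_{i-1}, \dots, s_0$ to get an infinite tree of sort $s = s_0$. (The assumption that every sort has at least two generators, or at least the nonemptiness of each $F_s$, is used implicitly to ensure trees of each intermediate sort exist at all when filling in the other argument positions.)

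The main obstacle I anticipate is precisely this last construction: reconciling the well-foundedness assumption (no infinite descending sort chain) with the need to produce an infinite tree. The resolution is that the infinite tree is produced not by an infinite chain of \emph{distinct} sorts but by a \emph{cycle} in the (finite) sort-dependency graph restricted to sorts outside $\operatorname{lfp}(f)$; well-foundedness does not preclude cycles, only infinite simple paths, so one always finds such a cycle, and unfolding it yields a rational infinite tree. Making this rigorous requires a small graph-theoretic lemma (an infinite walk in a finite graph contains a cycle) and a careful description of how to graft the cyclic infinite subtree along the finite path back to $s$, choosing arbitrary finite witnesses for the remaining argument positions — witnesses which exist because the relevant sorts, being ``closer'' to the base of the well-founded order or simply nonempty, are inhabited.
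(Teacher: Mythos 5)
Your proof is correct, but for the inclusion $S_{0I} \subseteq \operatorname{lfp}(f)$ it takes a genuinely different route from the paper. The paper handles the containment $\operatorname{lfp}(f) \subseteq S_{0I}$ by observing that $S_{0I}$ is a fixed point of $f$ (your induction on the iterates $f^n(\emptyset)$ is the same fact unrolled), and then argues the converse \emph{directly}: by induction on $d$, any sort all of whose trees are finite of depth $\leq d$ lies in $f^{d+1}(\emptyset)$, whence $S_{0I} = \bigcup_{d} f^{d+1}(\emptyset)$. You instead argue the converse \emph{contrapositively}, extracting from $s \notin \operatorname{lfp}(f)$ an infinite descending chain of sorts outside the fixed point and building an infinite tree of sort $s$ from a cycle in that chain. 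Your route has a real advantage: the paper's direct induction tacitly assumes that every sort in $S_{0I}$ admits a \emph{uniform} depth bound on its (finite) trees, a K\H{o}nig-type fact it never justifies, whereas your construction needs no such bound. On the other hand, your detour through the well-foundedness assumption is the weakest link: as literally stated in the appendix, that assumption forbids \emph{any} infinite sequence of sorts linked by argument positions (even with repetitions), which would make your chain impossible outright and force $\operatorname{lfp}(f) = S$; your charitable reading (only chains of pairwise distinct sorts are forbidden) is the sensible one, but you do not actually need the assumption at all --- the repetition you want follows from $S$ being finite by pigeonhole, and even without a repetition one can take the evident limit tree along the infinite chain (filling the remaining argument positions with arbitrary inhabitants, which exist since every $s^{\mathcal T}$ is non-empty). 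With that simplification your argument is clean and, on the uniform-depth point, more rigorous than the paper's.
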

\begin{proof}
Note that $f$ is monotonic, so the least fixed point is guaranteed to exist by the Knaster-Tarski theorem.
It is clear that $S_{0I}$ is a fixed point because for each generator $\con g$ each parameter sort must only allow finite trees, i.e. be in $S_{0I}$, yielding $f(S_{0I}) = S_{0I}$.
We claim that any sort $s$ containing only finite trees of depth $\leq d$ is included in the set $f^{d+1}(\emptyset)$.
This is proved inductively.
For $d = 0$, this means $s$ contains only constant symbols.
Then the condition $\forall \con g: s_1 \times \cdots \times s_n \to s \in F_s: \forall i \in \{1,\dots,n\}: s_i \in \emptyset$ is vacuously true and $s \in f(\emptyset)$.
For $d \geq 1$, it means that each argument of each generator has depth at most $d - 1$.
Hence by the induction hypothesis, all generator arguments have sorts in $f^d(\emptyset)$.
By the definition of $f$, this means that then $s \in f(f^d(\emptyset))$, proving the claim.
Hence $S_{0I} = \cup_{d \in \omega} f^{d+1}(\emptyset)$, in other words, $S_{0I}$ is the least fixed point of $f$.
\end{proof}

\begin{lemma}
The set $S_{0F}$ of sorts without finite trees is the greatest fixed point of the following function $f: P(S) \to P(S)$:
\[ f(X) := X \setminus \{ s \in S \mid \exists (\con g: s_1 \times \cdots \times s_n \to s) \in F_s: \forall i \in \{1,\dots,n\}: s_i \notin X \} \]
\end{lemma}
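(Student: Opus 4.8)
The plan is to establish the two characteristic properties of a greatest fixed point: that $S_{0F}$ is itself a fixed point of $f$, and that it contains every fixed point of $f$. As a preliminary, I would note that $f$ is monotone --- enlarging $X$ both enlarges the base set and shrinks the deleted set $\{ s \mid \exists (\con{g}: s_1 \times \cdots \times s_n \to s) \in F_s: \forall i: s_i \notin X \}$ --- so that a greatest fixed point exists by Knaster--Tarski; but this will not be needed directly, since exhibiting $S_{0F}$ as a fixed point and showing it dominates all others already identifies it as the greatest one.

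\emph{$S_{0F}$ is a fixed point.} Because $f(X) \subseteq X$ always holds, it suffices to show $S_{0F} \subseteq f(S_{0F})$, i.e.\ that no $s \in S_{0F}$ is deleted. If $s$ were deleted, there would be a generator $\con{g}: s_1 \times \cdots \times s_n \to s$ with every $s_i \notin S_{0F}$; choosing a finite tree $t_i$ of each sort $s_i$ (possible precisely because $s_i \notin S_{0F}$; when $n = 0$ the constant $\con{g}$ is itself such a tree), the term $\con{g}(t_1,\dots,t_n)$ would be a finite tree of sort $s$, contradicting $s \in S_{0F}$.

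\emph{Every fixed point is contained in $S_{0F}$.} This is the crux. Let $X = f(X)$; I would show no sort in $X$ admits a finite tree. Assume otherwise and, among all pairs $(s,t)$ with $s \in X$ and $t$ a finite tree of sort $s$, pick one with the number of nodes $|t|$ minimal (the minimum exists since these are natural numbers). Let $\con{g}: s_1 \times \cdots \times s_n \to s$ be the root label of $t$. Since $s \in X = f(X)$, the sort $s$ is not among the deleted sorts, so for the generator $\con{g} \in F_s$ it is not the case that all $s_i \notin X$; in particular $n \geq 1$ and some argument sort $s_i$ lies in $X$. Then the $i$-th immediate subtree $t_i$ of $t$ is a finite tree of sort $s_i \in X$ with $|t_i| < |t|$, contradicting minimality. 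Hence $X \subseteq S_{0F}$.

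I expect the main obstacle to be stating the second step precisely, above all the uniform handling of the $n = 0$ case: a sort in a fixed point $X$ can have no constant generator, because the quantifier ``$\forall i \in \{1,\dots,n\}$'' is vacuously true when $n = 0$, so such a constant would force $s$ to be deleted; once this is observed, the minimality argument descends strictly and closes the proof. An alternative, parallel to the proof of the previous lemma, would be to show that $S \setminus f^{d+1}(S)$ is exactly the set of sorts possessing a finite tree of depth at most $d$, whence $S_{0F} = \bigcap_{d \geq 0} f^{d+1}(S)$; but I expect the minimality argument to be shorter and to avoid bookkeeping on depths.
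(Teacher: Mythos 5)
Your proof is correct. The first half (that $S_{0F}$ is a fixed point) coincides with the paper's argument: a generator all of whose argument sorts admit finite trees produces a finite tree of the target sort, so no sort of $S_{0F}$ can be deleted, and $f(X)\subseteq X$ holds trivially. For the second half you diverge: the paper argues, by induction on depth, that any sort containing a finite tree of depth $d$ is excluded from $f^{d+1}(S)$, concludes $S_{0F}=\bigcap_{d\in\omega}f^{d+1}(S)$, and identifies this intersection with the greatest fixed point; you instead take an arbitrary fixed point $X=f(X)$ and run a minimal-counterexample argument on the number of nodes of a putative finite tree of a sort in $X$, descending to a strictly smaller finite tree of a sort still in $X$. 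Both arguments rest on the same well-foundedness of finite trees, but yours shows directly that every fixed point lies below $S_{0F}$, whereas the paper's leaves implicit the (easy, monotonicity-based) step that every fixed point is contained in each $f^{d}(S)$. What the paper's route buys is a statement directly matching what \cref{alg-zero-sorts} computes, namely the iteration of $f$ from $S$; your route buys a self-contained proof of "greatest" that does not depend on the iteration converging in $\omega$ steps. Your handling of the $n=0$ case (a constant generator forces deletion because the universal quantifier is vacuous) is exactly the right observation and is needed in both halves.
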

\begin{proof}
Note again that $f$ is monotonic, so the greatest fixed point is guaranteed to exist by the Knaster-Tarski theorem.
It is clear that $S_{0F}$ is a fixed point because a sort with a generator $\con g$ where all parameter sorts of $\con g$ allow finite trees cannot contain only infinite trees.
Hence all such sorts must be excluded, which is what $f$ does.
Therefore $S_{0F} = f(S_{0F})$.
Similarly to the proof of \cref{theorem-only-fin}, it is easy to see by induction that any sort $s$ containing a finite tree of depth $d$ is excluded from the set $f^{d+1}(S)$.
Hence $S_{0F} = \cap_{d \in \omega} f^{d+1}(S)$, in other words, $S_{0F}$ is the greatest fixed point of $f$.
\end{proof}

\begin{theorem*}[\cref{thm-fin-ind}, repeated]
Given a signature $(S,F,P)$, \cref{alg-fin-ind} correctly computes the sets $S_{FF}$, $S_{1I}$, and $S_{FI}$.
Furthermore it computes the set $s_{\fin}$ (the terms for the finite trees of sort $s$ for $s \in S_{FF}$), and the set $s_{\infin}$ (the terms for the infinite trees of sort $s$ for $s \in S_{FI}$).
The latter makes use of the variables $u_s$ (for $s \in S_{1I}$), standing for the unique infinite tree of $s$.
The equations that uniquely determine these $u_s$ are output in $U_s$.
\end{theorem*}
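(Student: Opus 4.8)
The plan is to follow the three-phase structure of \cref{alg-fin-ind} and give, for each phase, a fixed-point characterization of the quantity being computed, proved in the style of the lemmas behind \cref{thm-zero-sorts} (induction on the depth of trees, Knaster--Tarski for existence of least/greatest fixed points). Termination of each iteration follows from the finiteness/well-foundedness assumptions on the set of sorts. One elementary fact used throughout is that every sort $s$ has at least two distinct trees: since $F_s$ has at least two generators $\con g \ne \con h$, and every argument sort is non-empty, one can form trees $\con g(\bar t)$ and $\con h(\bar t')$ which differ in their root label.

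For $S_{FF}$ and $s_{\fin}$ (first loop): a generator with some argument sort in $S_{0F}$ can never label the root of a finite tree, so the finite trees of $s$ are exactly the terms $\con g(r_1,\dots,r_n)$ with $\con g \in F_s$ having no argument in $S_{0F}$ and each $r_i$ a finite tree of $s_i$. Hence $s$ has finitely many finite trees iff there are only finitely many such generators and each of their argument sorts lies in $S_{FF}$. This exhibits $S_{FF}$ as the least fixed point above $S_{0F}$ of the operator in the algorithm; I would prove it by induction on the maximal depth of a finite tree of $s$, exactly as in the proof of \cref{theorem-only-fin}, and check along the way that $s_{\fin}$ accumulates precisely these terms.

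For $S_{1I}$ and $U_s$ (still first loop): the root of an infinite tree of $s$ must be a generator $\con g$ with some argument sort outside $S_{0I}$. If such a $\con g$ has arity at least $2$, then fixing an infinite tree in one such argument position and letting another argument range over its (at least two) trees already yields two distinct infinite trees of $s$. So $s \in S_{1I}$ forces: a single infinite-tree-producing generator $\con g$, it is unary, say $\con g : s_1 \to s$, and every other generator has all its argument sorts in $S_{0I}$; conversely, under these conditions the infinite trees of $s$ are exactly $\{\con g(t) \mid t \in (s_1)_{\infin}\}$, so $s \in S_{1I}$ iff $s_1 \in S_{1I}$ and then $u_s = \con g(u_{s_1})$. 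This makes $S_{1I}$ the greatest fixed point below $S \setminus S_{0I}$ of the corresponding operator, which is what the downward iteration computes, and shows that $U_s$ records exactly the guarded equations that pin down $u_s$.

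For $S_{FI}$ and $s_{\infin}$ (second loop) --- the main obstacle --- I would first establish the shape lemma \cref{lem-shape-infinite-trees}: in any infinite tree $t$ of a sort with finitely many infinite trees, every maximal proper infinite subtree has a sort in $S_{1I}$ and hence equals the corresponding $u_{s'}$. The proof combines two observations: since every node of $t$ sits at finite depth, if some infinite subtree of sort $s'$ could be replaced by a different infinite tree of $s'$, one would obtain a distinct infinite tree of $s$; and the ``finite skeleton'' of $t$, obtained by cutting at the maximal infinite subtrees, is finite, because it is finitely branching and an infinite branch would consist only of nodes carrying finite subtrees, which is impossible. Granting this, the infinite trees of $s$ are precisely the $\con g(r_1,\dots,r_n)$ where $\con g$ has some argument sort outside $S_{0I}$ and each $r_j$ is a finite or infinite tree of $s_j$, at least one of them infinite; counting these, $s$ has finitely many infinite trees iff there are finitely many such generators $\con g$ and, for each of them and each argument position $i$, either $s_i \in S_{0I}$, or $s_i$ has finitely many infinite trees and every other argument sort has only finitely many trees in total (which $S_{FF} \cap S_{FI}$ expresses). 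The verification splits on whether $\con g$ has one or at least two argument positions outside $S_{0I}$: in the first case the infinite trees of $s$ factor as $(s_i)_{\infin}$ times a product of finite sets, in the second case the guard forces all argument sorts into $S_{FF} \cap S_{FI}$. This presents $S_{FI}$ as the least fixed point above $S_{0I} \cup S_{1I}$ of the operator in the second loop, with $s_{\infin}$ built up along the iteration, and completes the proof. I expect the bookkeeping in the shape lemma, and in matching the exact form of $s_{\infin}$ emitted by the algorithm (recall $s_{\fin}$ is populated only for sorts in $S_{FF}$), to be the most delicate points.
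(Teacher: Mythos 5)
Your overall strategy coincides with the paper's: fixed-point characterizations proved by induction on tree depth for $S_{FF}$, $s_{\fin}$, $S_{1I}$ and $U_s$ (these parts of your proposal match \cref{lem-SFF} and \cref{lem-S1I} and are sound), plus a shape lemma for infinite trees as the key ingredient for $S_{FI}$. The gap is in your proof of the shape lemma itself, which you correctly single out as the main obstacle. The statement you propose to prove --- ``every maximal proper infinite subtree has a sort in $S_{1I}$'' --- is false under its natural reading (maximal $=$ an infinite immediate subtree of the root). Take $\mathit{bool}$, $\mathit{nat}$ and $t$ as in \cref{fig-signature} and add a sort $r$ with generators $\con{h1}: \mathit{bool} \to r$ and $\con{h2}: t \to r$. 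Then $r$ has exactly two infinite trees $\con{h2}(\con{g2}(b, u_{\mathit{nat}}))$ for $b \in \{\con{true},\con{false}\}$, yet the maximal proper infinite subtree $\con{g2}(b, u_{\mathit{nat}})$ has sort $t \notin S_{1I}$. What is actually needed, and what \cref{lem-shape-infinite-trees} asserts, is only that the tree is a finite term over the variables $u_{s'}$, i.e.\ that the $S_{1I}$-leaves sit at some finite but possibly deeper level.

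Moreover, your two observations do not suffice to prove even that weaker statement. The replacement argument only yields that every infinite subtree has a sort in $S_{FI}$, not in $S_{1I}$; and your justification that the skeleton is finite --- ``an infinite branch would consist only of nodes carrying finite subtrees'' --- is backwards, since every node on an infinite branch carries an infinite subtree. What must actually be excluded is an infinite descending chain of infinite subtrees whose sorts all lie in $S_{FI} \setminus S_{1I}$, and ruling this out is where the real work is. The paper does it by induction on $\#s_{\infin}$ using two counting arguments your sketch lacks: if the root generator of an infinite subtree has arity at least $2$, then, because every sort has at least two trees, passing to an infinite child at least halves the number of infinite trees, so this can happen only finitely often along a branch; and along a chain of unary generators, either some sort admits infinite trees with two distinct root symbols (which again strictly decreases the count) or the chain determines its infinite tree uniquely, forcing the sort into $S_{1I}$. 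Without these steps the descent into $S_{1I}$ is asserted rather than proved, and since \cref{lem-SFI} relies on the shape lemma to show that $S_{FI}$ is the least fixed point of the second loop's operator, the correctness of $S_{FI}$ and $s_{\infin}$ remains open in your argument.
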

\begin{proof}[Proof of \cref{thm-fin-ind}]
The correctness of the fixed point computation of $S_{FF}$ is implied by \cref{lem-SFF}.
An analogous argument verifies the computation of $s_{\fin}$, for $s \in S_{FF}$.
\Cref{lem-S1I} shows the correctness of the fixed point computation of $S_{1I}$ and the $U_s$.
For the correctness proof of the fixed point computation of $S_{FI}$, we need \cref{lem-shape-infinite-trees}, which states that the finitely many infinite trees of sorts $s \in S_{FI}$ are all built from the unique infinite trees $u_s$ with $s\in S_{1I}$.
Using this result, \cref{lem-SFI} proves the fixed point computation of $S_{FI}$ correct.
An analogous argument works for the sets $s_{\infin}$ for $s \in S_{FI}$.
\end{proof}

\begin{lemma}
\label{lem-SFF}
Let $S_{FF} \subseteq S$ be the set of sorts such that each $s \in S_{FF}$ has only finitely many finite trees.
Let $F_s^{\infin} = \{ (\con g: s_1 \times \cdots \times s_n \to s) \in F_s \mid \exists i \in \{ 1, \dots, n \}: s_i \in S_{0F} \}$ be the set of generators building only infinite trees.
Then $S_{FF}$ is the least fixed point of the following function $f: P(S) \to P(S)$:
\[
f(X) := X \cup S_{0F} \cup \{ s \in S \mid |F_s \setminus F_s^{\infin}| < \infty \land \forall (\con g: s_1 \times \cdots \times s_n \to s) \in (F_s \setminus F_s^{\infin}): \forall i \in \{ 1,\dots,n \}: s_i \in X \}
\]
where $S_{0F}$ denotes the set of sorts with only infinite trees.
\end{lemma}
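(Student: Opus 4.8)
The plan is to establish the three ingredients of a least-fixed-point characterisation: monotonicity of $f$, that $S_{FF}$ is a fixed point of $f$, and that $S_{FF}$ is contained in $\bigcup_{d\in\omega} f^d(\emptyset)$. Monotonicity is immediate since $X$ occurs only positively in the definition of $f$ (as a summand and in the test $s_i\in X$), so by Knaster--Tarski the least fixed point $\mu f$ exists, and the standard induction gives $f^d(\emptyset)\subseteq\mu f$ for all $d$. Once $S_{FF}$ is shown to be a fixed point we get $\mu f\subseteq S_{FF}$; once $S_{FF}\subseteq\bigcup_d f^d(\emptyset)$ is shown we get the reverse inclusion, hence $S_{FF}=\mu f$, and no further argument about whether this union already equals $\mu f$ is needed.

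For ``$S_{FF}$ is a fixed point'', $S_{FF}\subseteq f(S_{FF})$ is trivial, and the content is $f(S_{FF})\subseteq S_{FF}$. The key observation is that the root of every finite tree of sort $s$ is labelled by a generator in $F_s\setminus F_s^{\infin}$: a generator in $F_s^{\infin}$ has an argument sort in $S_{0F}$, and the corresponding subtree of any finite tree would then be a finite tree of that sort, which is impossible. Hence if $|F_s\setminus F_s^{\infin}|<\infty$ and every argument sort of every generator in $F_s\setminus F_s^{\infin}$ lies in $S_{FF}$, then every finite tree of sort $s$ has the form $\con g(r_1,\dots,r_n)$ with $\con g\in F_s\setminus F_s^{\infin}$ and each $r_i$ one of the finitely many finite trees of its sort, so there are only finitely many; together with the trivial $S_{0F}\subseteq S_{FF}$ this gives $f(S_{FF})\subseteq S_{FF}$.

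For ``$S_{FF}\subseteq\bigcup_d f^d(\emptyset)$'' I would first record the auxiliary fact that $s\in S_{FF}$ forces $|F_s\setminus F_s^{\infin}|<\infty$: every $\con g\in F_s\setminus F_s^{\infin}$ has all argument sorts outside $S_{0F}$, so each such sort has a finite tree and $\con g$ occurs as the root of a finite tree of sort $s$; distinct generators give distinct such trees, so an infinite $F_s\setminus F_s^{\infin}$ would contradict $s\in S_{FF}$. Then for $s\in S_{0F}$ we have $s\in f(\emptyset)$ directly, and for $s\in S_{FF}\setminus S_{0F}$ I would induct on the maximal depth $d$ of a finite tree of sort $s$ (well defined and finite), mirroring the induction in the proof of \cref{theorem-only-fin}: given $\con g: s_1\times\cdots\times s_n\to s$ in $F_s\setminus F_s^{\infin}$ and an index $i$, letting $r$ be a maximal-depth finite tree of sort $s_i$ and substituting it into the $i$-th slot of $\con g$ (with arbitrary finite trees, which exist, in the others) yields a finite tree of sort $s$ of depth at least $1+\depth(r)$, so the maximal depth of a finite tree of sort $s_i$ is strictly below $d$; moreover $s_i\in S_{FF}$ since every finite tree of sort $s_i$ extends to one of sort $s$. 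By the induction hypothesis (or directly if $s_i\in S_{0F}$) we get $s_i\in f^d(\emptyset)$, and combining over the finite set $F_s\setminus F_s^{\infin}$ yields $s\in f^{d+1}(\emptyset)$.

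I expect the depth induction to be the main obstacle: one has to handle the degenerate cases cleanly (constants, which make the ``for all arguments'' condition vacuous and give the base case $d=0$; and sorts having no non-constant generator in $F_s\setminus F_s^{\infin}$), and one must make sure the auxiliary claim $|F_s\setminus F_s^{\infin}|<\infty$ is in place before it is used in the induction step. The analogue for $s_{\fin}$, the set of terms denoting the finite trees of sort $s$ for $s\in S_{FF}$, then follows by the same induction, simply recording the terms $\con g(r_1,\dots,r_n)$ built along the way.
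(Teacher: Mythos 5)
Your proposal is correct and follows essentially the same route as the paper's proof: monotonicity plus Knaster--Tarski, verification that $S_{FF}$ is a fixed point via the observation that every finite tree's root lies in $F_s \setminus F_s^{\infin}$, and an induction on the maximal depth of finite trees to show $S_{FF} = \bigcup_{d\in\omega} f^{d+1}(\emptyset)$. You merely spell out details the paper leaves implicit (notably that $s \in S_{FF}$ already forces $|F_s \setminus F_s^{\infin}| < \infty$, and the bookkeeping in the depth induction), which is a faithful elaboration rather than a different argument.
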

\begin{proof}
Note that $f$ is monotonic, so the least fixed point is guaranteed to exist by the Knaster-Tarski theorem.
Why is $S_{FF}$ a fixed point?
First, $F_s^{\infin}$ are generators that can only construct infinite trees.
So any other generator (in $F_s \setminus F_s^{\infin}$) can construct at least one finite tree.
For there to be only finitely many finite trees in $s$, there have to be finitely many of the latter generators and for each such generator, each parameter sort $s_i$ must have only finitely many finite trees.
This explains the definition of the function $f$.

In fact, similarly to the proof of \cref{theorem-only-fin}, it is easy to see inductively that if a sort $s$ has finitely many finite trees of depth at most $d$ then $s \in f^{d+1}(\emptyset)$.
Hence $S_{FF} = \bigcup_{d \in \omega} f^{d+1}(\emptyset)$, in other words, $S_{FF}$ is in fact the least fixed point of $f$.
\end{proof}

\begin{lemma}
\label{lem-S1I}
Let $S_{1I} \subseteq S$ be the set of sorts such that each $s \in S_{1I}$ has exactly one infinite tree.
Then $S_{1I}$ is the greatest fixed point of the following function $f: P(S \setminus S_{0I}) \to P(S \setminus S_{0I})$:
\[
f(X) := \{s \in X \mid \exists (\con g: s_1 \to s) \in F_s: s_1 \in X \land (\forall (\con{g'}: s_1' \times \cdots \times s_n' \to s) \in F_s \setminus \{\con g\}: \forall i \in \{ 1,\dots,n \}: s_i' \in S_{0I}) \}
\]
Furthermore, the equations $U_s$, for $s \in S_{1I}$, that uniquely determine the unique infinite inhabitant $u_s$ of $s \in S_{1I}$ are given by the least fixed point of the function $f'$, a mapping between families of sets of equations, indexed by $s \in S_{1I}$, which for each such $s$, is given by
\[
(f'((X_s)_{s\in S_{1I}}))_{s} = \{ u_s = {\con g}_s(u_{s_1}) \} \cup X_{s_1}
\]
where ${\con g}_s: s_1 \to s$ is the unique generator with $s_1 \in S_{1I}$.
\end{lemma}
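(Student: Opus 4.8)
The plan is to prove the two assertions separately: that $S_{1I}$ equals the greatest fixed point $\nu f$ of $f$, and that the least fixed point of $f'$ produces the equation sets $U_s$. For the first, I would first check that $f$ is monotone on the complete lattice $P(S\setminus S_{0I})$ --- membership $s\in f(X)$ requires $s\in X$ together with a unary generator $\con g:s_1\to s$ with $s_1\in X$ whose siblings take only arguments in $S_{0I}$, a condition that only relaxes as $X$ grows --- so $\nu f$ exists by Knaster--Tarski. To get $S_{1I}\subseteq\nu f$ I would show $S_{1I}\subseteq f(S_{1I})$. The ingredients are: every sort has at least two distinct trees (it has at least two generators and all sorts are non-empty); if $s\in S_{1I}$ with unique infinite tree $t=\con g(\bar t)$ then $\con g$ must be unary, for otherwise one could vary some other argument of $\con g$ and obtain a second infinite tree of $s$; its argument sort $s_1$ then lies in $S_{1I}$ (two infinite trees of $s_1$ would give two of $s$); and every generator of $s$ other than $\con g$ builds only finite trees, i.e.\ has all argument sorts in $S_{0I}$ (using correctness of $S_{0I}$, \cref{thm-zero-sorts}). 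Since also $s\notin S_{0I}$, this is precisely the defining condition of $f(S_{1I})$.

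For the converse $\nu f\subseteq S_{1I}$ I would argue directly. Fix $s\in\nu f$; then $s\notin S_{0I}$, so $s$ has at least one infinite tree. From $s\in f(\nu f)$ we obtain a unary generator $\con g_s:s_1\to s$ with $s_1\in\nu f$ all of whose siblings take only $S_{0I}$-arguments, and this $\con g_s$ is unique: a second such generator would be forced by the fixed-point condition to take only $S_{0I}$-arguments, contradicting that its argument lies in $\nu f\subseteq S\setminus S_{0I}$. A tree rooted at any generator other than $\con g_s$ has all immediate subtrees of sorts in $S_{0I}$ and is hence finite, so every infinite tree of $s$ equals $\con g_s(t_1)$ with $t_1$ an infinite tree of sort $s_1\in\nu f$; iterating, it is forced to be $\con g_{s^{(0)}}(\con g_{s^{(1)}}(\con g_{s^{(2)}}(\cdots)))$ along the chain $s=s^{(0)},s^{(1)},\dots$ in which $s^{(k+1)}$ is the argument sort of $\con g_{s^{(k)}}$ --- a chain that depends only on $s$, not on the tree. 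Hence $s$ has exactly one infinite tree, so $s\in S_{1I}$, and therefore $S_{1I}=\nu f$.

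For the second assertion, $f'$ is well defined (for $s\in S_{1I}$ the witnessing generator $\con g_s:s_1\to s$ from the previous paragraph is unique and $s_1\in S_{1I}$) and monotone for componentwise inclusion on families of sets of equations indexed by $S_{1I}$, so its least fixed point exists; unfolding the iteration from the empty family gives $(\mathrm{lfp}\,f')_s=\{\,u_{s^{(k)}}=\con g_{s^{(k)}}(u_{s^{(k+1)}})\mid k\ge 0\,\}$, which is finite because $S$ is (the chain $s^{(0)},s^{(1)},\dots$ is eventually periodic). It remains to see this system determines $u_s$ to be the unique infinite tree of $s$: each $u_{s^{(k)}}$ occurs on exactly one left-hand side and all right-hand sides are flat, so by the unique-solution property of the tree structure the system has a unique solution, assigning to $u_{s^{(k)}}$ the tree $\con g_{s^{(k)}}(\con g_{s^{(k+1)}}(\cdots))$; this is infinite (the $\con g_{s^{(k)}}$ are non-constant) and of sort $s^{(k)}$, hence by the first part it is the unique infinite tree of $s^{(k)}$. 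Setting $U_s:=(\mathrm{lfp}\,f')_s$ completes the proof.

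The main obstacle I expect is the inclusion $\nu f\subseteq S_{1I}$. Unlike \cref{theorem-only-fin}, uniqueness of the infinite tree is genuinely a \emph{coinductive} property and cannot be extracted by induction on tree depth, since the sort chain may be cyclic (as for $\mathit{nat}$); one must instead argue that membership in $\nu f$ rigidly forces the whole spine of any infinite tree of $s$, which in turn relies on the uniqueness of the witnessing unary generator $\con g_s$. A recurring minor point is that every sort has at least two trees, which follows from the standing assumption of at least two generators per sort together with non-emptiness of all sorts.
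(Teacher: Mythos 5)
Your proof is correct, but it takes a genuinely different route from the paper's on the key inclusion. The forward direction ($S_{1I}$ is a post-fixed point of $f$) is the same argument in both: the unique infinite tree forces a unary generator whose argument sort is again in $S_{1I}$, and forces every sibling generator to have all argument sorts in $S_{0I}$. For the converse, the paper shows that any sort in $S\setminus S_{0I}$ with two distinct infinite trees is removed from $f^{d+1}(S\setminus S_{0I})$, where $d$ is the finite depth at which the two trees first differ, by induction on $d$, and then identifies $\nu f$ with $\bigcap_{d} f^{d+1}(S\setminus S_{0I})$. You instead argue directly and coinductively that membership in $\nu f$ rigidly determines the spine of any infinite tree of $s$ (via uniqueness of the witnessing generator $\con{g}_s$), so $s$ has exactly one. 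Both arguments are sound; yours has the advantage of working with the greatest fixed point itself rather than the limit of the descending iteration (whose coincidence with $\nu f$ the paper leaves implicit, harmless for finite $S$), at the cost of the extra uniqueness-of-$\con{g}_s$ step. One small correction to your closing remark: it is too strong to say a depth induction \emph{cannot} establish uniqueness --- the paper's induction is on the depth of the first discrepancy between two distinct infinite trees, which is finite even when the trees and the sort chain are not. For the $U_s$ part, your argument via the Unique Solution Axiom is more detailed than the paper's, which only observes that the $U_s$ form a fixed point of $f'$ and asserts minimality without spelling out why the resulting system pins down $u_s$ as the unique infinite tree.
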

\begin{proof}
Note that $f$ is monotonic, so the greatest fixed point is guaranteed to exist by the Knaster-Tarski theorem.
Why is $S_{1I}$ a fixed point?
If $s$ has a unique infinite tree then it must start with some generator $\con g \in F_s$.
If $\con g$ had more than one parameter then the choice of the other parameter would create at least two infinite inhabitants, contradiction.
So $\con g$ has only one parameter.
Furthermore every other generator $\con{g'}$ can only create finite trees because otherwise we would lose uniqueness of the infinite tree.
The function $f$ removes all sorts from $X$ that do not satisfy these criteria.
Hence $S_{1I}$ is a fixed point.

Conversely, if a sort $s \in S \setminus S_{0I}$ has two distinct infinite trees then they have to differ at some finite depth $d$.
We claim that $s \notin f^{d+1}(S \setminus S_{0I})$ because $f$ removes sorts that have more than one infinite inhabitant and if those two inhabitants differ at depth $d$, this is detected after at most $d + 1$ applications of $f$.
This can be proved by induction, similarly to the proof of \cref{theorem-only-fin}.
Hence $S_{1I} = \bigcap_{d \in \omega} f^{d+1}(S \setminus S_{0I})$, in other words, $S_{1I}$ is the greatest fixed point of $f$.

Why is $U_s$ a fixed point of $f'$?
The equation $u_s = {\con g}_s(u_{s_1})$ must be true by the above arguments.
In order to describe $u_{s_1}$ uniquely, we need the equations $U_{s_1}$ as well.
Thus the $U_s$ are a fixed point of $f'$.
They are, in fact, the least fixed point because we are interested in the smallest set of equations describing the $u_s$.
\end{proof}

\begin{lemma}
\label{lem-shape-infinite-trees}
Let $s$ be a sort with at least one but only finitely many infinite trees.
Then each infinite tree of $s$ can be described by a term containing only variables $u_s: s$ with $s \in S_{1I}$, each representing the unique infinite tree of sort $s$.
\end{lemma}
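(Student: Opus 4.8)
The plan is to reason about the finite set $A := \bigcup_{s' \in S_{FI}} (s')_{\infin}$ of all infinite trees whose sort lies in $S_{FI}$; each $(s')_{\infin}$ is finite by definition of $S_{FI}$, so $A$ is finite. The tree $t$ of the statement is an infinite tree of a sort that is in $S_{FI}$ (it has at least one, but only finitely many, infinite trees), so $t \in A$, and it suffices to show that every $t' \in A$ equals a finite term built from generators whose free variables are among the $u_{s'}$ with $s' \in S_{1I}$, each $u_{s'}$ being interpreted as the unique infinite tree of $s'$.

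First I would establish a \emph{locality} property: if $t' \in A$ and $p$ is an infinite subtree of $t'$, then the sort $\sigma$ of $p$ lies in $S_{FI}$, hence $p \in A$. Indeed, were $\sigma \notin S_{FI}$ it would have infinitely many infinite trees, and replacing $p$ by each of them inside $t'$ would produce infinitely many pairwise distinct infinite trees of the sort of $t'$, contradicting that this sort is in $S_{FI}$. Consequently each $t' \in A$ has only finitely many distinct infinite subtrees, all of them in $A$.

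The heart of the proof is a well-founded induction on $t' \in A$, taking as measure the number of distinct infinite subtrees of $t'$. If the sort $s'$ of $t'$ is in $S_{1I}$, then $t'$ must be the unique infinite tree of $s'$, so $t' = u_{s'}$ and the required term is this single variable. Otherwise write $t' = \con g(t'_1, \dots, t'_m)$; each finite $t'_i$ is a closed constructor term, and each infinite $t'_i$ is, by locality, an element of $A$ that is a proper subtree of $t'$, so the induction hypothesis (or the $S_{1I}$ case) yields a term $\tau_i$ with $t'_i = \tau_i$, whence $t' = \con g(\tau_1, \dots, \tau_m)$ is the required term. For this induction to be legitimate the measure must strictly decrease when passing from $t'$ (with sort not in $S_{1I}$) to an infinite child $t'_i$; since the infinite subtrees of $t'_i$ are among those of $t'$, strictness can fail only if $t'$ is itself a subtree of $t'_i$, i.e.\ $t'$ is a proper subtree of itself.

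The main obstacle is therefore the claim: \emph{if $t' \in A$ is a proper subtree of itself, then its sort $s'$ lies in $S_{1I}$}. Write $t' = C[t']$ where $C$ is a one-hole term built from generators, of depth $d \geq 1$. First, every generator of $C$ on the path to the hole must be unary: otherwise some sibling argument has a sort $\tau$, and since every sort has at least two generators, $\tau$ has at least two distinct trees; unfolding $C$ infinitely while varying that sibling argument at each level then yields infinitely many distinct infinite trees of $s'$, contradicting $S_{FI}$. Hence $C$ is an iterated unary context cycling through sorts $\sigma_0 = s', \sigma_1, \dots, \sigma_{d-1}$, all of which are in $S_{FI}$ by locality, and $t'$ is the unique solution of the contractive equation $x = C[x]$. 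Next, for each $i$ the only generator of $\sigma_i$ that can build an infinite tree is the corresponding unary spine generator $\con g_i$: otherwise, prepending the remaining spine generators around the cycle to the $k$-fold unfoldings ($k = 1, 2, \dots$) of such an offending infinite tree produces infinitely many distinct infinite trees of $\sigma_i$, again contradicting $S_{FI}$. It follows that any infinite tree $q$ of sort $s'$ must factor repeatedly through $C$, so $q = C[C[C[\cdots]]] = t'$; thus $s'$ has exactly one infinite tree and lies in $S_{1I}$, as claimed. Along the way I would invoke the routine facts that every sort has at least two distinct trees, that a contractive tree equation has a unique solution in $\mathcal T$, and that making a nontrivial choice at each of infinitely many levels yields infinitely many distinct trees.
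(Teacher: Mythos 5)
Your proof is correct, but it takes a genuinely different route from the paper's. The paper proves \cref{lem-shape-infinite-trees} by induction on the number $\#s_{\infin}$ of infinite trees \emph{of the sort}: for a root generator of arity $\geq 2$ it bounds the number of infinite inhabitants of an argument sort by $\#s_{\infin}/2$ using the two-generator assumption, and for unary roots it ``keeps repeating the argument'' along a chain of unary symbols, ending in a three-way case split whose last case is exactly your self-similar spine. You instead induct on the number of distinct infinite subtrees of the given \emph{tree} --- finite by your locality observation that every infinite subtree of a tree of an $S_{FI}$-sort again has an $S_{FI}$-sort --- and you isolate the only obstruction to that measure decreasing as a clean standalone claim: a tree in $A$ that is a proper subtree of itself forces its sort into $S_{1I}$. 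This buys a more transparently well-founded induction (the paper's ``if we keep repeating this argument'' is left informal), while your analysis of the spine (all spine generators unary because a sibling argument could be varied at infinitely many levels; no competing infinite-tree-building generator at any spine sort because plugging it in after $k$ turns around the cycle gives distinct trees for each $k$) is essentially the paper's Cases 1--3 repackaged as a statement about self-similarity. Both arguments lean on the same two standing assumptions --- every sort has at least two generators (hence at least two distinct trees), and recursive equations have unique tree solutions --- so neither is more general; yours is somewhat longer but tighter in the places where the paper hand-waves.
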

\begin{proof}
Proof by induction on the number $\#s_{\infin}$ of infinite trees of sort $s$.
If $\#s_{\infin} = 1$ then $s$ has a unique infinite tree represented by $u_s$ and the statement is trivial.
Hence suppose $\#s_{\infin} \geq 2$.
Let $a = \con f(b_1,\dots,b_n)$ be an infinite tree with subtrees $b_1 : s_1, \dots, b_n : s_n$.

Suppose $n \ge 2$.
Then without loss of generality, assume that $b_1$ is an infinite subtree.
Since $s_2$ has at least two generators, the number of infinite inhabitants of $s_1$ is at most $\#s_{\infin} / 2 < \#s_{\infin}$.
By induction hypothesis, $b_1$ has the desired form.
The same argument works for other infinite subtrees of $a$.
For each finite subtree, there is a ground term describing it.
Hence $a$ has the desired form.

Next, suppose $n = 1$, i.e. $a = \con f(a_1)$.
If $s$ has another infinite tree $a' = \con{f'}(a_1')$ starting with a different function symbol $\con{f'}$ then the sort of $a_1$ has less than $\#s_{\infin}$ infinite trees, and the induction hypothesis gives us the desired form for $a_1$ and thus for $a$.
Otherwise, all infinite trees of $s$ start with the same function symbol $\con f$.
We can apply the same argument to $a_1$ and see that we can either proceed as above or all the infinite trees of the sort of $a_1$ must have the form $a_1 = \con f_1(a_2)$.
If we keep repeating this argument, there are three cases.

\emph{Case 1.} There are $b_1,\dots,b_n$ with $n \geq 2$ such that $a = \con f(\con f_1(\dots \con f_m(b_1,\dots,b_n) \dots ))$.
Then the first argument from above provides a term for $a$.

\emph{Case 2.} There is an $a'$ of sort $s'$ such that $a = \con f(\con f_1(\dots \con f_m(a') \dots ))$ and $s'$ has two infinite trees starting with different generators.
Then the second argument from above provides a term for $a$.

\emph{Case 3.} There is no such $a'$, meaning that the tree $a$ is uniquely determined, as an infinite path of unary function symbols.
But then $s$ only contains one infinite tree, contradiction.
So this case cannot occur.
\end{proof}

\begin{lemma}
\label{lem-SFI}
Let $S_{FI} \subseteq S$ be the set of sorts such that each $s \in S_{FI}$ has only finitely many infinite trees.
Let $F_s^{\infin} = \{ (\con g: s_1 \times \cdots \times s_n \to s) \mid \exists i \in \{1,\dots,n\}: s_i \notin S_{0I} \}$ be the set of generators that can construct infinite trees.
Then $S_{FI}$ is the least fixed point of the following function $f: P(S) \to P(S)$:
\begin{align*} f(X) := X &\cup S_{0I} \cup S_{1I} \cup \{ s \in S \mid |F_s^{\infin}| < \infty \land \forall \con g: s_1 \times \cdots \times s_n \to s \in F_s: \\
&\forall i \in \{1,\dots,n\}: s_i \in S_{0I} \lor (s_i \in X \land (\forall j \in \{1,\dots,n\} \setminus \{i\}: s_j \in S_{FF} \cap X)) \}.
\end{align*}
\end{lemma}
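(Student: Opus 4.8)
The plan is to follow the pattern of the proofs of \cref{theorem-only-fin} and \cref{lem-SFF}. First I observe that $f$ is monotonic: the ingredients $S_{0I}$, $S_{1I}$ and $S_{FF}$ have already been computed and act as constants here, and the remaining dependence on $X$ occurs only positively, so the least fixed point exists by Knaster-Tarski. I then show that $S_{FI}$ is a fixed point of $f$, and finally that it is the \emph{least} one. Throughout I will use the auxiliary fact that a finitely generated sort with no infinite tree has only finitely many (finite) trees, i.e.\ $S_{0I} \subseteq S_{FF}$; this follows from König's lemma, since finite branching together with finite trees of unbounded depth would force an infinite tree.

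\textbf{$S_{FI}$ is a fixed point.} The inclusions $S_{0I} \subseteq S_{FI}$ and $S_{1I} \subseteq S_{FI}$ are immediate (zero, respectively one, infinite tree). For $f(S_{FI}) \subseteq S_{FI}$, take $s$ satisfying the displayed condition with $X = S_{FI}$. Every infinite tree of $s$ has its root in $F_s^{\infin}$ (a generator outside $F_s^{\infin}$ has all argument sorts in $S_{0I}$, hence only finite subtrees), and $|F_s^{\infin}| < \infty$, so it suffices to bound, for each $\con g : s_1 \times \cdots \times s_n \to s$ in $F_s^{\infin}$, the infinite trees $\con g(t_1,\dots,t_n)$. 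If a single position $i_0$ has $s_{i_0} \notin S_{0I}$, then only $t_{i_0}$ can be infinite and it ranges over the finite set $(s_{i_0})_{\infin}$, while each other $t_j$ ranges over $s_j^{\mathcal T}$ with $s_j \in S_{0I} \subseteq S_{FF}$, a finite set; if at least two positions are non-$S_{0I}$, the condition forces all argument sorts into $S_{FF}\cap S_{FI}$ (again together with $S_{0I}\subseteq S_{FF}$), so each $t_j$ has finitely many choices. Either way there are finitely many such trees, so $s \in S_{FI}$. Conversely, if $s \in S_{FI}$ then $|F_s^{\infin}|<\infty$ because distinct generators in $F_s^{\infin}$ produce distinct (differently rooted) infinite trees and each produces at least one, and the per-position condition must hold: otherwise an infinite-carrying argument sort not in $S_{FI}$, or a companion argument sort with infinitely many trees, would let one assemble infinitely many infinite trees of $s$ by varying that one argument. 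Hence $f(S_{FI}) = S_{FI}$.

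\textbf{$S_{FI}$ is the least fixed point.} I will show $S_{FI} = \bigcup_{k} f^k(\emptyset)$, noting $f(\emptyset) \supseteq S_{0I}\cup S_{1I}$. The key tool is \cref{lem-shape-infinite-trees}: every infinite tree of a sort in $S_{FI}$ is obtained by substituting the unique infinite trees $u_{s'}$ (for $s'\in S_{1I}$) into a finite template term, and since such a sort has only finitely many infinite trees there is a uniform bound $d(s)$ on the depth of these templates. One then proves by induction on $d$ that if every infinite tree of $s\in S_{FI}$ has template depth at most $d$ then $s \in f^{d+1}(\emptyset)$ (the base case $d=0$ being $s\in S_{0I}\cup S_{1I}\subseteq f^1(\emptyset)$). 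For the step, peel off the root generator $\con g \in F_s^{\infin}$: for any argument sort $s_i$ with $s_i\notin S_{0I}$, plugging an arbitrary infinite tree of $s_i$ into position $i$ and fixing the other arguments yields an infinite tree of $s$, whose template depth is at least one more than that of the chosen $s_i$-tree; hence all infinite trees of $s_i$ have template depth $\le d-1$ and $s_i \in f^{d}(\emptyset)$ by the induction hypothesis, while $s_i \in S_{FF}$ when needed follows from the fixed-point characterization of the previous step. Therefore $s$ satisfies the displayed condition with $X = f^{d}(\emptyset)$, so $s\in f^{d+1}(\emptyset)$. Reading the terms off the same templates verifies the fixed-point iteration for $s_{\infin}$ in \cref{alg-fin-ind} in exactly the same way.

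\textbf{Main obstacle.} The delicate part is minimality rather than the fixed-point property. In the purely finite cases one inducts on ``the maximal depth of a tree of the sort'', but infinite trees have no depth, so the whole argument rests on first establishing \cref{lem-shape-infinite-trees} and then distilling from it a well-founded measure, namely the template depth. A secondary technical nuisance is keeping the already-computed sets $S_{0I}$, $S_{1I}$, $S_{FF}$ cleanly separate from the set $S_{FI}$ that the iteration is building up, and invoking $S_{0I}\subseteq S_{FF}$ at precisely the points where an $S_{0I}$-valued argument position must be known to admit only finitely many fillings.
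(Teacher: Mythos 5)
Your proof follows the paper's argument essentially step for step: monotonicity plus Knaster--Tarski, the same case analysis to verify that $S_{FI}$ is a fixed point (including the ``negation'' argument for the converse direction), and minimality via \cref{lem-shape-infinite-trees} together with an induction on the depth of the describing terms. The one real problem is your auxiliary claim $S_{0I} \subseteq S_{FF}$: this is false in the paper's setting, because sorts are still permitted to have infinitely many generators (which is precisely why the lemma tests $|F_s^{\infin}| < \infty$), and a sort with infinitely many constant generators lies in $S_{0I}$ but not in $S_{FF}$; your K\"onig's-lemma justification silently assumes finite branching at every level. Fortunately you never actually need this inclusion: in the ``single infinite-carrying position $i_0$'' case, instantiating the displayed condition at $i = i_0$ already forces $s_j \in S_{FF} \cap X$ for every $j \neq i_0$, including those positions with $s_j \in S_{0I}$, and in the ``two or more positions'' case the condition likewise places every argument sort in $S_{FF} \cap X$ directly. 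Delete the appeal to $S_{0I} \subseteq S_{FF}$ and read the $S_{FF}$-membership off the hypothesis instead; with that repair the proof is correct and coincides with the paper's.
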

\begin{proof}
Note that $f$ is monotonic, so the least fixed point is guaranteed to exist by the Knaster-Tarski theorem.
Why is $S_{FI}$ a fixed point?
First of all, it is clear that $S_{0I} \cup S_{1I} \subseteq S_{FI}$.
Furthermore, for the sort $s$ to have finitely many infinite trees, there have to be finitely many generators $F_s^{\infin}$ that can construct infinite trees.
Additionally, for each such generator $\con g: s_1 \times \cdots \times s_n \to s \in F_s^{\infin}$, there have to be finitely many infinite trees starting with $\con g$.

It is easier to describe the negation of this: If a generator $\con g$ starts infinitely many infinite trees, there must be a parameter $i$ such that $s_i$ contains infinite trees and one of the following: (1) $s_i$ containing infinitely many infinite trees or (2) one of the other $s_j$ containing infinitely many trees.
In either case, this leads to infinitely many infinite trees starting with $\con g$. This can be formulated as
\[ \exists i \in \{1,\dots,n\}: s_i \notin S_{0I} \land (s_i \notin S_{FI} \lor \exists j \in \{1,\dots,n\} \setminus \{i\}: s_j \notin S_{FF} \cap S_{FI}). \]
The negation of this is what is written in the above function definition.
Hence $S_{FI}$ is a fixed point of $f$.

Next, we show that $S_{FI}$ is the least fixed point.
Let $s$ be a sort with finitely many infinite trees and $a$ such a tree.
By \cref{lem-shape-infinite-trees}, there is a term $t_a$ describing $a$, containing only variables $u_i: s_i$ representing the unique infinite tree of $s_i$.
We always choose $t_a$ to be of minimal depth among those terms.
By definition of $f$, each $s_i \in f(\emptyset)$.
Let $a$ be the infinite tree in $S_{FI}$ such that its corresponding $t_a$ has maximal depth $d$.
Then one can see inductively, as in the proof of \cref{theorem-only-fin}, that $s \in f^{d+1}(\emptyset)$.
Hence $S_{FI} = \bigcup_{d \in \omega} f^{d+1}(\emptyset)$, in other words, $S_{FI}$ is in fact the least fixed point of $f$.
\end{proof}

\section{Supplementary material for Section \ref{sec-solver}}

Throughout this section, we assume the variable convention that bound variables of terms occurring in a certain mathematical context (like definitions and proofs) are assumed to be distinct and different from the free variables.
Furthermore, when talking about reachability in a formula $\exists \bar x. \alpha$ where $\alpha$ is a basic formula, we mean reachability in $\alpha$ from the free variables of the whole formula.

We are also going to need the \emph{Unique Solution Axiom} \cite[Axiom 3 in Section 3.2]{ddf08}, which states that for any sequence of distinct variables $\bar z$ and non-variable terms $t_i$ containing only the variables $\bar x$ and $\bar z$, we have
\[ \forall \bar x \ldotp \exists! \bar z \ldotp \bigwedge_i z_i = t_i \]
in the extended theory of trees.
This is proved in \cite[Theorem 3.3.1]{ddf08}.

\begin{algorithm}
\caption{Algorithm for solving a basic formula $\alpha$ with free variables $v_0 < \dots < v_n$.
The rules 1--10 are taken from \cite{ddf08}.
The two rules in blue at the end are new.}
\label{alg-solve-basic}
\begin{algorithmic}
\Function{solveBasic}{$(v_0, \dots, v_n), \alpha$}
\State let $<$ be the ordering where $v_0 < \cdots < v_n$
\Repeat
\If{$\alpha$ is $u = u \land \alpha'$} $\alpha \gets \alpha'$ \Comment{Rule 1 (numbering as in \cite[Section 4.6]{ddf08})} \EndIf
\If{$\alpha$ is $u = v \land \alpha'$ and $u < v$} $\alpha \gets v = u \land \alpha'$ \Comment{Rule 2} \EndIf
\If{$\alpha$ is $v = u \land v = t \land \alpha'$ and $u < v$} $\alpha \gets v = u \land u = t \land \alpha'$ \Comment{Rule 3} \EndIf
\If{$\alpha$ is $u = \con f(\bar y) \land u = \con g(\bar z) \land \alpha'$ and $\con f \not\equiv \con g$} \Return false \Comment{Rule 4} \EndIf
\If{$\alpha$ is $u = \con f(\bar y) \land u = \con f(\bar z) \land \alpha'$} $\alpha \gets u = \con f(\bar y) \land \overline{y = z} \land \alpha'$ \Comment{Rule 5} \EndIf
\Until{no changes in the last iteration}
\Repeat
\If{$\alpha$ is $\fin(u) \land \fin(u) \land \alpha'$} $\alpha \gets \fin(u) \land \alpha'$ \Comment{Rule 7} \EndIf
\If{$\alpha$ is $v = u \land \fin(v) \land \alpha'$ and $u < v$} $\alpha \gets v = u \land \fin(u) \land \alpha'$ \Comment{Rule 8} \EndIf
\If{$\alpha$ is $\fin(u) \land \alpha'$ and $u$ is properly reachable from $u$} \Return false \Comment{Rule 9} \EndIf
\If{$\alpha$ is $u = \con f(\bar y) \land \fin(u) \land \alpha'$} $\alpha \gets u = \con f(\bar y) \land \overline{\fin(y)} \land \alpha'$ \Comment{Rule 10} \EndIf
\color{blue}
\If{$\alpha$ is $\fin(u) \land \alpha'$ and $u:s$ with $s \in S_{0I}$} $\alpha \gets \alpha'$ \Comment{(*)} \EndIf
\If{$\alpha$ is $\fin(u) \land \alpha'$ and $u:s$ with $s \in S_{0F}$} \Return false \Comment{(*)} \EndIf
\color{black}
\Until{no changes in the last iteration}
\State \Return $\alpha$
\EndFunction
\end{algorithmic}
\end{algorithm}

\begin{theorem*}[\cref{thm-solve-basic}, repeated]
The function $\Call{solveBasic}{(v_0, \cdots, v_n), \alpha}$ from \cref{alg-solve-basic} correctly solves basic formulae $\alpha$ containing the variables $v_0, \cdots, v_n$, i.e.\ it turns $\alpha$ into an equivalent solved formula (with respect to the variable ordering $v_0 < \cdots < v_n$) or returns $\false$ if none exists.
\end{theorem*}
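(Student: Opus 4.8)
The plan is to establish three properties of the function \textsc{solveBasic}: that each individual rewrite step preserves equivalence in the extended theory of trees (soundness); that both \textbf{repeat}-loops terminate; and that whenever the procedure returns a formula rather than $\false$, that formula is solved with respect to $v_0 < \cdots < v_n$. Combining these with the remark that every solved basic formula is satisfiable yields the ``or returns $\false$ if none exists'' clause.

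\emph{Soundness.} Rules 1--10 are taken verbatim from \cite[Section 4.6]{ddf08}, and their soundness is a purely syntactic matter insensitive to the number of generators: rules 1--5 are the usual unification steps (rule 4 records a constructor clash, rule 5 decomposes), while rules 7--10 use only that $\fin$ is invariant under equality (rules 7, 8), that $\con f(\bar y)$ is finite iff every $y_i$ is (rule 10), and that a variable properly reachable from itself is pinned down to an infinite tree by the Unique Solution Axiom (rule 9). Hence these steps remain correct in our setting. For the two new rules, \cref{thm-zero-sorts} is precisely what is needed: if $u:s$ with $s \in S_{0I}$ then $s^{\mathcal T}$ has no infinite trees, so $\fin(u)$ holds under every valuation and may be dropped; and if $s \in S_{0F}$ then $s^{\mathcal T}$ has no finite trees, so $\fin(u)$ is unsatisfiable and returning $\false$ is correct. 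Thus every step preserves equivalence, with $\false$ read as ``unsatisfiable''.

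\emph{Termination.} I would take the well-founded measure used in \cite[Section 4.6]{ddf08} for rules 1--10 and check that the two new rules are compatible with it: the $S_{0I}$-rule deletes a $\fin$-atom without enlarging any term or introducing equations, and the $S_{0F}$-rule halts at once, so the measure still strictly decreases on every step. Since the two loops run sequentially and the second never touches the equational part, it suffices that each loop terminates individually. The only delicate point---that rule 10 does not diverge by pushing a $\fin$-atom around a cycle of equations---is handled, exactly as in \cite{ddf08}: rule 9 fires on any such cycle (the offending variable being properly reachable from itself) and immediately returns $\false$, while along an acyclic chain rule 10 applies only finitely often.

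\emph{Shape of the output, and the main obstacle.} Suppose the procedure returns $\alpha'$. Since the first loop has reached a fixed point, no instance of rules 1--5 occurs in $\alpha'$, which forces the left-hand sides of its equations to be pairwise distinct and every variable--variable equation $x = y$ to satisfy $x > y$; this is clause (1) of \cref{def-basic}. Since the second loop has also reached a fixed point, rules 8 and 10 ensure that no variable carrying a $\fin$-atom is a left-hand side, rule 7 ensures the $\fin$-atoms sit on distinct variables, and---the new ingredient---the two new rules ensure that for every surviving $\fin(v)$ with $v:s$ we have $s \notin S_{0I} \cup S_{0F}$, i.e.\ $s$ has both a finite and an infinite tree; this is clause (2). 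So $\alpha'$ is solved. Finally, a solved basic formula is satisfiable: its equations determine the left-hand variables uniquely from the rest by the Unique Solution Axiom, and each variable of a $\fin$-atom can be set to a finite tree of its sort, which exists because that sort is not in $S_{0F}$; hence, by soundness, the procedure returns $\false$ precisely when $\alpha$ is unsatisfiable, i.e.\ precisely when no equivalent solved formula exists. I expect the main obstacle to be bookkeeping rather than a single deep idea: one must verify that the new rules genuinely fit the existing termination measure, that rule 9 really pre-empts every non-terminating use of rule 10 in our setting, and that the fixed-point conditions of the two loops jointly entail \emph{all} clauses of the definition of \emph{solved}---in particular the newly relevant clause (2), which was automatically true under the assumptions of \cite{ddf08} but now requires the two new rules.
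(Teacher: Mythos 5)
Your proposal is correct and follows essentially the same route as the paper: defer to \cite[Section 4.6]{ddf08} for the soundness, termination, and clause-(1) behaviour of rules 1--10, and observe that the two new rules are justified by the emptiness of $s_{\infin}$ (resp.\ $s_{\fin}$) for $s \in S_{0I}$ (resp.\ $S_{0F}$), which is exactly what forces clause (2) of \cref{def-basic} on the output. You spell out the termination and satisfiability bookkeeping in more detail than the paper does (the latter appears there as the separate \cref{lem-solved-basic-satisfiable}), but the decomposition and the key observations are the same.
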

\begin{proof}[Proof of \cref{thm-solve-basic}]
Most parts of the algorithm (the numbered rules) are taken from \cite[Section 4.6]{ddf08} and the fact that the result satisfies property (1) of \cref{def-basic} is proven there.
(Note that Rule 6 is used only for bookkeeping in \cite{ddf08}, which is why it is not needed in our formulation of the algorithm.)
The two additional rules (*) involve variables $u:s$ where $s$ is a sort without infinite, respectively finite, trees.
Obviously, $\fin(u)$ is always, respectively never, satisfied in those cases.
Therefore, the result satisfies property (2) of \cref{def-basic} as well.
\end{proof}

\begin{theorem*}[\cref{thm-instantiations}, repeated]
Let $\phi \equiv \lnot(\exists \bar x \ldotp \alpha \land \bigwedge_i \phi_i)$ be a normal formula of depth at most 2 with free variables $\bar v$.
Let $I$ be the result of $\Call{findInstantiation}{\bar v, \phi}$ from \cref{alg-find-instantiations}.
If $I$ is ``none", then there is no instantiable variable.
Otherwise, let $u$ be the first instantiable variable found in \Call{findInstantiation}{}.
Then $\phi$ is equivalent to the following conjunction of normal formulae, in which the variable $u$ is no longer instantiable:
\[ \bigwedge_{(\exists \bar z \ldotp \psi) \in I} \lnot(\exists \bar x\bar z \ldotp \alpha \land \psi \land \bigwedge_i \phi_i). \]
\end{theorem*}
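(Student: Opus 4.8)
The plan is to reduce the whole statement to one structural fact: in each of the four cases of \cref{def-instantiable}, the set $I$ produced by \Call{findInstantiation}{} is a \emph{complete} case split for the instantiable variable $u:s$, in the sense that $\alpha$ implies the disjunction $\delta := \bigvee_{(\exists \bar z\ldotp\psi)\in I}\exists\bar z\ldotp\psi$ in the extended theory of trees. I would verify this case by case. Case~1 is immediate: $s$ has only the finitely many generators $F_s$, and the root of every tree of sort $s$ is one of them, so $\delta$ is valid outright. In case~3, $\alpha$ contains the conjunct $\fin(u)$, and by \cref{thm-fin-ind} the sort $s\in S_{FF}$ has exactly the finitely many finite trees enumerated by the ground terms of $s_{\fin}$, so $\fin(u)\leftrightarrow\bigvee_{t\in s_{\fin}}u=t$ and hence $\alpha\to\delta$. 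Cases~2 and~4 are the substantive ones: here I would use \cref{thm-fin-ind} to know that $s_{\fin}$ and $s_{\infin}$ (with each $u_{s'}$, $s'\in S_{1I}$, read as the unique infinite tree of $s'$) enumerate precisely the finite and the infinite trees of $s$, and the Unique Solution Axiom to know that the equations $\bigwedge_{s'\in S_{1I}}U_{s'}$ pin those $u_{s'}$ down uniquely; since every tree of $s$ is finite or infinite, $\delta$ is again valid (in case~4 the ``finite'' alternative is exactly the extra disjunct $\fin(u)$).

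Granting the completeness claim, the equivalence is then routine first-order manipulation, which I would present as a short chain: insert $\delta$ as a conjunct inside $\lnot(\exists\bar x\ldotp\alpha\land\bigwedge_i\phi_i)$ (legitimate because $\alpha\to\delta$), distribute $\exists$ and $\land$ over the disjunction $\delta$, move the existentials $\bar z$ outward past $\alpha\land\bigwedge_i\phi_i$ (legitimate by the variable convention, which makes $\bar z$ fresh for $\bar x$, for the $\phi_i$, and for the free variables $\bar v$), and finish with De Morgan to turn $\lnot\bigvee$ into $\bigwedge\lnot$; this yields exactly $\bigwedge_{(\exists\bar z\ldotp\psi)\in I}\lnot(\exists\bar x\,\bar z\ldotp\alpha\land\psi\land\bigwedge_i\phi_i)$.

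For the last clause — that $u$ is no longer instantiable — I would argue as follows. Once a conjunct $\lnot(\exists\bar x\,\bar z\ldotp\alpha\land\psi\land\bigwedge_i\phi_i)$ is brought back into the shape required by \cref{def-instantiable} (re-solve the basic formula $\alpha\land\psi$ with \Call{solveBasic}{} and re-apply rules 12--14 and 16 of \cite{ddf08}, exactly as in the proof of \cref{thm-solver-correct}), every instantiation other than ``$\fin(u)$'' forces an equation with $u$ on the left-hand side into the new $\alpha$; by \cref{def-solved}(2) this equation then sits in every $\beta_i$ and so disappears from every $\beta_i^*$, so no $\beta_i^*$ carries an equation for $u$ or the constraint $\fin(u)$, and $\alpha$ no longer carries $\fin(u)$ either (rule~10), which defeats all four conditions of \cref{def-instantiable} for $u$. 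The instantiation $\psi\equiv\fin(u)$ of case~4 is the delicate one: I would exploit that \Call{findInstantiation}{} reaches case~4 for $u$ only after conditions~1--3 have failed for $u$, so in particular $\alpha$ contains no equation $u=t$ (else rules~8/10 would already have removed the $\fin(u)$ that case~4 needs inside some $\beta_i$) — hence were $s$ in $S_{FF}\cap S_{FI}$, condition~2 would have fired, so $s\notin S_{FF}$. After moving $\fin(u)$ into $\alpha$, condition~4 fails (that $\fin(u)$ is deleted from every $\beta_i^*$), conditions~2 and~3 fail (they require $s\in S_{FF}$), and condition~1 stays false (moving $\fin(u)$ into $\alpha$ cannot create an equation $u=\con f(\cdots)$ in any $\beta_i^*$).

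The hardest part is the completeness claim for cases~2 and~4: this is where the machinery of \cref{sec-fin-gen-sorts} (via \cref{thm-fin-ind}) is actually used, and one has to be careful that the $u_{s'}$ occurring inside the terms of $s_{\infin}$ are genuinely constrained by the $U_{s'}$ — invoking the Unique Solution Axiom — so that the disjunction really covers every tree of the sort. The ``no longer instantiable'' bookkeeping in the third paragraph is also fiddly, since it hinges on the precise order in which \Call{findInstantiation}{} tests the four conditions.
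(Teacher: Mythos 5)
Your proposal is correct and follows essentially the same route as the paper's proof: establish the completeness fact $\alpha \to \bigvee_{(\exists \bar z\ldotp\psi)\in I}\exists\bar z\ldotp\psi$ case by case, then derive the stated conjunction by the standard chain of first-order equivalences (distribute over the disjunction, pull out the fresh existentials, apply De Morgan). The only difference is that the paper dispatches the ``$u$ is no longer instantiable'' clause in a single sentence by appeal to the construction of $I$, whereas you spell out the bookkeeping against the four conditions of \cref{def-instantiable}; your added detail is consistent with the paper and not a deviation in approach.
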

\begin{proof}[Proof of \cref{thm-instantiations}]
We first show that in each case, the result $I$ of the call to \Call{findInstantiation}{} satisfies $\alpha \to \bigvee_{\psi \in I} \psi$.
For the first return statement in \Call{findInstantiation}{}, this is clear because if there are finitely many generators of $s$ then one of them has to be used to construct a tree of sort $s$.
For the second return statement, it is clear because $u$ can have only finitely many values, so if $I$ contains formulae describing each possible value then the disjunction over all of them must be true.
For the third return statement, note that $\fin(u)$ occurs in $\alpha$, so $u$ has to be finite.
Hence $I$ only contains formulae describing each finite value of $s$, and we have $\fin(u) \to \bigvee_{\psi \in I} \psi$.
Finally, consider the fourth return statement.
The variable $u$ has to represent either a finite tree, meaning $\fin(u)$ or one of the finitely many infinite trees in $s$.
Again, we find that $\bigvee_{\psi \in I} \psi$ holds.
Since $\alpha \to \bigvee_{\psi \in I} \psi$ holds in each case, $\phi$ is equivalent to:
\allowdisplaybreaks
\begin{align*}
&\lnot\left(\exists \bar x \ldotp \alpha \land \left(\bigvee_{\psi \in I} \psi\right) \land \bigwedge_i \lnot(\exists \bar y_i \ldotp \beta_i)\right) \\
&\leftrightarrow \lnot\left(\exists \bar x \ldotp \alpha \land \left(\bigvee_{(\exists \bar z \ldotp \psi') \in I} (\exists \bar z \ldotp \psi')\right) \land \bigwedge_i \lnot(\exists \bar y_i \ldotp \beta_i)\right) \\
&\leftrightarrow \lnot\left(\bigvee_{(\exists \bar z \ldotp \psi') \in I} \exists \bar x\bar z \ldotp \alpha \land \psi' \land \bigwedge_i \lnot(\exists \bar y_i \ldotp \beta_i)\right) \\
&\leftrightarrow \bigwedge_{(\exists \bar z \ldotp \psi') \in I} \lnot\left(\exists \bar x\bar z \ldotp \alpha \land \psi' \land \bigwedge_i \lnot(\exists \bar y_i \ldotp \beta_i)\right)
\end{align*}
By the construction of $I$, the instantiable variable $u$ found in the algorithm is no longer instantiable in this transformed formula.
\end{proof}

In the following, we will often have to show that for sorts $s$ with infinitely many trees of sort $s$, there is a tree that contradicts a certain set of finitely many equations.
The following lemma formalizes this.

\begin{lemma}
\label{lem-contradict-recursive-constraints}
Let $v$ be a variable of sort $s \notin S_{FI} \cap S_{FF}$.
Let $\beta_i$ be a family of solved basic formula, indexed by $i = 1, \dots, m$, where $v$ is properly reachable from itself in each $\beta_i$.
Let $T$ be a finite set of trees of sort $s$.
Then there is a tree $v^*$ of sort $s$ such that $v^* \notin T$ and each $\beta_i$ is made false by any valuation with value $v^*$ for $v$.
(Roughly speaking, $\beta_i$ are ``forbidden recursive equations", $T$ are ``forbidden values" and $v^*$ avoids both.)
Furthermore, $v^*$ can be chosen to be finite if $s \notin S_{FF}$ and infinite if $s \notin S_{FI}$.
\end{lemma}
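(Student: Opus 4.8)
The plan is to split according to which of the two containments $s\in S_{FI}$, $s\in S_{FF}$ fails. Since $s\notin S_{FI}\cap S_{FF}$, either $s\notin S_{FF}$, so $s$ has infinitely many finite trees, or $s\notin S_{FI}$, so $s$ has infinitely many infinite trees (these cases may overlap). In the former I will exhibit a finite $v^*$ and in the latter an infinite one, which simultaneously yields the ``furthermore'' part.

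The key structural fact about the forbidden equations is the following. Because each $\beta_i$ is a \emph{solved} basic formula in which $v$ is properly reachable from itself, there is a chain $v=t_0\wedge x_1=t_1\wedge\dots\wedge x_{n-1}=t_{n-1}$ occurring in $\beta_i$ with $x_0=x_n=v$, $n>0$, and $x_{j+1}$ occurring in $t_j$. Not every $t_j$ can be a variable, for otherwise the ordering condition of \cref{def-basic} would force $v>x_1>\dots>x_{n-1}>v$, a contradiction; so some $t_j$ has the form $\con f(\bar z)$. Substituting the chain into itself then shows that any model of $\beta_i$ assigns to $v$ a tree whose subtree along a fixed non-empty path $p_i$ — the path traced by the occurrences $x_1,\dots,x_n$, of length at most some bound $D$ depending only on the $\beta_i$ — equals the value of $v$, with a fixed sequence of constructors labelling $p_i$. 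Consequently $\beta_i$ forces $v$ to be infinite, and any tree that is either missing those constructors along $p_i$ or whose subtree at $p_i$ differs from the whole tree falsifies $\beta_i$ under every valuation of the remaining variables.

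If $s\notin S_{FF}$, I am done easily: since $s$ has infinitely many finite trees and $T$ is finite, pick a finite tree $v^*$ of sort $s$ with $v^*\notin T$; a finite tree has no proper subtree equal to itself, so by the observation above $v^*$ falsifies every $\beta_i$, and $v^*$ is finite as required. If instead $s\notin S_{FI}$, I must produce an infinite $v^*$ of sort $s$ that avoids $T$ and is \emph{not self-similar along a short path}, i.e.\ for every non-empty path $p$ of length $\le D$ occurring in $v^*$ the subtree of $v^*$ at $p$ differs from $v^*$; any such tree falsifies all $\beta_i$. Its construction is the main obstacle: the set of trees satisfying a given $\beta_i$ can itself be infinite, since the off-path subtrees in the recursive pattern are unconstrained, so one cannot simply delete finitely many trees from $s_{\infin}$. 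Instead the argument must be structural, using $s\notin S_{FI}$: via the characterization of $S_{FI}$ in \cref{lem-SFI} together with \cref{lem-shape-infinite-trees}, one descends into an argument sort of some generator of $s$ that still contributes infinitely many infinite trees, and assembles an infinite tree of sort $s$ whose subtrees at depths $1,\dots,D$ are pairwise distinct and distinct from the whole tree; since the required aperiodicity constrains only a bounded initial part, there is still enough freedom to keep $v^*$ out of the finite set $T$. This yields the desired infinite $v^*$ and completes the proof.
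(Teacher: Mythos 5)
Your first case ($s \notin S_{FF}$) is exactly the paper's argument and is fine, and your reduction of the second case to finding an infinite tree of sort $s$, outside $T$, that is not self-similar along any non-empty path of length at most $D$ is a correct sufficient condition. The gap is that you never actually construct such a tree, and that construction is the entire content of the lemma in this case. Your sketch has three concrete problems. First, \cref{lem-shape-infinite-trees} describes sorts with \emph{finitely} many infinite trees, so it says nothing about the situation $s \notin S_{FI}$; and the proposed ``descent into an argument sort that still contributes infinitely many infinite trees'' is not well-founded: for $\con{cons}: \mathit{nat} \times \mathit{list} \to \mathit{list}$ the only argument sort outside $S_{FI}$ is $\mathit{list}$ itself, and the actual source of infinitude is varying the \emph{finite} $\mathit{nat}$-head next to a single infinite tail, a case your recursion does not handle. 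Second, requiring the subtrees at depths $1,\dots,D$ to be \emph{pairwise} distinct is stronger than needed and can be unsatisfiable (an argument position of sort $\mathit{bool}$ admits only two subtrees), so the stated target of your construction is not always attainable. Third, ``the aperiodicity constrains only a bounded initial part'' is not literally true --- $t|_p = t$ is a constraint on the entire infinite tree --- and turning it into a condition on a bounded prefix is precisely the work that is missing.

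The paper avoids constructing an explicitly aperiodic tree. It defines the depth at which a candidate value for $v$ ``contradicts'' a formula $\beta_i$ and runs a surgery argument: starting from the infinite set of all infinite trees of sort $s$, it processes $\beta_1,\dots,\beta_m$ in turn; whenever some current candidate $t$ satisfies $\beta_{i+1}$, it grafts an arbitrary tree $w$ onto $t$ at a self-similar occurrence of $t$ lying below the finite depths at which $t$ already contradicts $\beta_1,\dots,\beta_i$, thereby breaking $\beta_{i+1}$ without disturbing the earlier contradictions. This produces an injection from an infinite set into the set of trees falsifying all $\beta_i$, and finiteness of $T$ finishes the proof. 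To salvage your route you would need to prove your aperiodicity existence claim by a comparable induction; as written, the proposal asserts the key existence statement rather than proving it.
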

\begin{proof}
If $s \notin S_{FF}$ then there is a finite tree $v^*$ not in the finite set $T$.
This value for $v$ also makes each $\beta_i$ false because it is finite but $v$ is properly reachable from itself in $\beta_i$ and thus infinite.

For the other case, $s \notin S_{FI}$, we need the concept of \emph{contradicting a formula at a certain depth}.
Let $\beta$ be a basic formula containing a subformula of the form
\[ x_0 = \con{g}_0(\dots, x_1, \dots) \land x_1 = \con{g}_1(\dots, x_2, \dots) \land \cdots x_{n-1} = \con{g}_{n-1}(\dots, x_0, \dots). \]
Let $\mathcal V$ be a valuation of the variables of $\beta$ that maps $x_i$ to a tree $t$ and let $\bar y, \bar z$ be variables such that the equation with left-hand side $x_i$ in $\beta$ is $x_i = \con{g}_i(\bar y, x_{i+1}, \bar z)$.
(We view the indices of $x$ modulo $n$.)
Next, we define
\newcommand{\contra}{\mathrm{contra}}
\[
\contra(x_i, t, \beta, \mathcal V) = \begin{cases}
0 &\text{if $\con{g}_i$ is not the root of $t$} \\
\contra(x_{i+1}, t', \beta) + 1 &\text{if $t'$ is the subtree of $t$ with $t = \con{g}_i^\mathcal T(\overline{\mathcal V(y)}, t', \overline{\mathcal V(z)})$} \\
\infty &\text{otherwise}
\end{cases}
\]
and say that \emph{the value $t$ for $x_i$ contradicts $\beta$ at depth $\contra(x_i, t, \beta, \mathcal V)$ under the valuation $\mathcal V$}.
Intuitively, this means that when picking the value $t$ for $x_1$ and checking the equations in $\beta$, we notice a problem at depth $d$ of the tree.
If no valuation is specified, we define
\[ \contra(x_i, t, \beta) := \max_{\mathcal V} \contra(x_i, t, \beta, \mathcal V) \]
where $\mathcal V$ ranges over valuations sending $x_i$ to $t$, and say that \emph{the value $t$ for $x_i$ contradicts $\beta$ at depth $\contra(x_i, t , \beta)$}.
If $\contra(x_i, t, \beta) < \infty$, we say that \emph{the value $t$ for $x_i$ contradicts $\beta$}.
Note that only the nodes up to depth $d$ are relevant for contradicting $\beta$ at depth $d$.

Phrased in this new terminology, our goal is to prove that there is a tree $v^* \notin T$ of sort $s$ such that the value $v^*$ for $v$ contradicts $\beta_1,\dots,\beta_m$.
In the following, we will iteratively construct a sequence of injections $f_0,\dots,f_m: s^\mathcal T \to s^\mathcal T$, and of infinite sets $W_0, \dots, W_m \subseteq s^\mathcal T$, such that for each $i \in \{1,\dots,m\}$ and all $w \in W_i$, the value $f_i(w)$ for $v$ contradicts all the formulae $\beta_1,\dots,\beta_i$.

The base case is easy, simply define $f_0$ as the identity function, and $W_0$ as the set of infinite trees of sort $s$.
Next, suppose $f_i$ and $W_i$ are defined with the desired property.
If the value $f_i(w)$ for $v$ contradicts $\beta_{i+1}$ for all $w \in W_i$, we can simply use $W_{i+1} = W_i$, and $f_{i+1} = f_i$.
Otherwise there is a $w \in W_i$ such that there is a valuation $\mathcal V$ sending $v$ to $f_i(w)$ that makes $\beta_{i+1}$ true.
Since $v$ is reachable from itself, $\beta_{i+1}$ contains a subformula
\[ v = \con{g}_0(\dots, x_1, \dots) \land x_1 = \con{g}_1(\dots, x_2, \dots) \land \cdots x_{n-1} = \con{g}_{n-1}(\dots, v, \dots). \]
Let $t = f_i(w)$.
We are going to label the nodes of $t$ with the corresponding variables of $\beta_{i+1}$.
That is to say, we label the root of $t$ with $v$, the child node corresponding to $x_1$ with $x_1$, and so on, such that the labeled nodes form an infinite path labeled $v,x_1,\dots,x_{n-1},v,x_1,\dots$ in the tree $t$.
Let $d$ be an integer such that the value $t$ for $v$ contradicts each $\beta_1,\dots,\beta_i$ at depth at most $d$.
It exists because each the value $t$ for $v$ contradicts each $\beta_1,\dots,\beta_i$ at some finite depth by the induction hypothesis.
Let $n$ be a node in $t$ labeled $v$, at a depth $> d$.
The subtree rooted at $n$ must be $t$ again because otherwise, $\beta_{i+1}$ would not be true.
If we replace this subtree by a different subtree, $\beta_{i+1}$ cannot be satisfied under any valuation of the variables $x_1, \dots, x_n$ because the values of the latter are determined by other subtrees of $t$.
Hence let $W_{i+1} = W_i \setminus \{ t \}$ (which is also infinite) and $f_{i+1}(w)$ be the function returning $t$ but with the subtree rooted at $n$ replaced by $w$.
By construction, the value $f_{i+1}(w)$ for $v$ contradicts $\beta_{i+1}$.
Since for each $w \in W_{i+1}$, all nodes of $t$ and $f_{i+1}(w)$ agree up to depth $d$, the value $f_{i+1}(w)$ for $v$ also contradicts $\beta_1,\dots,\beta_i$ at depth at most $d$, as desired.

At the end of this iterative process, we obtain an infinite set $W_m$ and an injection $f_m: s^\mathcal T \to s^\mathcal T$ such that for all $w \in W_m$, the value $f_m(w)$ for $v$ contradicts $\beta_1,\dots,\beta_m$.
The set $W := \{ f_m(w) \mid w \in W_i \}$ is infinite because $f_m$ is injective.
Furthermore, each tree $t \in W$ contradicts all $\beta_i$.
Since $T$ is finite, there is a tree $v^* \in W \setminus T$.
\end{proof}

\begin{theorem*}[\cref{thm-solved-form}, repeated]
Let $\phi$ be a fully simplified formula.
If $\phi$ has no free variables then $\phi \equiv \true$.
Otherwise both $\phi$ and $\lnot \phi$ are satisfiable in the theory of trees.
\end{theorem*}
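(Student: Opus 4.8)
The plan is to unfold the definition of ``fully simplified''. By \cref{def-solved}, $\lnot\phi$ is a solved normal formula $\lnot\bigl(\exists\bar x\ldotp\alpha\land\bigwedge_{i=1}^{n}\lnot(\exists\bar y_i\ldotp\beta_i)\bigr)$, so $\phi$ is equivalent to $\exists\bar x\ldotp\alpha\land\bigwedge_{i=1}^{n}\lnot(\exists\bar y_i\ldotp\beta_i)$, and I would reason about this formula directly, following the argument for ``explicit solved form'' in \cite{ddf08} but supplying the ingredients needed for finitely generated sorts. For the closed case I would use condition~(5) of \cref{def-solved}: the free variables of $\exists\bar x\ldotp\alpha$ are contained in the (empty) set of free variables of $\phi$, so every variable of $\alpha$ lies in $\bar x$; but each variable of $\bar x$ must be reachable from a free variable of $\exists\bar x\ldotp\alpha$, of which there are none, forcing $\bar x=\emptyset$ and hence $\alpha\equiv\true$. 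The same argument applied to each $\exists\bar y_i\ldotp\beta_i$ gives $\bar y_i=\emptyset$ and $\beta_i\equiv\true$, which contradicts condition~(3) unless there are no conjuncts $\lnot(\exists\bar y_i\ldotp\beta_i)$ at all; hence $\phi\equiv\true$.

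Next assume $\phi$ has a free variable and consider satisfiability of $\lnot\phi$ first. It suffices to find a valuation falsifying $\exists\bar x\ldotp\alpha$, since then $\phi$ is false. If $\bar x\neq\emptyset$, condition~(5) of \cref{def-solved} exhibits a variable of $\bar x$ reachable from some free variable $v$, which forces $\alpha$ to contain an equation $v=t$; by the ordering constraint of \cref{def-basic} the right-hand side $t$ is either a non-variable term $\con f(\bar z)$ — and then choosing $v$'s value with a root other than $\con f$ (possible because every sort has at least two generators) falsifies the equation — or a free variable, and then a value for $v$ different from $t$'s does so. If instead $\bar x=\emptyset$, then $\alpha$ is a nonempty solved basic formula over the free variables, and one of its conjuncts — an equation, handled as above, or a constraint $\fin(v)$, whose sort has an infinite tree by condition~(2) of \cref{def-basic} — can likewise be violated. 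In every case $\exists\bar x\ldotp\alpha$, and hence $\phi$, is false under the chosen valuation, so $\lnot\phi$ is satisfiable.

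For satisfiability of $\phi$ I would build a model. Starting from a valuation of the free variables that is as generic as $\alpha$ permits (free variables pinned by an equation of $\alpha$ take their forced value, the others are chosen freely), the Unique Solution Axiom extends it uniquely to a valuation of $\bar x$ satisfying the equations of $\alpha$, and condition~(2) of \cref{def-basic} together with the sort classification of \cref{sec-fin-gen-sorts} lets us fix the finiteness pattern of the variables so that the $\fin$-constraints of $\alpha$ hold too. It then remains to choose the generic part of the valuation so that every $\exists\bar y_i\ldotp\beta_i$ is false. Here I would use that each $\beta_i^{*}$ (i.e.\ $\beta_i$ with the conjuncts of $\alpha$ removed) is nonempty by condition~(3) of \cref{def-solved} and that no variable is instantiable by condition~(4): going through the four clauses of \cref{def-instantiable} shows that each $\beta_i^{*}$ must contain either a constraint on a free, non-pinned variable that a generic value can violate, or a recursive equation $v=\con f(\dots,v,\dots)$ on a variable whose sort has infinitely many finite or infinitely many infinite trees. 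The finitely many recursive equations arising this way across all $i$, together with the finitely many ``forbidden'' values coming from the other clauses, can then be contradicted simultaneously by a single generic value, which is exactly what \cref{lem-contradict-recursive-constraints} provides; this yields a model of $\phi$.

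The step I expect to be the main obstacle is this last one: showing that a single generic valuation of the free variables simultaneously falsifies \emph{all} of the $\exists\bar y_i\ldotp\beta_i$. In \cite{ddf08} this is comparatively easy because infinitely many function symbols always supply a ``fresh'' value; in the finitely generated setting one must verify, clause by clause of \cref{def-instantiable}, that condition~(4) genuinely rules out every situation in which some $\beta_i$ would be forced once $\alpha$ holds, and one must invoke \cref{lem-contradict-recursive-constraints} to discharge the recursive constraints uniformly — which is precisely why that lemma was proved. A secondary subtlety is keeping the finiteness pattern consistent between $\alpha$ and all the $\beta_i$, for which condition~(2) of \cref{def-basic} and the analysis of \cref{sec-fin-gen-sorts} are the relevant tools.
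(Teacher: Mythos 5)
Your overall strategy coincides with the paper's: the closed case via condition~(5) and condition~(3), satisfiability of $\lnot\phi$ by falsifying a conjunct, and satisfiability of $\phi$ by a generic valuation of the free variables, the Unique Solution Axiom for the left-hand-side variables of $\alpha$, and \cref{lem-contradict-recursive-constraints} to kill all the $\beta_i^*$ simultaneously. The closed case and the construction of a model of $\phi$ are essentially the paper's argument (the latter is still a sketch at the clause-by-clause level, but you name the right tool and the right reason it works).

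There is, however, a genuine gap in your argument that $\lnot\phi$ is satisfiable. You claim it suffices to falsify $\exists\bar x\ldotp\alpha$, and in the subcase $\bar x=\emptyset$ you assert that $\alpha$ is then a \emph{nonempty} solved basic formula over the free variables. That need not hold: $\phi$ can have free variables while $\alpha$ mentions none of them, in which case condition~(5) forces $\bar x=\emptyset$ and $\alpha\equiv\true$, and there is nothing to falsify. A concrete instance is $\phi\equiv\lnot(y=\con{cons}(z,y))$, which is fully simplified (the proper self-reachability of $y$ blocks clause~1 of \cref{def-instantiable}, and $\mathit{list}\notin S_{FF}\cap S_{FI}$ blocks the rest): here $\alpha\equiv\true$, yet $\lnot\phi$ is satisfiable only because one can \emph{satisfy} the inner basic formula $y=\con{cons}(z,y)$. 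The paper handles exactly this case by observing that some $\beta_i$ must then contain a free variable and invoking \cref{lem-solved-basic-satisfiable} (every solved basic formula is satisfiable) to make $\lnot(\exists\bar y_i\ldotp\beta_i)$, and hence $\phi$, false. Your proof needs this extra branch; without it the strategy of falsifying $\alpha$ simply has no target.
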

\begin{proof}[Proof of \cref{thm-solved-form}]
The formula $\phi$ has the form
\[ \exists \bar x \ldotp \alpha \land \bigwedge_{i \in I} \lnot(\exists \bar y_i \ldotp \beta_i). \]
First consider the case of no free variables.
Then no variable can be reachable in $\exists \bar x \ldotp \alpha$, hence by the reachability condition (5) of \cref{def-solved}, $\bar x$ is empty.
This implies that $\alpha$ is just $\true$ because it cannot mention any variables.
The same argument applied to each $\exists y_i \ldotp \beta_i$ means that $\bar y_i$ is empty and $\beta_i \equiv \true$.
By condition (3) of \cref{def-solved}, each $\beta_i$ must include a conjunct not occurring in $\alpha$, hence $I = \emptyset$.
Altogether, we have $\phi \equiv \true$

If $\phi$ contains free variables, it is enough to find a valuation for the free variables such that $\phi$ is true in the theory of trees and another one such that $\phi$ is false in the theory of trees.
To find a valuation that makes $\phi$ false, consider the following:
If $\alpha$ contains a free variable $z$, it can be made false like this.
\begin{itemize}[noitemsep, nolistsep]
\item If $\alpha$ contains $z = w$, then $z > w$ according to the variable ordering since $\alpha$ is solved.
Hence $w$ is also a free variable and $\alpha$ can be made false by instantiating $z$ and $w$ with different trees.
\item If $z = \con f(\bar w)$ occurs in $\alpha$, it is enough to instantiate $z$ with a tree not starting with $\con f$ to make $\alpha$ false, which is always possible because each sort has at least two generators.
\item If $w = t$ with $t$ containing $z$ occurs in $\alpha$, this equation must be reachable in $\exists \bar x \ldotp \alpha$ by condition (5) of \cref{def-solved}.
This means that there is an equation of the form $z' = \dots$ in $\alpha$, with $z'$ free and $w$ reachable from $z'$.
This situation was already handled in one of the previous two cases.
\item If $\fin(z)$ occurs in $\alpha$, simply instantiate $z$ to an infinite tree (which is possible by condition (2) of solved basic formulae) to make $\alpha$ false.
\end{itemize}
Otherwise, $\alpha$ contains no free variables, so $\bar x$ is empty and $\alpha$ is $\true$ by the same argument as before.
Since $\phi$ contains a free variable, there must be a $\beta_i$ that contains a free variable, so is nonempty.
Since $\beta_i$ is a solved basic formula, it is satisfiable by \cref{lem-solved-basic-satisfiable}.
Hence there is a valuation of free variables that makes $\lnot\exists \bar y_i \ldotp \beta_i$ false.
Then the same valuation makes $\phi$ false.

Next, we want to find a valuation making $\phi$ true.
Let $\beta_i^*$ be $\beta_i$ with all conjuncts occurring in $\alpha$ removed.
Our goal is to find a valuation of the free variables and $\bar x$ that makes $\alpha$ true and every $\exists \bar y_i \ldotp \beta_i^*$ false (since we cannot make the parts of $\beta_i$ that also occur in $\alpha$ false).
Let $\bar x_{lhs}$ me the variables from $\bar x$ that occur on the left-hand side of an equation in $\alpha$.
The valuation for these variables will be picked last because it is uniquely determined by the Unique Solutions Axiom, once the valuation for the other variables is chosen.
So the equations of $\alpha$ are taken care of.

If $\fin(v)$ occurs in $\alpha$ then any equation $v = \con f(\bar w)$ occurring in any $\beta_i^*$ is automatically false because $v$ has to be properly reachable from itself (otherwise $v$ would be instantiable), but then $v$ cannot be finite.
So the only equations with $v$ on the left-hand side in $\beta_i$ that we care about are $v = w_i$ for other variables $w_i$.
In this case, each $w_i$ is also a free variable because $v > w_i$ by the variable ordering, and the sort of $v$ has infinitely many finite trees because otherwise $v$ would be instantiable.
Thus it is always possible to find a valuation that contradicts all these finitely many equations of the form $v = w_i$ by picking a value for $v$ that is different from the values picked for all the $w_i$.
This proves that we can always make $\alpha$ true.

Next, we do a case analysis on the $\beta_i^*$ that have not been made false yet.
By reachability, each $\beta_i^*$ has to contain $\fin(v)$ or $v = t$ for a free variable $v$.
Then $v \notin \bar x_{lhs}$ because $\beta_i$ is solved.
We can assume that $\fin(v)$ does not occur in $\alpha$ because this case was already discussed above.
For each such free variable $v$, we do the following case analysis:
\begin{itemize}[noitemsep, nolistsep]
\item Suppose there is a $\beta_i^*$ that contains $\fin(v)$.
If $\beta_i^*$ also contains an equation, then the following cases apply and suffice to make it false.
So suppose $\beta_i^*$ only contains $\fin$-constraints.
Then the sort of $v$ has infinitely many infinite trees because otherwise $v$ would be instantiable.
This makes the following cases work, by restricting the set of possible values for $v$ to the set of infinite trees.
Using such a value also makes $\fin(v)$, and thus $\beta_i^*$, false as desired.
\item Suppose $v = \con f(\bar w)$ occurs in some $\beta_i^*$.
Then $v$ must be properly reachable from itself in $\beta_i$ because otherwise, it would be instantiable.
If the sort of $v$ had only finitely many trees then $v$ would be instantiable, contradiction.
Hence the sort of $v$ has infinitely many trees.
Since the previous cases are already handled, we can assume that the only constraints on $v$ in all the $\beta_j^*$'s are of the form $v = \con f(\bar w)$ with $v$ properly reachable from itself in $\beta_j^*$ or $v = w$.
Since the sort of $v$ has infinitely many trees, it is possible to contradict all these constraints by \cref{lem-contradict-recursive-constraints}
\item Suppose $v = w$ occurs in some $\beta_i$.
Since the previous cases are already handled, we can assume that the only constraints on $v$ from the $\beta_i^*$'s are of the form $v = w_i$ for variables $w_i$.
Then each $w_i$ is also a free variable because $v > w_i$ by the variable ordering, and the sort of $v$ has infinitely many trees because otherwise $v$ would be instantiable.
Thus it is always possible to find a valuation that contradicts all these finitely many equations of the form $v = w_i$ by picking a value for $v$ that is different from the values picked for all the $w_i$.
\end{itemize}
This case analysis shows that we can make all the $\beta_i^*$ false.
Thus it is always possible to find a valuation that makes $\phi$ true, as desired.
\end{proof}

\begin{lemma}
\label{lem-solved-basic-satisfiable}
Any solved basic formula is satisfiable.
\end{lemma}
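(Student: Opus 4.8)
The plan is to exhibit a valuation in $\mathcal{T}$ satisfying the given solved basic formula $\alpha \equiv (\bigwedge_i v_i = t_i) \land (\bigwedge_j \fin(u_j))$. The consequences of solvedness I would use are: (i) the variables $\bar u$ and $\bar v$ are all pairwise distinct, so in particular no $\fin$-constrained variable $u_j$ occurs as the left-hand side of an equation --- it is a ``free'' variable I may assign at will, and each variable heads at most one equation; (ii) by condition~(2) of \cref{def-basic}, the sort of each $u_j$ contains a finite tree; (iii) every variable-variable equation is oriented by the variable order ($x > y$), and since distinct left-hand sides carry at most one equation each, following such equations strictly decreases rank, so there is no cyclic dependency among them.

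First I would use (iii) to reduce to the case where every equation has a non-variable right-hand side. Let $\sigma$ be the substitution sending each variable heading a variable-variable equation $v = y$ to the endpoint of the finite acyclic chain of such equations starting at $v$; this endpoint is a variable that heads no variable-variable equation, hence is either free or heads an equation $w = \con f(\bar z)$. Applying $\sigma$ to the equations of the form $v = \con f(\bar z)$ leaves their left-hand sides unchanged --- by distinctness, a variable cannot head both a variable-variable and a $\con f$-headed equation, so $\sigma(v) = v$ there --- and produces a system $v = \con f(\sigma(\bar z))$ with distinct left-hand variables and right-hand sides that are non-variable terms over those variables together with the free variables. (Equivalently, one may eliminate the variable-variable equations one at a time by substitution, deferring each detached assignment.)

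Next I would build the valuation $\mathcal{V}$. Assign to each $u_j$ a finite tree of its sort, using (ii); assign every other free variable an arbitrary tree of its sort, which exists since sorts are nonempty in $\mathcal{T}$. With these values fixed, the Unique Solution Axiom applied to the system $v = \con f(\sigma(\bar z))$ yields a valuation of the remaining left-hand variables; finally set each remaining left-hand variable $v_i$ (one heading a variable-variable equation) to the value of $\sigma(v_i)$. It remains to verify that $\mathcal{V}$ satisfies $\alpha$: each equation $v = \con f(\bar z)$ holds because $\mathcal{V}(w) = \mathcal{V}(\sigma(w))$ for every variable $w$, so under $\mathcal{V}$ it is equivalent to $v = \con f(\sigma(\bar z))$, which holds by construction; each variable-variable equation $v = y$ holds because $\sigma(v) = \sigma(y)$ (the chain from $v$ passes through $y$), so $\mathcal{V}(v) = \mathcal{V}(\sigma(v)) = \mathcal{V}(\sigma(y)) = \mathcal{V}(y)$; and every $\fin(u_j)$ holds by the choice of a finite value for $u_j$.

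I expect the only delicate point to be the bookkeeping around $\sigma$: checking that it is well defined (no cycles, using the strict variable order and uniqueness of left-hand sides), that it does not collapse two distinct $\con f$-headed equations nor render one trivial or conflicting, and that the resulting system genuinely meets the hypotheses of the Unique Solution Axiom (distinct left-hand variables; right-hand sides non-variable terms over exactly those variables plus the free ones). Everything else --- the finite witnesses for the $u_j$, nonemptiness of sorts, and the final verification --- is routine.
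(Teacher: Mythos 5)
Your proof is correct and follows essentially the same route as the paper: assign each $\fin$-constrained variable a finite tree (using condition~(2) of \cref{def-basic}) and then invoke the Unique Solution Axiom to solve the equations. The paper applies that axiom in one line without addressing the variable-variable equations $v = y$ (whose right-hand sides are not the non-variable terms the axiom requires); your explicit elimination of these via the substitution $\sigma$ fills in a detail the paper glosses over, but the argument is the same.
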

\begin{proof}
Let the basic formula be given by $\overline{v = t} \land \overline{\fin(u)}$.
By condition (2) of solved basic formulae (\cref{def-basic}), each $\bar u$ can be given the value of some finite tree.
Since the variables $\bar v$ and $\bar u$ are disjoint, the Unique Solution Axiom tells us that $\exists \bar v \ldotp \overline{v = t}$ is satisfiable for this valuation of $\bar u$.
\end{proof}

\begin{algorithm}
\caption{Extension of Djelloul, Dao, and Frühwirth's algorithm \cite[Section 4.6]{ddf08} for transforming a normal formula into an equivalent conjunction of solved formulae. The added part is in blue.}
\label{alg-solve-normal}
\begin{algorithmic}
\color{black}
\Function{solve}{$\phi$}
\State $\tilde \phi \gets \Call{normalize}{\lnot \phi}$ \Comment{cf. \cref{thm-normalize}}
\State $\bar v \gets $ the free variables of $\tilde \phi$ in some fixed order
\State let $\lnot(\exists \bar x \ldotp \alpha \land \bigwedge_i \phi_i) \equiv \tilde \phi$ where $\alpha$ is a basic formula and $\phi_i$ are normal formulae
\State $\alpha \gets \Call{solveBasic}{\bar v \bar x, \alpha}$ \Comment{cf. \cref{alg-solve-basic}}
\If{$\alpha \equiv \false$} \Return $\false$ \EndIf
\State $\{ \psi_1,\dots,\psi_n \} \gets \Call{solveNested}{\bar v, \lnot(\exists \bar x \ldotp \alpha \land \bigwedge_i \phi_i)}$
\If{$n = 0$} \Return $\false$ \EndIf
\If{each $\psi_i$ is of the form $\lnot(\true)$} \Return $\true$ \EndIf
\State let $\lnot(\exists \bar x_i \ldotp \alpha_i \land \bigwedge_{j\in J_i} \lnot(\exists \bar y_{ij} \ldotp \beta_{ij})) \equiv \psi_i$ for each $i$
\State remove all conjuncts (of the form $u = t$ or $\fin(u)$) from each $\beta_{ij}$ that already occur in $\alpha_i$
\State \Return $\bigvee_{i=1}^n \big(\exists \bar x_i \ldotp \alpha_i \land \bigwedge_{j\in J_i} \lnot(\exists \bar y_{ij} \ldotp \beta_{ij}) \big)$
\EndFunction

\Function{solveNested}{$\bar v, \phi$}
\State let $\lnot(\exists \bar x \ldotp \alpha \land \bigwedge_i \lnot(\exists \bar y_i \ldotp \phi_i)) \equiv \phi$
\For{each $i$}
\State let $\beta_i \land \bigwedge_j \psi_{ij} \equiv \phi_i$ where $\beta_i$ is a basic formula and $\psi_{ij}$ are normal formulae
\State $\beta_i \gets \alpha \land \beta_i$ \Comment Rule 12 in \cite[Section 4.6]{ddf08}
\State $\beta_i \gets \Call{solveBasic}{\bar v\bar x\bar y_i, \beta_i}$
\If{$\beta_i \equiv \false$} $\Psi_i \gets \emptyset$
\Else
\State replace each $u = t$ in $\beta_i$ by $u = s$ if $u = s$ occurs in $\alpha$ \Comment Rule 13
\State $\Psi_i \gets \Call{solveNested}{\bar v\bar x, \lnot(\exists \bar y_i \ldotp \beta_i \land \bigwedge_j \psi_{ij})}$
\EndIf
\EndFor
\State \Return $\Call{solveFinal}{\bar v, \lnot(\exists \bar x \ldotp \alpha \land \bigwedge (\bigcup_i \Psi_i))}$
\EndFunction

\Function{solveFinal}{$\bar v, \phi$}
\State let $\lnot(\exists \bar x \ldotp \alpha \land \bigwedge_i \phi_i) \equiv \phi$
\If{there is an $i$ such that $\phi_i \equiv \lnot (\exists \bar y \ldotp \alpha)$}
\Return $\emptyset$ \Comment Rule 14
\EndIf
\If{depth of $\phi$ is 3} \Comment Rule 16 (depth reduction)
\State choose a $j$ such that $\phi_j$ has depth 2
\State let $\lnot(\exists \bar y \ldotp \beta \land \bigwedge_k \lnot(\exists \bar z_k \ldotp \gamma_k)) \equiv \phi_j$
\State $\psi \gets \lnot(\exists \bar x \ldotp \alpha \land \lnot(\exists \bar y \ldotp \beta) \land \bigwedge_{i, i\neq j} \phi_i)$
\State $\chi_k \gets \lnot(\exists \bar x\bar y\bar z_i \ldotp \gamma_i \land \bigwedge_{i,i\neq j} \phi_i)$
\State \Return $\Call{solveFinal}{\bar v, \psi} \cup \bigcup_k \Call{solveNested}{\bar v, \chi_k}$
\EndIf
\color{blue}
\State $I \gets \Call{findInstantiation}{\bar v, \phi}$ \Comment{cf. \cref{alg-find-instantiations}}
\If{$I \neq none$}
\State \Return $\bigcup \{ \Call{solveNested}{\bar v, \lnot(\exists \bar x\bar z \ldotp \alpha \land \psi \land \bigwedge_i \phi_i)} \mid (\exists \bar z \ldotp \psi) \in I \}$ \Comment{cf. \cref{thm-instantiations}}
\Else \color{black}
\State \Return $\{ \Call{removeUnreachableParts}{\bar v, \phi} \}$ \Comment{Rule 15 (cf. \cref{alg-remove-unreachable})}
\EndIf
\EndFunction
\end{algorithmic}
\end{algorithm}

\begin{algorithm}
\caption{Rule 15 from \cite[Section 4.6]{ddf08}, which removes unreachable variables and subformulae of a normal formula of depth at most 2.}
\label{alg-remove-unreachable}
\begin{algorithmic}
\Function{removeUnreachableParts}{$\bar v, \lnot(\exists\bar x \ldotp \alpha \land \bigwedge_i \psi_i)$}
\Comment Rule 15
\State $\bar x' \gets$ the variables from $\bar x$ reachable in $\alpha$ from the free variables $\bar v$
\State $\alpha' \gets $the conjuncts of $\exists \bar x \ldotp \alpha$ reachable from the free variables $\bar v$
\State $\alpha'' \gets$ the $\fin$-subformulae of $\exists \bar x \ldotp \alpha$  unreachable from the free variables $\bar v$
\State $\alpha''' \gets$ the equations of $\exists \bar x \ldotp \alpha$ unreachable from the free variables $\bar v$
\State $\bar x''' \gets $ the variables of $\exists \bar x \ldotp \alpha$ occurring on the LHS of an equation in $\alpha$, and unreachable from $\bar v$
\State $\bar x'' \gets \bar x$ without $\bar x'$ and $\bar x'''$
\For{$i$}
    \State Let $\lnot(\exists y_i \ldotp \beta_i) \equiv \phi_i$
    \State $\beta_i^* \gets \beta_i$ with $\alpha''$ removed
    \State $\bar y_i' \gets $ the variables of $\bar x'''\bar y_i$ in $\exists \bar x'''\bar y_i \ldotp \beta_i^*$ reachable from its free variables
    \State $\beta_i' \gets$ the conjuncts of $\exists \bar x'''\bar y_i \ldotp \beta_i^*$ reachable from its free variables
\EndFor
\State $K \gets$ the set of indices $i$ where no variable of $\bar x''$ occurs in $\beta_i'$
\State \Return $\{\lnot\exists \bar x' \ldotp \alpha' \land \bigwedge_{i\in K} \lnot(\exists \bar y_i' \ldotp \beta_i')\}$
\EndFunction
\end{algorithmic}
\end{algorithm}

\begin{theorem*}[\cref{thm-solver-correct}, repeated]
Given a formula $\phi$, the function $\Call{solve}{\phi}$ from \cref{alg-solve-normal} returns $\true$, $\false$, or a disjunction of fully simplified formulae that is equivalent to $\phi$ in the extended theory of trees.
In particular, if $\phi$ is closed, it returns $\true$ or $\false$.
\end{theorem*}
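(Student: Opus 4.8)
The plan is to prove three properties of $\Call{solve}{\phi}$ simultaneously, by induction on the recursion: termination; equivalence (everything manipulated stays equivalent to $\lnot\phi$ in the extended theory of trees, so the returned disjunction is equivalent to $\phi$); and the fact that every returned disjunct is fully simplified. The claim about closed formulae then follows from \cref{thm-solved-form}.

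First I would fix the recursion structure. $\Call{solve}{}$ normalizes $\lnot\phi$ (\cref{thm-normalize}), solves the top-level basic formula via $\Call{solveBasic}{}$ (\cref{thm-solve-basic}), and delegates to $\Call{solveNested}{}$, which conjoins $\alpha$ into each child (Rule~12), re-solves the children with $\Call{solveBasic}{}$, applies Rule~13, recurses, and passes the result to $\Call{solveFinal}{}$. The latter applies Rules~14 and 16 and then either performs an instantiation --- the genuinely new part, via $\Call{findInstantiation}{}$ and \cref{thm-instantiations} --- or finishes with Rule~15 ($\Call{removeUnreachableParts}{}$). For equivalence I would simply check that each step preserves equivalence: Rules~12--16 are sound by \cite{ddf08}, $\Call{solveBasic}{}$ by \cref{thm-solve-basic}, and the instantiation step by \cref{thm-instantiations}. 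Together with the fact that normalization preserves equivalence and that $\Call{solve}{}$ negates at the outset while the outermost connective of each returned formula is a negation, this shows the returned set, read disjunctively, is equivalent to $\phi$. The $\false$ return corresponds to an empty set of disjuncts (every branch unsatisfiable), and the $\true$ return to all disjuncts having the form $\lnot(\true)$.

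For the normal-form property I would verify the five conditions of \cref{def-solved} on each returned disjunct. Conditions (1)--(3) are established by $\Call{solveBasic}{}$ together with Rules~12--14 and 16 exactly as in \cite{ddf08}. Condition (4) is the crux of our extension: $\Call{solveFinal}{}$ only stops instantiating once $\Call{findInstantiation}{}$ returns $none$, which by \cref{def-instantiable} means no variable is instantiable; one then checks that re-running the earlier rules after an instantiation cannot reintroduce an instantiable variable --- this is where the precise ``not properly reachable from $v$'' clause in condition~1 of \cref{def-instantiable} is used. Condition (5) is ensured by Rule~15; its correctness in the finitely generated setting (that discarding unreachable $\fin$-constraints and equations preserves satisfiability, even when the sorts involved are small) requires a separate argument, namely \cref{lem-rule15-correct} in the appendix, and one also notes that $\Call{removeUnreachableParts}{}$ does not create new instantiable variables, so (4) survives.

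The main obstacle is termination. The rule-based control flow terminates for the same reasons as in \cite{ddf08}, but each instantiation in $\Call{solveFinal}{}$ replaces one formula by several new ones and re-enters $\Call{solveNested}{}$, so a priori there could be an infinite cascade of instantiations (think of instantiating $x$ along $x = \con{cons}(y,x)$ forever). Ruling this out is \cref{lem-termination} in the appendix: one exhibits a well-founded measure --- roughly the multiset of ``instantiation potentials'' of the free and newly bound variables, where the reachability side condition of \cref{def-instantiable} guarantees that the instantiated variable, and every variable introduced for its subterms, has strictly smaller potential --- that strictly decreases at each instantiation, while between instantiations the usual \cite{ddf08} measure strictly decreases under Rules~1--16. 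Combining both yields termination of the whole procedure, and the remaining two properties then transfer along the (now well-founded) recursion.
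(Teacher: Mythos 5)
Your proposal is correct and follows essentially the same route as the paper: delegate the unchanged rules to \cite{ddf08}, justify the instantiation step via \cref{thm-instantiations}, establish conditions (1)--(4) before Rule~15 and condition (5) via \cref{lem-rule15-correct}, and handle termination of the instantiation cascade with a well-founded depth-based measure as in \cref{lem-termination} (the paper uses a lexicographic tuple counting instantiable variables by maximum depth, which matches your multiset-of-potentials idea). The final assembly --- empty set yields $\false$, all-$\lnot(\true)$ yields $\true$, and the closed case via \cref{thm-solved-form} --- is likewise the paper's argument.
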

\begin{proof}[Proof of \cref{thm-solver-correct}]
The proof of this is quite involved and will take up the rest of this section.
The function \Call{solve}{$\phi$} first normalizes $\lnot \phi$ and then solves its basic formula.
If the latter contains a contradiction, $\phi$ is unsatisfiable.
Otherwise, the function \Call{solveNested}{} recursively solves $\lnot \phi$: it returns a set of solved normal formulae $\{ \psi_1,\dots,\psi_n \}$ such that $\lnot \phi$ is equivalent to $\bigwedge_{i=1}^n \psi_i$.
It works very similarly to the original algorithm in \cite[rules 12--16 in Section 4.6]{ddf08}.
(Note that Rule 11 is used only for bookkeeping in \cite{ddf08}, which is why it is not needed in our formulation of the algorithm.)
The only change is the instantiation step, highlighted in \cref{alg-solve-normal}.
The unchanged parts are proven correct in \cite[Property 4.6.3]{ddf08}.

The following lemmas prove the correctness of our change.
\Cref{lem-termination} establishes the termination of repeated instantiation steps.
The termination of the unchanged parts of the original algorithm is shown in \cite[Property 4.6.3]{ddf08}.
The fact that the instantiation step is correct was proven in \cref{thm-instantiations} already.
Next, \Cref{lem-conditions1-4} proves that the properties (1) to (4) of a solved formula (\cref{def-solved}) are satisfied when \Call{removeUnreachableParts}{} is called.
\Cref{lem-rule15-correct} proves that the return value of \Call{removeUnreachableParts}{} is correct.
By construction, it satisfies property (5) as well, thus it is solved.

Since the return value of \Call{solveNested}{} is a set of solved normal formulae $\{ \psi_1,\dots,\psi_n \}$ such that $\lnot \phi$ is equivalent to $\bigwedge_{i=1}^n \psi_i$, the original formula $\phi$ is equivalent to $\bigvee_{i=1}^n \lnot\psi_i$, a disjunction of fully simplified formulae.
In particular, if $n = 0$ then $\phi$ is always false.
Conversely, if each $\psi_i$ is $\lnot\true$ then $\phi$ is always true.
In all other cases, we remove subformulae that were duplicated by Rule 12 in \Call{solveNested}{}.
This last simplification step is not strictly necessary: even without it, the results would be fully simplified formulae.
Finally, we return the whole disjunction.
\end{proof}

In order to prove the termination of repeated instantiations, we need the following concept.

\begin{definition}[depth of a variable]
Let $\alpha$ be a solved basic formula.
The \emph{depth} of a variable $v$ in $\alpha$, denoted by $\depth_\alpha(v)$, is defined as follows.
If $v$ is properly reachable from itself or doesn't occur on the left-hand side of an equation in $\alpha$, its depth is 0.
Else if $v = \con f(\bar w)$ occurs in $\alpha$, its depth is $\depth_\alpha(v) := 1 + \max_i(\depth_\alpha(w_i))$.
Else if $v = w$ occurs in $\alpha$, its depth is $\depth_\alpha(v) := \depth_\alpha(w)$.
\end{definition}

Note that this is well-defined because of the ``cycle check" using reachability in the definition.

\begin{lemma}
\label{lem-termination}
There are only finitely many instantiations (calls to \Call{findInstantiation}{} that do not return ``$none$") happening in \cref{alg-solve-normal}.
Hence the algorithm terminates.
\end{lemma}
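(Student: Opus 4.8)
The plan is to reduce termination of the whole procedure to a bound on the number of successful instantiation steps, and then to obtain that bound from a well-founded measure on the normal formulae at which \Call{findInstantiation}{} is invoked.

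\textbf{Reduction.} The recursion tree produced by \Call{solve}{} is finitely branching: a successful call to \Call{findInstantiation}{} returns a finite set $I$ (there are only finitely many generators, and $s_{\fin}$, $s_{\infin}$ are finite by \cref{thm-fin-ind}), Rule~16 splits into finitely many recursive calls, and the loops in \Call{solveNested}{} range over finitely many conjuncts. Every transformation inherited from \cite{ddf08}---the numbered rules and the non-instantiation part of \Call{solveBasic}{}, together with its two new but trivially terminating rules---is already shown to terminate \cite[Property~4.6.3]{ddf08}, so only finitely many steps can lie between two consecutive instantiations along a branch. Hence, if along every branch of the recursion only finitely many instantiations occur, every branch is finite; since the tree is finitely branching, K\"onig's lemma then makes the whole tree finite, and in particular \Call{solve}{} terminates. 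It therefore suffices to bound the number of instantiation steps along an arbitrary branch, and these all happen at normal formulae of depth at most $2$ (Rule~16 having already been applied).

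\textbf{The measure.} To each such formula $\lnot(\exists\bar x\ldotp\alpha\land\bigwedge_i\lnot(\exists\bar y_i\ldotp\beta_i))$ I would associate a finite multiset $M$ of ordinals below $\omega\cdot 2$, with one entry per instantiable variable $u\in\bar v\bar x$. If $u$ is instantiable by condition~1 of \cref{def-instantiable}, with $u=\con{f}(\bar w)$ the witnessing equation in some $\beta_i^*$, its entry is $\omega+\max_i\depth_{\beta_i}(u)$, the maximum taken over the witnessing $\beta_i$; here $\depth_{\beta_i}(u)$ is a genuine natural number \emph{precisely because} condition~1 forbids $u$ from being properly reachable from itself in $\beta_i$, which is exactly the guard that makes the recursion defining the depth of a variable bottom out. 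If $u$ is instantiable by one of conditions~2--4, its entry is a natural number $\le D:=\max\{\,\depth(t):t\in s_{\fin}\cup s_{\infin},\ s\in S\,\}$ measuring the remaining depth of the concrete term the instantiation descends into; in particular every conditions-2--4 entry is below every conditions-1 entry. The intended invariant is that $M$, as computed here, strictly decreases at every instantiation and does not increase under any of the other transformations of \cref{alg-solve-normal}.

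\textbf{Why $M$ should decrease at an instantiation.} At a condition~1 instantiation of the first instantiable $u$: in the branches $u=\con{g}(\bar z)$ with $\con{g}$ different from the head $\con{f}$ of the witness, Rule~4 reduces $\beta_i$ to $\false$, the conjunct $\lnot(\exists\bar y_i\ldotp\beta_i)$ disappears, and $u$'s entry (and possibly more) is deleted; in the branch $\con{g}=\con{f}$, Rule~5 turns $u=\con{f}(\bar w)$ into variable equations and $u=\con{f}(\bar z)$ is absorbed into $\alpha$, so $u$ is no longer a condition~1 witness (consistently with \cref{thm-instantiations}), while a short analysis of \Call{solveBasic}{}'s reorientation shows that any constructor equation formerly carried by a $w_j$ re-emerges one level lower on a $z_j$, giving it depth strictly below $\depth_{\beta_i}(u)$, and any $z_j$ that instead becomes instantiable by conditions~2--4 receives an entry $<\omega<u$'s old entry. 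Either way every element introduced is strictly below the element removed, so $M$ drops in the Dershowitz--Manna ordering. At a conditions-2--4 instantiation, $u$ is pinned to a concrete term $t$; Rule~10 then replaces $\fin(u)$ by $\fin$-constraints on the fresh subterm variables of $t$, so $u$ loses its entry, and the only entries introduced belong to proper-subterm variables of $t$, which are strictly smaller; the auxiliary variables $u_s$ ($s\in S_{1I}$) satisfy $u_s=\con{g}_s(u_s)$, are properly reachable from themselves, and contribute nothing. Hence $M$ strictly decreases at every instantiation.

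\textbf{Main obstacle.} The real work---and the part I expect to be delicate---is to check that $M$ does \emph{not} increase under the transformations borrowed from \cite{ddf08}: \Call{solveBasic}{}, Rules~12--15, Rule~16, and \Call{removeUnreachableParts}{}. Rule~16 promotes an inner basic formula to the outer level and could conceivably surface an acyclic constructor equation on an outer variable of large depth, and \Call{solveBasic}{} (Rules~2, 3 and 5) reorients and merges equations, moving constructor terms between variables. I would establish an invariant to the effect that orientation and merging never transfer a constructor term onto a variable of greater depth, and that depth reduction introduces no witnessing equation deeper than one already present in the combined formula, so that the $\depth_{\beta_i}(\cdot)$ values---and hence $M$---never grow, and that the bounded cascades produced by conditions~2--4 never re-trigger condition~1 on a variable of fresh, larger depth. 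Granted this invariant, well-foundedness of the Dershowitz--Manna ordering yields the bound on instantiations along a branch, and with it, via the reduction above, the termination of the algorithm.
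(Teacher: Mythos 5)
Your measure is, up to packaging, the one the paper uses: the paper's proof assigns to each instantiable variable $v$ the quantity $\max_i \depth_{\beta_i}(v)$ and tracks the tuple $N(\phi)=(N_k(\phi),\dots,N_0(\phi))$, where $N_j$ counts the instantiable variables of maximum depth $j$, under the lexicographic order. That is interchangeable with your Dershowitz--Manna multiset of depths, and the paper relies on exactly the same observation you make, namely that the non-reachability guard in condition~1 of \cref{def-instantiable} is what makes $\depth$ well defined. Your reduction of termination of the whole algorithm to finiteness of instantiations per branch (finite branching, plus termination of the inherited rules via \cite[Property 4.6.3]{ddf08}) is also how the paper reads the ``hence'' in the lemma. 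So the approach is the right one and essentially the paper's.

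The problem is that your proof stops at exactly the point where the paper's proof does its work. The claim that the measure strictly decreases at an instantiation \emph{and survives the resimplification that immediately follows} is asserted (``I would establish an invariant\dots Granted this invariant\dots'') rather than proven, and that verification is the entire content of the paper's argument. Concretely, for a condition-1 instantiation of $u$ the paper traces the pipeline step by step: $\alpha\land u=\con g(\bar z)$ is already solved (using the variable ordering and condition (2) of \cref{def-basic} to rule out $u$ occurring on a left-hand side of $\alpha$), Rule~12 copies it into each $\beta_j$, and then in \Call{solveBasic}{} either Rule~4 kills $\beta_i$ outright or Rule~5 replaces $u=\con f(\bar w)$ by $\overline{z=w}$, after which only Rules~2 and~3 (and harmlessly 8 and 10) can fire; this yields $\depth_{\beta_i'}(z_k)=0$ or $\depth_{\beta_i'}(z_k)=\depth_{\beta_i}(w_k)<\depth_{\beta_i}(u)$, which is precisely the inequality your multiset argument needs. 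For conditions 2--4 the paper proves something stronger and simpler than your ``bounded cascade'' picture: after pinning $u$ to a concrete term, every fresh variable occurs on the left-hand side of an equation of the new $\alpha$, so (again via condition (2) of \cref{def-basic} and the removal of $\alpha$'s conjuncts when forming $\beta_i^*$) none of the fresh variables is instantiable at all, and the plain count of instantiable variables drops --- there is no descent into the term $t$, so no sub-measure on its depth is needed. Until you carry out this case analysis (or your proposed invariant about \Call{solveBasic}{}'s reorientation and Rule~16), the proof does not go through.
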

\begin{proof}
For a given normal formula $\phi \equiv \lnot(\exists \bar x \ldotp \alpha \land \bigwedge_i \lnot(\exists \bar y_i \ldotp \beta_i))$ of depth 2 with free variables $\bar v$, let $X$ be the set of instantiable variables and define
\[ N_j(\phi) = |\{ v \in X \mid \max_i(\depth_{\beta_i}(v)) = j \}|, \]
in other words, the number of instantiable variables with maximum depth $j$.
Let $k$ be the maximum integer such that $N_k(\phi) > 0$ and define
\[ N(\phi) = (N_k(\phi),\dots,N_1(\phi),N_0(\phi)). \]

By the definition of \Call{findInstantiation}{}, the Instantiation Rule is only applied if $N(\phi) \neq (0,\dots,0)$.
We claim that the value of $N(\phi)$ decreases with respect to lexicographical order in each recursive call of \Call{solveNormalized}{} after every application of the Instantiation Rule.
Note that it was proved in \cite[Property 4.6.3]{ddf08} that when \Call{findInstantiation}{} is called, the normal formula $\phi$ satisfies conditions (1) to (3) of \cref{def-solved}.

Suppose the variable $u$ returned by \Call{findInstantiation}{} was selected because there is an equation $u = \con f(\bar w)$ in $\beta_i^*$ where $u$ is not properly reachable from $u$.
Then $u$ does not occur on a LHS in $\alpha$ because of the variable ordering: If $u = v$ occurred in $\alpha$, it would also occur in $\beta_i$ by condition (2) of solved basic formulae (\cref{def-basic}) and $\beta_i$ would not be solved, violating condition (1).
After instantiating $u$, i.e. adding the equation $u = \con g(\bar z)$, the resulting basic formula $\alpha \land u = \con g(\bar z)$ is therefore solved, so \Call{solveBasic}{} does not change it at all.
Next, Rule 12 copies $\alpha$ into each $\beta_j$.
What can happen in \cref{alg-solve-basic} now?
If $\beta_i$ contains $u = \con f(\bar w)$, this leads to the situation $u = \con g(\bar z) \land u = \con f(\bar w)$.
If $\con f \not\equiv \con g$, this is a conflict and $\beta_i$ is removed from $\phi$.
Otherwise, that part of $\beta_i$ is replaced with $\overline{z = w}$.
Given that $\beta_i$ was a solved basic formula before adding $\overline{z = w}$, the only applicable rule in $\beta_i$ is Rule 2, switching the ordering of $z_k = w_k$ to $w_k = z_k$ if $w_k > z_k$.
If there is another equation $w_k = t$, Rule 3 will change it to $w_k = z_k$ and $z_k = t$.
Afterward, no more rules are applicable, and the resulting formula is solved.
(In fact, Rule 8 or 10 could also be applied but this is irrelevant for the depths.)
Denote the formulae resulting from the original $\phi$ and $\beta_i$ by $\phi'$ and $\beta_i'$, respectively.
By the above discussion, we have $\depth_{\beta_i'}(z_k) = 0$ or $\depth_{\beta_i'}(z_k) = \depth_{\beta_i}(w_k) < \depth_{\beta_i}(u)$.
In either case, we have $\depth_{\beta_i'}(z_k) \leq d - 1$ where $d = \max_i(\depth_{\beta_i}(u))$.
In other words, the depths of the newly introduced variables are smaller than the maximal depth of $u$.
Thus $N_j(\phi) = N_j(\phi')$ for $j > d$ and $N_d(\phi') < N_d(\phi)$ since $u$ is no longer instantiable.
Therefore $N(\phi') < N(\phi)$, as desired.

Next, suppose the variable $u$ returned by \Call{findInstantiation}{} was selected because $u$ occurs in $\beta_i^*$ and the sort $s$ of $u$ has only finitely many trees.
Then by the same arguments as before, new variables $\bar z$ are introduced in $\phi$ after adding $\exists \bar z \ldotp \gamma$ to $\alpha$.
However, since $\gamma$ describes a single value for $u$, every variable among $\bar z,u$ occurs on the left-hand side of an equation in the new $\alpha$.
Hence $u$ is no longer instantiable and none of the newly introduced variables $\bar z$ are.
Hence the number of instantiable variables decreases and thus $N(\phi') < N(\phi)$ for the new formula $\phi'$ as desired.

Next, suppose the variable $u$ returned by \Call{findInstantiation}{} was selected because $\fin(u)$ occurs in $\alpha$, $u$ occurs in $\beta_i^*$ and $s \in S_{FF}$.
Then the same argument as in the previous case can be applied.

Next, suppose the variable $u$ returned by \Call{findInstantiation}{} was selected because $s \in S_{FI}$ and there is a $\beta_j^*$ consisting only of $\fin()$-constraints, including $\fin(u)$.
After an instantiation of the form $\fin(u)$ and subsequent simplification, $\fin(u)$ will be removed from each $\beta_j^*$ since $\alpha$ is contained in $\beta_j$ by condition (2) of \cref{def-solved}, so $u$ is no longer instantiable.
Hence the number of instantiable variables, and thus $N(\phi)$, decreases.
After an instantiation of the form $\exists \bar z \ldotp \gamma$ describing an infinite value for $u$, the variable $u$ is also not instantiable anymore.
Since all the additional variables $\bar z$ occur on the left-hand side of an equation in $\exists \bar z \ldotp \gamma$, they are not instantiable either.
Hence the number of instantiable variables, and thus $N(\phi)$, decreases.

Altogether, $N(\phi)$ decreases after each instantiation step.
Therefore, only finitely many instantiations can happen.
\end{proof}

\begin{lemma}
\label{lem-conditions1-4}
In \cref{alg-solve-normal}, when \Call{removeUnreachableParts}{} is called, $\phi$ satisfies conditions (1) to (4) of a solved formula from \cref{def-solved}.
\end{lemma}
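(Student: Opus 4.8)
The plan is to peel off condition (4) of \cref{def-solved} immediately, and then to track conditions (1)--(3) through the mutual recursion of \Call{solve}{}, \Call{solveNested}{} and \Call{solveFinal}{} in \cref{alg-solve-normal}.

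Condition (4) I would dispose of first, since it follows straight from the guard: \Call{removeUnreachableParts}{} is reached exactly when $\Call{findInstantiation}{\bar v, \phi}$ returns $none$, and by \cref{thm-instantiations} this happens precisely when $\phi$ has no instantiable variable.

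For (1)--(3) the idea is that they already hold at the moment \Call{findInstantiation}{} is called, and that \Call{findInstantiation}{} leaves $\phi$ unchanged, so they still hold when \Call{removeUnreachableParts}{} runs. First I would note that \Call{findInstantiation}{} is reached in \Call{solveFinal}{} only after the depth-reduction step (Rule 16) has not fired, so $\phi$ has depth at most $2$ and (1)--(3) are even meaningful. Along any path to that point, $\phi$ was last produced either by \Call{solveNested}{}, which applies Rule 12 (copying every conjunct of $\alpha$ into each $\beta_i$, so (2) holds) followed by \Call{solveBasic}{} on each $\beta_i$ under the ``deeper-bound variables are larger'' ordering (so each $\beta_i$ --- and $\alpha$ itself, already solved by an earlier \Call{solveBasic}{} call --- is a solved basic formula, giving (1)), with the Rule 14 test at the head of \Call{solveFinal}{} then discarding any $\psi_i$ whose $\beta_i^*$ is empty (giving (3)); or by an instantiation step, which conjoins a basic formula from $I$ onto $\alpha$ and then re-enters \Call{solveNested}{}, re-establishing (1)--(3) exactly the same way; or by a Rule 16 depth-reduction step, one of the rules inherited from \cite{ddf08}. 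That the inherited rules (12--16) maintain (1)--(3) is \cite[Property 4.6.3]{ddf08}, which I would simply cite; the only thing that genuinely needs a hand-check is the interaction with the new instantiation step, and there the essential observation is that after an instantiation the algorithm always passes back through \Call{solveNested}{}, so the inherited correctness applies afresh.

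The main obstacle is therefore to make precise that the instantiation step does not break (1)--(3) permanently and that control really does arrive at \Call{removeUnreachableParts}{}. The first follows from the observation above combined with \cite[Property 4.6.3]{ddf08}; the second is \cref{lem-termination}, which guarantees that only finitely many instantiations occur, so that the instantiation loop does terminate and the $none$-branch is eventually taken. I would also explicitly flag one subtlety: a single instantiation, followed by re-solving, may turn some previously non-instantiable variable into an instantiable one, so (4) need not be an invariant of one step --- but then another call to \Call{findInstantiation}{} fires, and \Call{removeUnreachableParts}{} is entered only once \Call{findInstantiation}{} returns $none$. At that instant (1)--(3) have just been re-established by the last \Call{solveNested}{} pass and (4) holds by the guard, which is exactly what the lemma asserts.
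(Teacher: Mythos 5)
Your proposal is correct and follows essentially the same route as the paper's (much terser) proof: conditions (1)--(3) are inherited from \cite[Property~4.6.3]{ddf08} because the algorithm is unchanged up to the instantiation step (and each instantiation re-enters \textsc{solveNested}, re-establishing them), while condition (4) holds at the call site because \textsc{findInstantiation} returned $none$. Your extra care about the control flow, the re-establishment of (1)--(3) after each instantiation, and the appeal to \cref{lem-termination} for reaching the $none$-branch only makes explicit what the paper leaves implicit.
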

\begin{proof}
From the proof of correctness of the unmodified algorithm \cite[Property 4.6.3]{ddf08}, which works the same until the instantiation step, it follows that up until that point, $\phi$ satisfies conditions (1) to (3).
As soon as \Call{removeUnreachableParts}{} is called, (4) is satisfied because otherwise \Call{findInstantiation}{} would find a variable violating (4).
\end{proof}

\begin{lemma}
\label{lem-rule15-correct}
The function \Call{removeUnreachableParts}{} from \cref{alg-remove-unreachable} (Rule 15 in \cite[Section~4.6]{ddf08}) is still correct in the context of the extended algorithm.
\end{lemma}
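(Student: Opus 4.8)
The plan is to go back through Djelloul, Dao and Frühwirth's correctness proof of Rule~15 (\cite[Property~4.6.3]{ddf08}) and check that every step survives the weakening of their hypotheses: we only assume every sort has at least two generators, so a sort need no longer have infinitely many finite trees and infinitely many infinite trees. Write the input as $\phi_{\mathrm{in}} \equiv \lnot(\exists \bar x \ldotp \alpha \land \bigwedge_i \lnot(\exists \bar y_i \ldotp \beta_i))$ and the output as $\phi_{\mathrm{out}} \equiv \lnot(\exists \bar x' \ldotp \alpha' \land \bigwedge_{i \in K} \lnot(\exists \bar y_i' \ldotp \beta_i'))$, with all auxiliary objects named as in \cref{alg-remove-unreachable}. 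By \cref{lem-conditions1-4}, at the moment \Call{removeUnreachableParts}{} is invoked, $\phi_{\mathrm{in}}$ satisfies conditions (1)--(4) of \cref{def-solved}, and it is condition~(4) --- absence of instantiable variables --- that will rescue the places where \cite{ddf08} relies on the richer signature. It suffices to prove the equivalence of the negations, $A \equiv \exists \bar x \ldotp \alpha \land \bigwedge_i \lnot(\exists \bar y_i \ldotp \beta_i)$ and $B \equiv \exists \bar x' \ldotp \alpha' \land \bigwedge_{i \in K} \lnot(\exists \bar y_i' \ldotp \beta_i')$.

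First I would dispatch the ingredients that are insensitive to how many generators a sort has. Removing the unreachable equations $\alpha'''$ from $\exists \bar x \ldotp \alpha$, and the unreachable equations inside each $\beta_i$, uses only the Unique Solution Axiom, which holds unchanged; the attendant re-scoping of the left-hand-side variables $\bar x'''$ into the $\beta_i$-subformulae is then purely syntactic and verbatim from \cite{ddf08}. Removing the unreachable $\fin$-constraints $\alpha''$ needs only that the sort of each variable carrying such a constraint contains a finite tree, which holds by property~(2) of a solved basic formula (\cref{def-basic}): by \cref{thm-solve-basic} this remains valid here because the new rules $(*)$ in \cref{alg-solve-basic} delete $\fin(u)$ whenever $u$ has a sort in $S_{0F}$. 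With these observations, the implication $A \Rightarrow B$ goes through as in \cite{ddf08}: starting from an $A$-witness, keep the values of $\bar x'$, and wherever a $\beta_i'$ with $i \in K$ must be reconstituted to a $\beta_i$, re-fill its unreachable equational conjuncts via the Unique Solution Axiom and its unreachable $\fin$-constraints with a finite tree.

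The substantive new work is the converse, $B \Rightarrow A$, and within it the justification for discarding the conjuncts $\lnot(\exists \bar y_i \ldotp \beta_i)$ with $i \notin K$: given a $B$-witness one must choose the purely-unreachable variables $\bar x''$, and then $\bar x'''$, so that $\alpha$ still holds while every such $\beta_i$ becomes unsatisfiable for all $\bar y_i$. An occurrence of a variable $x \in \bar x''$ in a reachable conjunct of some $\beta_i'$ pins $x$, once the fixed values of the free variables $\bar v$ and of $\bar x'$ are in place, either to one of finitely many specific trees or to satisfying one of finitely many recursive equations $x = \con f(\ldots x \ldots)$. Condition~(4), read against the clauses of \cref{def-instantiable}, now guarantees that the sort $s$ of $x$ has room to dodge all of these: if $s$ has infinitely many generators then, by the well-foundedness assumption on the signature, $s$ has infinitely many finite trees, only finitely many of which are barred by the collected equational constraints; otherwise $s \notin S_{FF} \cap S_{FI}$ by clause~2 (and, if a $\fin$-constraint on $x$ is involved, $s$ is moreover outside $S_{FF}$ or outside $S_{FI}$ in the relevant way by clauses~1, 3 and~4), so $s$ has infinitely many trees of the required finiteness status and \cref{lem-contradict-recursive-constraints} produces a value $x^*$ avoiding the finitely many forbidden trees and all the forbidden recursive equations. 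Iterating over the members of $\bar x''$, updating the forbidden data at each step, yields a valuation of $\bar x''$ that satisfies $\alpha''$ and falsifies every $\beta_j$ with $j \notin K$; extending it by the Unique Solution Axiom to $\bar x'''$ completes an $A$-witness. The case split on the shape of the offending constraint is exactly the one in the proof of \cref{thm-solved-form}.

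The main obstacle I anticipate is precisely this last step: classifying each occurrence of an unreachable variable across the finitely many subformulae $\beta_j$ as a forbidden value or a forbidden recursive equation, and matching each case to the right clause of \cref{def-instantiable} so that condition~(4) can be invoked to guarantee the sort is large enough for \cref{lem-contradict-recursive-constraints}. A lesser nuisance is the bookkeeping around re-scoping $\bar x'''$ into the $\beta_i$ and checking that the index set $K$ retains exactly the conjuncts that genuinely constrain the output, but that is routine and unchanged from \cite{ddf08}.
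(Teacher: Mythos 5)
Your proposal follows essentially the same route as the paper's proof: peel off the unreachable equations via the Unique Solution Axiom, the unreachable $\fin$-constraints via condition (2) of solved basic formulae, and then justify discarding the conjuncts with $i \notin K$ by a case analysis on the constraints mentioning $\bar x''$, using non-instantiability (condition (4)) to conclude each relevant sort has infinitely many trees of the required finiteness status and invoking \cref{lem-contradict-recursive-constraints}, exactly as the paper does. One minor slip: infinitely many generators do not by themselves yield infinitely many finite trees (all but finitely many generators could lie in $F_s^{\infin}$, so $s$ could still be in $S_{FF}$), but in that situation clause 3 of \cref{def-instantiable} already forces $v$ to be absent from every $\beta_i^*$, so your clause-matching strategy still covers the case and the argument goes through.
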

\begin{proof}
As the previous lemma states, at the point where Rule 15 is applied, $\phi \equiv \lnot(\exists \bar x \ldotp \alpha \land \bigwedge_{i=1}^n \lnot(\exists \bar y_i \ldotp \beta_i))$ satisfies conditions (1) to (4) of \cref{def-solved}.
Repeating what was stated at the beginning of this section, when talking about reachability in a formula $\exists \bar x. \alpha$ where $\alpha$ is a basic formula, we mean reachability in $\alpha$ from the free variables of the whole formula.
As in the algorithm pseudocode, let
\begin{itemize}[noitemsep, nolistsep]
\item $\bar x'$ be the reachable variables of $\exists \bar x \ldotp \alpha$,
\item $\bar x'''$ the unreachable variables from $\bar x$ that occur on the LHS of an equation in $\alpha$,
\item $\bar x''$ the variables from $\bar x$ that are not in $\bar x'\bar x'''$,
\item $\alpha'$ be the reachable conjuncts of $\exists \bar x \ldotp \alpha$,
\item $\alpha''$ the unreachable $\fin()$-subformulae of $\exists \bar x \ldotp \alpha$,
\item $\alpha'''$ the unreachable equations of $\exists \bar x \ldotp \alpha$,
\item $\beta^*_i$ the result of removing $\alpha''$ from $\beta_i$,
\item $\bar y'_i$ the reachable variables among $\bar x'''\bar y_i$ in  $\exists \bar x'''\bar y_i \ldotp\beta^*_i$,
\item $\beta'_i$ the reachable conjuncts in $\exists \bar x'''\bar y_i \ldotp \beta^*_i$
\item $K \subseteq \{1,\dots,n\}$ the set of indices $i$ such that $i \in K$ if and only if no variable of $\bar x''$ occurs in $\beta'_i$.
\end{itemize}
Then the claim is that $\lnot(\exists \bar x \ldotp \alpha \land \bigwedge_{i=1}^n \lnot(\exists \bar y_i \ldotp \beta_i))$ is equivalent to $\lnot(\exists \bar x' \ldotp \alpha' \land \bigwedge_{i\in K} \lnot(\exists \bar y'_i \ldotp \beta_i'))$.

First note that $\lnot(\exists \bar x \ldotp \alpha \land \bigwedge_{i=1}^n \lnot(\exists \bar y_i \ldotp \beta_i))$ is equivalent to
\[ \lnot(\exists \bar x' \ldotp \alpha' \land (\exists \bar x'' \ldotp \alpha'' \land (\exists \bar x''' \ldotp \alpha''' \land \bigwedge_{i=1}^n \lnot(\exists \bar y_i \ldotp \beta_i)))) \]
because the variables $\bar x''$ can only occur in $\alpha''$ and the variables $\bar x'''$ can only occur in $\alpha'''$.
By the Unique Solution Axiom and since $\alpha'''$ is a solved formula, we have $\exists! \bar x''' \ldotp \alpha'''$ in the extended theory of trees.
According to \cite[Property 3.1.11]{ddf08}, the previous formula is equivalent to
\[ \lnot(\exists \bar x' \ldotp \alpha' \land (\exists \bar x'' \ldotp \alpha'' \land \bigwedge_{i=1}^n \lnot(\exists \bar x''' \ldotp \alpha''' \land \exists \bar y_i \ldotp \beta_i))). \]
By our variable convention, no variable names conflict, so the innermost existential can be pulled outside:
\[ \lnot(\exists \bar x' \ldotp \alpha' \land (\exists \bar x'' \ldotp \alpha'' \land \bigwedge_{i=1}^n \lnot(\exists \bar x'''\bar y_i \ldotp \alpha''' \land \beta_i))). \]
By condition (2) of \cref{def-solved}, the equations of $\alpha$ are included in each $\beta_i$.
In particular, $\alpha'''$ is part of each $\beta_i$, which simplifies the formula to
\[ \lnot(\exists \bar x' \ldotp \alpha' \land (\exists \bar x'' \ldotp \alpha'' \land \bigwedge_{i=1}^n \lnot(\exists \bar x'''\bar y_i \ldotp \beta_i))). \]
Note that $\beta^*_i \land \alpha'' \leftrightarrow \beta_i \land \alpha''$ by definition, so we can propagate $\alpha''$ into the innermost existential formulae: $\alpha'' \land \bigwedge_{i=1}^n \lnot(\exists \bar x''' \bar y_i \ldotp \alpha'' \land \beta_i) \leftrightarrow \alpha'' \land \bigwedge_{i=1}^n \lnot(\exists \bar x''' \bar y_i \ldotp \alpha'' \land \beta^*_i)$; and back out, yielding:
\[ \lnot(\exists \bar x' \ldotp \alpha' \land (\exists \bar x'' \ldotp \alpha'' \land \bigwedge_{i=1}^n \lnot(\exists \bar x'''\bar y_i \ldotp \beta^*_i))). \]
Since unreachable parts of a solved basic formula can be removed by the following \cref{lem-remove-unreachable}, this is equivalent to
\[ \lnot(\exists \bar x' \ldotp \alpha' \land (\exists \bar x'' \ldotp \alpha'' \land \bigwedge_{i=1}^n \lnot(\exists \bar y'_i \ldotp \beta'_i))). \]
Since a variable from $\bar x''$ can only occur in $\beta'_i$ if $i \notin K$, this is equivalent to
\[ \lnot \left(\exists \bar x' \ldotp \alpha' \land \left( \bigwedge_{i \in K} \lnot(\exists \bar y'_i \ldotp \beta'_i) \right) \land \left( \exists \bar x'' \ldotp \alpha'' \land \bigwedge_{i \notin K} \lnot(\exists \bar y'_i \ldotp \beta'_i) \right) \right). \]

To complete the proof, we show that the last conjunct $\exists \bar x'' \ldotp \alpha'' \land \bigwedge_{i \notin K} \lnot(\exists \bar y'_i \ldotp \beta'_i)$ is always true.
For this, it suffices to find valuations for $\bar x''$ satisfying $\alpha''$ but none of $\exists \bar y'_i \ldotp \beta'_i$ for $i \notin K$.
Note that since a variable from $\bar x''$ occurs in $\beta'_i$ for all $i \notin K$, each such $\beta'_i$ contains a conjunct of one of the following forms:
\begin{itemize}[noitemsep, nolistsep]
\item $\fin(v)$ for $v \in \bar x''$ and by the construction of $\beta^*_i$, $\fin(v)$ does not occur in $\alpha''$,
\item $v = \con f(\bar w)$ for $v \in \bar x''$,
\item $v = w$ where $v \in \bar x''$ and $v > w$, implying $w \notin \bar y'_i$,
\item $u = t$ where $v \in \bar x''$ occurs in $t$. Since it has to be reachable, that means that $\beta'_i$ contains the conjunction $\bigwedge_{j=1}^k w_j = t_j$ with $t_j$ containing $w_{j+1}$, $w_{k+1} \equiv v$, and $w_1 \notin \bar y'_i$.
Since the case $w_1 \in \bar x''$ was already handled in a previous case, we can assume without loss of generality that $w_1$ is a free variable.
\end{itemize}

The goal now is to find a valuation of $\bar x''$ that satisfies $\alpha''$ but that makes each of the above cases false, thus making $\lnot\exists y'_i \ldotp \beta'_i$ true.
Fix a valuation for the free variables of the formula.
Let $v: s$ be a variable from $\bar x''$.
\begin{itemize}[noitemsep, nolistsep]
\item If $\fin(v)$ occurs in $\alpha''$ then no $\beta'_i$ can contain $v = \con f(\bar w)$ because $v$ is not instantiable and thus $v$ would have to be properly reachable from itself, contradicting finiteness.
If $s$ only contains finitely many finite trees then $v$ occurs in no $\beta'_i$ because $v$ is not instantiable.
Then $v$ can be given any finite value to make $\alpha''$ true.
Otherwise, $v$ occurs only in equations of the form $u = t$ (reachable from some free variable $w_1$ as seen above) or $v = w$ in the $\beta_i'$.
In the former case, to make the the equation false, we pick a value for $v$ that is different from the one that is determined by the fixed value of $w_1$.
In the latter case, we pick a value $v$ different from the value of $w$.
Since $s$ contains infinitely many finite trees, it is possible to pick one as the value for $v$ that contradicts all those finitely many equations.
\item If $\fin(v)$ does not occur in $\alpha''$ and there is a $\beta'_j$ containing only $\fin()$-constraints, among them $\fin(v)$, then since $v$ is not instantiable, we have $s \notin S_{FI}$.
Thus there are infinitely many infinite trees of sort $s$.
Since there are only finitely many equations of the form $v = w$, or $u = t$ with $t$ containing $v$ (reachable from some free variable as above), or $v = t$ with $v$ properly reachable from itself in the $\beta'_i$, it is possible to find a value for $v$ that contradicts all of them by \cref{lem-contradict-recursive-constraints}.
\item If $\fin(v)$ does not occur in $\alpha''$ and there is no $\beta'_j$ containing only $\fin()$-constraints, among them $\fin(v)$, then there are two cases.
If $s \in S_{FF} \cap S_{FI}$ then since $v$ is not instantiable, no $\beta'_i$ contains $v$, and there are no constraints to contradict, or $\alpha$ contains an equation $v = t$, in which case each $\beta'_i$ also contains the same equation.
Hence all the $\beta'_i$ can be contradicted by picking a value different from $t$ for $v$.
Otherwise ($s \notin S_{FF} \cap S_{FI}$), there are infinitely many possible valuations for $v$ while there are only finitely many constraints of the form $u = t$ with $t$ containing $v$ (reachable from some free variable as above), or $v = w$, or $v = \con f(\bar w)$ with $v$ properly reachable from itself.
Again, \cref{lem-contradict-recursive-constraints} shows that it is possible to find a value for $v$ that contradicts all of these constraints.
\end{itemize}
We have shown above that by picking valuations for the variables from $\bar x''$ as described above, each $\beta'_i$ containing an equation is contradicted by the above valuation.
If a $\beta'_i$ contains only $\fin()$-constraints then at least one of those $\fin(v)$ is contradicted as described above.
This means that the above valuations for $v$ make all the $\beta'_i$ false, while satisfying $\alpha''$, independently of the values of the free variables.

This means that the formula
\[ \exists \bar x'' \ldotp \alpha'' \land \bigwedge_{i \notin K} \lnot(\exists \bar y'_i \ldotp \beta'_i) \]
is valid in the extended theory of trees.
\end{proof}

The above proof made use of the following lemma.

\begin{lemma}
\label{lem-remove-unreachable}
Let $\bar x$ be a vector of variables and $\alpha$ a solved basic formula.
Let $\bar x'$ be reachable variables in $\exists \bar x \ldotp \alpha$ and $\alpha'$ be the conjunction of equations and $\fin()$-formulae that are reachable in $\exists \bar x \ldotp \alpha$.
Then in the theory of trees, $\exists \bar x \ldotp \alpha$ is equivalent to $\exists \bar x' \ldotp \alpha'$.
\end{lemma}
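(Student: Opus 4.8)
The plan is to partition both the quantifier block $\bar x$ and the basic formula $\alpha$ into a reachable part and an unreachable part, pull the reachable part out of the quantifier, and prove that the existential closure of the unreachable part is \emph{valid} (not merely satisfiable).

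Let $\bar x'' := \bar x \setminus \bar x'$ be the quantified variables \emph{not} reachable from the free variables, and write $\alpha = \alpha' \land \alpha''$, where $\alpha'$ collects the reachable conjuncts (matching the notation of the statement) and $\alpha''$ the unreachable ones. The first step is to check that $\alpha'$ mentions no variable of $\bar x''$: a reachable conjunct is either $v = t$ with $v$ reachable or $\fin(v)$ with $v$ reachable; in the first case any variable occurring in $t$ is reachable from $v$, hence — since reachability is transitive and $v$ is reachable from some free variable — reachable, so it cannot lie in $\bar x''$; in the second case $v$ itself is reachable. Thus $\bar x''$ occurs only within $\alpha''$, and therefore
\[ \exists \bar x \ldotp \alpha \;\equiv\; \exists \bar x'\,\exists \bar x'' \ldotp (\alpha' \land \alpha'') \;\equiv\; \exists \bar x' \ldotp \bigl(\alpha' \land \exists \bar x'' \ldotp \alpha''\bigr), \]
so it suffices to prove that $\exists \bar x'' \ldotp \alpha''$ is valid in the theory of trees.

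For that, note that by the same reachability classification every left-hand-side variable of an equation of $\alpha''$ and every subject of a $\fin$-conjunct of $\alpha''$ lies in $\bar x''$; hence the free variables of $\exists \bar x'' \ldotp \alpha''$ (the free variables of the original formula, together with any variables of $\bar x'$ appearing on right-hand sides in $\alpha''$) occur in $\alpha''$ only inside right-hand sides of equations. Fix an arbitrary valuation of these free variables; I extend it to $\bar x''$ as follows. First, assign to every $\fin$-subject $w$ of $\alpha''$ an arbitrary finite tree of its sort; this is possible by condition (2) of a solved basic formula (\cref{def-basic}), and it does not clash with the equations because $\alpha$ being solved forces $w$ not to be properly reachable from itself (Rule 9 of the basic-formula solver, \cref{alg-solve-basic}, would otherwise have returned $\false$). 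Second, eliminate the equations $u_i = q$ of $\alpha''$ whose right-hand side is a variable by substituting $q$ for $u_i$ everywhere; this terminates because solvedness imposes $u_i > q$, so each substitution strictly decreases along the variable ordering. Finally, the remaining equations all have a non-variable right-hand side built from parameters (free variables and already-fixed $\fin$-subjects) and the surviving left-hand-side variables, so the Unique Solution Axiom yields values for the latter. Together with the already-chosen values this produces a valuation satisfying $\alpha''$; hence $\exists \bar x'' \ldotp \alpha''$ is valid, and by the reduction above $\exists \bar x \ldotp \alpha \equiv \exists \bar x' \ldotp \alpha'$.

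The main obstacle I anticipate is the bookkeeping about which variables may appear where in a solved basic formula: precisely, that $\alpha'$ really decouples from $\bar x''$, and that inside $\alpha''$ the equation system can always be solved for $\bar x''$ no matter how the free variables are valued. The two delicate sub-points are the termination of the substitution step that removes variable-valued equations and the check that the $\fin$-constraints of $\alpha''$ never conflict with its equations; everything else is routine first-order manipulation together with invocations of the Unique Solution Axiom.
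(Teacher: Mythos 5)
Your proof is correct and follows essentially the same route as the paper's: split the quantified variables and conjuncts into reachable and unreachable parts, pull the reachable part out, and show the existential closure of the unreachable part is valid by assigning finite trees to the $\fin$-subjects (using condition (2) of \cref{def-basic}) and invoking the Unique Solution Axiom for the equations. The only cosmetic difference is that the paper makes a three-way split (unreachable $\fin$-formulae versus unreachable equations, with the corresponding split of the quantifier block) and discharges the two pieces separately, whereas you handle them together and spell out the substitution step that reduces variable-valued equations to the non-variable form required by the Unique Solution Axiom — a detail the paper glosses over.
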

\begin{proof}
Let $\bar x''$ be the unreachable variables in $\exists \bar x \ldotp \alpha$ that do not occur on the LHS of an equation of $\alpha$ and $\bar x'''$ be the unreachable variables which do.
Similarly, let $\alpha''$ be the conjunction of unreachable $\fin()$-formulae and $\alpha'''$ be the conjunction of unreachable equations in $\exists \bar x \ldotp \alpha$.
By the definition reachability, $\bar x''$ and $\bar x'''$ do not occur in $\alpha'$.
Hence $\exists \bar x \ldotp \alpha$ is equivalent to
\[ \exists \bar x'. \alpha' \land (\exists \bar x'' \ldotp \alpha'' \land (\exists \bar x''' \ldotp \alpha''')). \]
By the Unique Solution Axiom, $\exists! \bar x''' \ldotp \alpha'''$ holds for any valuation of the free variables in the extended theory of trees.
Hence the formula simplifies to
\[ \exists \bar x' \ldotp \alpha' \land (\exists \bar x'' \ldotp \alpha''). \]
Since $\alpha''$ contains only $\fin()$-formulae and since by condition (2) of \cref{def-basic}, they are all satisfiable, $\exists \bar x'' \ldotp \alpha''$ is true in the theory of trees as well.
Hence the original formula is equivalent to $\exists \bar x' \ldotp \alpha'$, as desired.
\end{proof}

\begin{table}
\centering
\begin{tabular}{r | r | r | r | r}
&\multicolumn{4}{c}{selector semantics}\\
Time to solve &\multicolumn{2}{c|}{standard} &\multicolumn{2}{c}{default values} \\
\hline
\hline
$<$ 1 ms &534 &13.35\% &197 &4.93\% \\
\hline
$<$ 10 ms &2241 &56.04\% &1415 &35.38\% \\
\hline
$<$ 100 ms &3247 &81.20\% &3224 &80.62\% \\
\hline
$<$ 1 s &3659 &91.50\% &3779 &94.50\% \\
\hline
$<$ 10 s &3816 &95.42\% &3929 &98.25\% \\
\hline
timed out ($>$ 10 s) &183 &4.58\% &70 &1.75\% \\
\hline
total & 3999 &100\% & 3999 &100\%
\end{tabular}
\caption{Results of the SMT-LIB QF\_DT benchmark suite: the number of benchmarks solved in the specified time limit (wall-clock time).
The measurements were made on a notebook computer with an Intel® Core™ i5-8250U CPU and 16 GB of RAM.}
\label{fig-benchmark-results}
\end{table}

\end{document}